\definecolor{mygreen}{RGB}{143,192,165}
\definecolor{myorange}{RGB}{255,176,32}
\newenvironment{customthm}[1]
{\innercustomthm}
{\endinnercustomthm}
\renewcommand{\epsilon}{\varepsilon}
\renewcommand{\phi}{\varphi}
\newcommand{\C}{\mathbb{C}}
\newcommand{\R}{\mathbb{R}}
\newcommand{\M}{\mathcal{M}}
\DeclareMathOperator{\Tr}{Tr}
\DeclareMathOperator{\id}{id}
\newcommand{\E}{\mathbb{E}}
\newcommand{\ketbra}[2]{|#1\rangle\langle#2|} 
\newcommand{\MP}{\mathrm{MP}}
\newcommand{\NC}{\operatorname{NC}}
\def\moverlay{\mathpalette\mov@rlay}
\def\mov@rlay#1#2{\leavevmode\vtop{
   \baselineskip\z@skip \lineskiplimit-\maxdimen
   \ialign{\hfil$\m@th#1##$\hfil\cr#2\crcr}}}
\newcommand{\charfusion}[3][\mathord]{
    #1{\ifx#1\mathop\vphantom{#2}\fi
        \mathpalette\mov@rlay{#2\cr#3}
      }
    \ifx#1\mathop\expandafter\displaylimits\fi}
\newcommand{\bigSsqcup}{\charfusion[\mathop]{\bigsqcup}{\innerS}}
\newcommand{\innerS}{\mathchoice
  {\mathsf S}
  {\scriptstyle \mathsf S}
  {\scriptscriptstyle \mathsf S}
  {\scriptscriptstyle \mathsf S}
}
\newcommand{\bigPsqcup}{\charfusion[\mathop]{\bigsqcup}{\innerP}}
\newcommand{\innerP}{\mathchoice
  {\mathsf P}
  {\scriptstyle \mathsf P}
  {\scriptscriptstyle \mathsf P}
  {\scriptscriptstyle \mathsf P}
}
\newtheorem{theorem}{Theorem}[section]
\newtheorem{definition}[theorem]{Definition}
\newtheorem{proposition}[theorem]{Proposition}
\newtheorem{corollary}[theorem]{Corollary}
\newtheorem{lemma}[theorem]{Lemma}
\newtheorem{remark}[theorem]{Remark}
\newtheorem{example}[theorem]{Example}
\DeclareMathAlphabet\mathbfcal{OMS}{cmsy}{b}{n}
\DeclareMathOperator{\rank}{rank}
\let\OLDthebibliography\thebibliography
\renewcommand\thebibliography[1]{
 \OLDthebibliography{#1}
\setlength{\parskip}{11pt}
\setlength{\itemsep}{2mm}
}
\begin{document}
\title{A Max-Flow approach to Random Tensor Networks}

\author{Khurshed Fitter}
\email{khurshed.fitter@epfl.ch}
\address{Ecole Polytechnique Federal de Lausane, Switzerland}
\author{Faedi Loulidi}
\email{faedi-loulidi@oist.jp}
\address{Okinawa Institute of Science Technology,Okinawa, Japan}

\author{Ion Nechita}
\email{nechita@irsamc.ups-tlse.fr}
\address{Laboratoire de Physique Th\'eorique, Universit\'e de Toulouse, CNRS, UPS, France}

\date{\today}

\begin{abstract}
We study the entanglement entropy of a \emph{random tensor network} (RTN) using tools from \emph{free probability theory}. Random tensor networks are simple toy models that help the understanding of the entanglement behavior of a boundary region in the ADS/CFT context. One can think of random tensor networks are specific probabilistic models for tensors having some particular geometry dictated by a graph (or network) structure. We first introduce our model of RTN, obtained by contracting maximally entangled states (corresponding to the edges of the graph) on the tensor product of Gaussian tensors (corresponding to the vertices of the graph). We study the entanglement spectrum of the resulting random spectrum along a given bipartition of the local Hilbert spaces. We provide the limiting eigenvalue distribution of the reduced density operator of the RTN state, in the limit of large local dimension. The limit value is described via a \emph{maximum flow} optimization problem in a new graph corresponding to the geometry of the RTN and the given bipartition. In the case of \emph{series-parallel} graphs, we provide an explicit formula for the limiting eigenvalue distribution using classical and free multiplicative convolutions. We discuss the physical implications of our results, allowing us to go beyond the semiclassical regime without any \emph{cut assumption}, specifically in terms of finite corrections to the average entanglement entropy of the RTN. 
\end{abstract}
\maketitle

\tableofcontents

\newpage

\section{Introduction}
The ADS/CFT
correspondence consists in describing a quantum theory (more precisely a \emph{conformal field theory})  as lying on the boundary of an \emph{anti de Sitter} space-time geometry \cite{maldacena1999large}. Many particular features of this correspondence remains mysterious, in particular the link with quantum information theory with \emph{entanglement}. It was shown in \cite{ryu2006holographic}, that for a fixed time slice, the entanglement behaviour of a given region of the boundary quantum theory is proportional to the minimal hypersurface bulk area homologous to the region of interest known as the Ryu-Takayanagi entanglement entropy. The Ryu-Tkayanagi formula shows in the context of ADS/CFT a crucial link between the entanglement behaviour of an intrinsic quantum theory and its link with the bulk gravitational field. This results open a new point on the understanding quantum gravity in the ADS/CFT framework from the perspective of entanglement and quantum information theory. We refer to \cite{chen2022quantum} and the reference therein for a complete introduction.

The difficulty of computing the entanglement properties of boundary quantum theories has led to the development of attractable simple models particularly the \emph{tensor network} and \emph{random tensor network} frameworks. Initially, the tensor network framework started as ``good'' models approximating ground states in condensed matter physics. 
In the context of condensed matter physics, tensor networks represent ground states of a class of gapped Hamiltonian \cite{cirac2021matrix}. Moreover, tensor networks have paved the way to understanding different physical properties such as the classification of topological phases of matter. 
We simply refer to \cite{cirac2021matrix} for an extensive review of all the different applications. Recently other extensions of tensor networks to random tensor networks for studying random matrix product states or projected entangled pairs of states were introduced in \cite{collins2013matrix,gonzalezguillen2018spectral,lancien2022correlation}. However, the random tensor network (or simply RTN) was initiated in \cite{hayden2016holographic} as toy models reproducing the key properties of the entanglement behaviour in the ADS/CFT context \cite{dong2021holographic,kudler2022negativity,penington2022replica,qi2022holevo,qi2018space,qi2017holographic,yang2016bidirectional}. Moreover the random tensor network framework appears in different active areas from condensed matter physics in the random quantum circuits and measurement framework \cite{levy2021entanglement,lopez2020mean,li2021statistical,medina2021entanglement,marolf2020probing,nahum2021measurement,vasseur2019entanglement,yang2022entanglement,you2018machine}.

In general, a random tensor network (or simply RTN) will consist of defining random quantum states from a given fixed graph structure, 
as we shall describe in the following lines. The main problem consists of computing the \emph{average entanglement} as $D\to\infty$, where $D$ plays the dimension of the Hilbert space of the model, behaviour of the state associated with a given fixed subregion of the graph. From Different results have been established allowing the understanding of the entanglement entropy of the RTN models as toy models mimicking the entanglement behaviour in quantum gravity. Different work has been explored in the literature where the entanglement entropy as $D\to\infty$ scales as the \emph{number of minimal cuts} needed to separate the region of interest from the rest of the graph times $\log D$ \cite{hayden2016holographic,cheng2022random}. Moreover one should mention that several directions have been explored to go beyond the toy model picture of the (random) tensor network \cite{apel2022holographic,bao2019beyond}. 

In this work, we will focus on a general random tensor network from a \emph{maximal flow} approach. The use of the maximal flow approach was already explored in \cite{kudler2022negativity} to compute the entanglement negativity and in \cite{freedman2017bit} to derive the Ryu-Takayanagi entanglement entropy in the continuum setting. As was described in the previous paragraph the model consists of defining a random quantum state from a given fixed graph structure. In our model, we shall consider a graph with edges (bulk edges) and half edges (boundary edges). The use of bulk and boundary edges will become clear from the definition of the model. We shall associate to each half edge a finite dimensional Hilbert space $\C^D$ and for each edge a Hilbert space $(\C^D)^{\otimes 2}$. The edges Hilbert space will generate a local Hilbert space associated to each vertex of the graph. In order to define an RTN, one should associate to each components of the graph quantum state generated as random. For that, we will generate for each vertex a random Gaussian state and we shall associate to each of the edges \emph{maximally entangled state}. A \emph{random tensor network} is defined by projecting all the maximally entangled states associated with the graph on the total random states generated in each vertex. The obtained random tensor network lies in the full boundary Hilbert space. The main goal of this work is to consider a sub-boundary region $A$ of the graph and evaluate the entanglement behaviour of the associated residual state as $D\to\infty$. The first computation is to estimate the moment computation of the state associated to the region $A$ as $D\to\infty$. With the help of the \emph{maximal flow}, that we will develop in this work in full details, we are able to estimate the moments \emph{without any cuts assumption} and shows that converges to the moment of a graph-dependent measure. We will show if the obtained \emph{partial order} is \emph{series-parallel}, and with the use of \emph{free probability theory}, we are able to explicitly construct the measure associated to the graph \emph{without any cut assumption}. Moreover, we will show the \emph{existence of higher order correction terms} of the entanglement entropy given with graph-dependent measure which can be explicitly given if the partial-order is series-parallel. We will show in different example how one can compute explicitly the measures associated to the initial graph in the case if the obtained partial order is series-parallel. The link between quantum information theory, free probability and random tensor network was already explored in \cite{cheng2024random} with the use of a general link state representing the effect of bulk matter field in the ADS/CFT context which allows to go beyond the semiclassical regime with correction terms of the entanglement structure. Moreover, the obtained results in \cite{cheng2024random} \emph{assumes} the existence of two disjoint minimal cuts separating the region $A$ from the rest. In this work, we only work with maximally entangled states in the bulk edge of the model. Without any cut assumption, we do obtain higher order correction terms which we may interpret as \emph{intrinsic fluctuations}. In the context of ADS/CFT those are intrinsic to the quantum spacetime nature of bulk gravitational field without any bulk matter field. 

This work is organised as follows. In Section \ref{sec: main results}, we will give a summary of our work by presenting all the main results. In Section \ref{sec: random tensor network} we will introduce our random tensor network framework. In Section \ref{sec: moment computation}, we will give the moment computation of a given state $\rho_A$ associated to a given suboudary region of the graph $A$. In Section \ref{sec: Optimisation with the MF}, with the help of the maximal flow approach, we will compute the asymptotic scaling of the moments and show the convergence to a measure given by a graph-dependent measure. In Section \ref{sec: entanglement and free probability theory}, we will introduce the notion of series-parallel partial order with the help of free probability we will show explicitly how one can construct a graph-dependent measure with free product convolution and classical measure product. In Section \ref{sec:examples}, we will give various examples of random tensor networks and show explicitly the associated obtained measure in the case of an obtained partial order is series-parallel. In Section \ref{sec:Results for normalized tensor network states}, we will give the main technical results, with the help of concentration inequalities we will show the obtained higher-order entanglement correction terms, without any cut assumption, are graph-dependent moreover the graph dependent measure can be explicitly constructed if the partial order is series-parallel.
    
\section{Main results}\label{sec: main results}
In this section, we will introduce the main definitions and main results obtained in this work. This work will consist of computing the entanglement entropy of a given random tensor network. We shall consider the most general framework of random tensor networks and study the entanglement structure of the random tensor, concerning a fixed bipartition of the total Hilbert space. By addressing the problem using a \emph{network flow approach}, we can compute the leading term, plus higher order correction terms of the entanglement entropy which are \emph{graph dependent}. The higher order correction terms plays a crucial role in different areas, particularly in the context of ADS/CFT \cite{lewkowycz2013generalized,hayden2016holographic,bao2019beyond,cheng2024random} as we shall comment after we give the main results. One can informally summarize the key results of this work as follows: 
\begin{tcolorbox}
In the limit of large local Hilbert space dimension $D$, the average R\'enyi entanglement entropy of a RTN $G$, across a given bipartition $(A|B)$, has:
\begin{itemize}
    \item a dominating term of the form $\operatorname{maxflow}(G_{A|B}) \cdot \log D$
    \item a finite correction term which is graph dependent.
\end{itemize}
In the case where $G_{A|B}$ is a \emph{series-parallel} graph, we can compute the distribution of the entanglement spectrum (and hence the finite entropy correction) as an iterative classical and free convolution of Mar\u{c}henko-Pastur distributions.
\end{tcolorbox}
A random tensor network has a corresponding random quantum state $\ket{\psi_G}$ that encodes the structure of a graph $G$. For that, we shall introduce the graph $G$ and some terminology. We refer to Section \ref{sec: random tensor network} for more technical details and definitions of the model.
Let $G=(V, E)$ be a connected undirected finite graph with (full) edges and half edges; the former encode the internal entanglement structure of the quantum state $\ket{\psi_G}$, while the latter represents the physical systems (Hilbert spaces) on which $\ket{\psi_G}$ lives. We shall denote by $E_b$ and $E_{\partial}$ the set of edges (bulk edges) and half edges (boundary edges) respectively. Formally the set of edges and half edges are respectively given by $E_b:=\{e_{x,y}\,|\,e_{x,y}=(x,y):\,x,y\in V\}$ and $E_{\partial}:=\{e_x=(x,\cdot): \,x\in V\}$ where $E:=E_b\sqcup E_{\partial}$. Then, the corresponding \emph{random tensor} $\ket{\psi_G}$ is defined as

\begin{equation*}
\ket{\psi_G}:=\Big\langle \bigotimes_{e\in E_b}\Omega_e \;  \Big | \; \bigotimes_{x\in V} g_x \Big\rangle \in  \big(\C^D\big)^{\otimes|E_{\partial}|},
    \end{equation*}
where $\ket{g_x}$ are random Gaussian states defined in the local Hilbert space of each vertex $x$. Moreover, for each (full) edge $e\in E_b$, we associate a maximally entangled state $\ket {\Omega_e} \in \C^D \otimes \C^D$ that is used to contract the internal degrees of freedom of the tensor network. For a representation of a random tensor network see Figure \ref{fig:tensor-network-with-marginal}, which we will treat in great detail for an illustration of our different main results of this work. We refer to Definition \ref{def: random tensor network} for more details.

As was mentioned earlier, this work aims to evaluate the \emph{entanglement entropy} of the random quantum state $\ket{\psi_G}$, along a bi-partition $A|B$ of the boundary edges $E_\partial = A \sqcup B$. We shall do so in the limit of large local Hilbert space dimension $D \to \infty$. To evaluate the entanglement entropy of the pure state $\ket{\psi_G}$, we shall compute its asymptotic \emph{entanglement spectrum} along the bi-partition $A|B$, that is the limiting spectrum of the density matrix $\rho_A=\Tr_B \ketbra{\psi_G}{\psi_G}$. From this spectral information, we can deduce the average R\'enyi entanglement and von Neumann entropies for the approximate normalised state $\tilde{\rho}_A$ respectively given by: 
\begin{equation*}
    \tilde{\rho}_A:=D^{-|E_{\partial}|}\rho_A\to \begin{cases}\lim_{D\to\infty}\E\,S_n(\tilde{\rho}_A)\quad\text{with}\quad S_n(\rho):=\frac{1}{1-n}\log\left(\Tr\rho^n\right), &\\
 \lim_{D\to\infty}\E\,S(\tilde{\rho}_A)\quad \text{with}\quad S(\rho):=-\Tr(\rho\log\rho).
    \end{cases}
\end{equation*}
Above, the expectation is taken with respect to the Gaussian distribution of the independent random tensors $\ket{g_x}$ present at each vertex of the graph. It will be clear from Section \ref{sec:Results for normalized tensor network states} the use of approximate normalised state instead of a ``true" normalised state $\tilde{\rho}_A:=\rho_A/\Tr\rho_A$. 

We first compute exactly \emph{the moments} of the random matrix $\rho_A$ and then we analyze the main contributing terms at large dimensions by relating the problem to a \emph{maximum flow} question in a related graph. By the use of the maximal flow and tools from \emph{free probability theory}, we will able to derive the leading and the fluctuating terms of the R\'enyi entropy and then deduce the behaviour of the von Neumann entanglement entropy.

\medskip

\noindent\textbf{Moment computation} We shall first consider the normalised state $\tilde{\rho}_A:=\rho_A/\Tr\rho_A$ and compute the moments. For the first step, we use the \emph{graphical Wick formula} from \cite{collins2011gaussianization} to find
\begin{equation}\label{equ: moments}
    \E \left[\Tr (\rho_A^n)\right]=\sum_{\alpha=(\alpha_{x})\in\mathcal{S}_n^{|V|}} D^{n|E_{\partial}|-H_G^{(n)}(\alpha)}, \;\forall\; n\in\mathbb N
\end{equation}
where $H_G^{(n)}(\alpha)$ can be understood as the Hamiltonian of a classical ``spin system'', where each spin variable takes a value from the permutation group $\mathcal{S}_n$:
$$ H_G^{(n)}(\alpha):=\sum_{(x,\cdot) \in A}|\gamma^{-1}_x\alpha_x|+\sum_{(x,\cdot)\in B}|\id^{-1}_x \alpha_x|+\sum_{(x,y)\in E_b}|\alpha_x^{-1}\alpha_y|.$$
Above, we associate to the region $B$, the identity permutation $\id_x \in \mathcal S_p$ (corresponding to taking the partial trace over $B$), and to the region $A$ the full-cycle permutation $\gamma_x = (n \; n-1 \; \cdots \; 2 \; 1)$ (corresponding to the trace of the $n$-th power of $\rho_A$). We refer to Proposition \ref{prop: average of moments} in Section \ref{sec: moment computation} for a more precise statement and proof. One should also mention that the contribution of the normalisation term of $\tilde\rho_A$ will be given by:
\begin{equation*}
    \E\left[(\Tr \rho_A)^n\right]=\sum_{\alpha=(\alpha_{x})\in\mathcal{S}_n^{|V|}} D^{{{n|E_{\partial}|-h_G^{(n)}(\alpha)}}}, \;\forall\; n\in\mathbb N
\end{equation*}
where 
\begin{equation*}
h_G^{(n)}(\alpha):=\sum_{(x,\cdot) \in E_{\partial}}|\id^{-1}_x\alpha_x|+\sum_{(x,y)\in E_b}|\alpha_x^{-1}\alpha_y|.
 \end{equation*}
Remark above that $h_G^{(n)}(\alpha)$ is simply $H_G^{(n)}(\alpha)$ with $A=\emptyset$. See Proposition \ref{prop: denominator contribution} for more details. Note that in the particular case $n=2$, the authors of \cite{hayden2016holographic} gave an exact mapping to the partition function of a classical Ising model. Notice the frustrated boundary conditions of the Hamiltonian above: vertices connected to the region $A$ prefer the configuration $\alpha_x = \gamma_x$, while vertices connected to the region $B$ prefer the low energy state $\alpha_x = \id_x$.

\noindent\textbf{Maximal flow.} The \emph{(max)-flow} approach will consist of identifying the leading terms from the moment formula above as $D\to \infty$. For that, we introduce a \emph{network} $G_{A|B}$, derived from the original graph $G$, by connecting all the half-edges in $A$ to an extra vertex $\gamma$ (sink) and all the half-edges in $B$ to $\id$ (source). In $G_{A|B}$, the vertices are valued in the permutation group $\mathcal S_n$ and all the half edges are connected either to the source $\id$ or to the sink $\gamma$. The flow approach will consist by looking at the different paths starting from the source $\id$ to the sink $\gamma$. The different paths in the flow approach will induce an ordering structure more precisely a \emph{poset structure} in the network $G_{A|B}$. Intuitively the maximal flow will consist of searching of the maximal number of such paths such that if on take  them off the source and the sink will be not anymore connected. More precisely, by \emph{Menger's theorem}, the maximum flow in this graph is equal to the number of edge-disjoint \emph{augmenting paths} that start from the source $\id$ and end in the sink $\gamma$. Figure \ref{fig:network} represents the different paths achieving the maximal flow in the network $G_{A|B}$ from the original graph $G$ as represented in Figure \ref{fig:tensor-network-with-marginal}. This procedure allows us to find a \emph{lower bound} to the Hamiltonian $H_{G_{A|B}}^{(n)}(\alpha)$ that can be attained by some choice of the variables $\alpha_x$. 

\begin{customthm}{A}
For all $n \geq 1$, we have
\begin{equation*}
    \min_{\alpha \in \mathcal S_n^{|V|}} H_{G_{A|B}}^{(n)}(\alpha)=(n-1)\operatorname{maxflow}(G_{A|B}),
\end{equation*}
\end{customthm}

where $H_{G_{A|B}}^{(n)}(\alpha)$ is the extended Hamiltonian in the network $G_{A|B}$. Once one takes out all the augmenting paths achieving the maximum flow in $G_{A|B}$, one is left with a \emph{clustered graph} $G_{A|B}^c$ that is obtained by clustering all the remaining connected components (see Figure \ref{fig:residual-network}). Importantly, it follows from the maximality of the flow that in this clustered graph, the cluster-vertices $[\id]$ and $[\gamma]$ are disjoint.  We refer to Proposition \ref{prop:min-H-max-flow} for more details and the proof of the result above.

As a direct consequence of the result above, one can deduce the moment convergence as $D\to\infty$, we refer to Theorem \ref{thm:limit-moments-general} for more details of the following result.  
\begin{customthm}{B}
    In the limit $D \to \infty$, we have, for all $n \geq 1$, 
    \begin{equation*}
        \lim_{D\to \infty}\E\frac{1}{D^{F(G_{A|B})}}\Big[\Tr\left(\left(D^{F(G_{A|B})-|E_{\partial}|}\,\rho_A\right)^n\right)\Big]=m_{n}
\end{equation*}
    where $m_n$ are the moments of a probability measure $\mu_{G_{A|B}}$ and $F(G_{A|B})=\operatorname{maxflow}(G_{A|B})$.
\end{customthm}
Moreover one can show the normalisation term converges to $1$ as shown in Corollary \ref{corr: average of normalisation}. 
The previous maximum flow computation gives the first order in the formula for the average entanglement entropy of random tensor network states: 
$$\mathbb{E}\left[S_n(\tilde \rho_A)\right] \approx  \operatorname{maxflow}(G_{A|B}) \cdot \log D \;\forall\; n \geq 1.$$
\noindent\textbf{Free probability theory and entanglement} Our main contribution in this work is to show that one can find the \emph{second order} (or the finite corrections)  of the R\'enyi and von Neumann entanglement entropy by carefully analyzing the set of augmenting paths achieving the maximum flow in the graph $G_{A|B}$. Once the different paths achieve the maximal flow in the graph $G_{A|B}$, after the clustering operation we obtain an \emph{partial order} $G_{A|B}^o$ where the vertices are the different permutation clusters formed from the clustered graph $G_{A|B}^c$. See Figure \ref{fig: order graph} of the obtained partial order from the original graph $G$ in Figure \ref{fig:tensor-network-with-marginal}. Our results are general, and they become explicit in the setting of the partial order $G^o_{A|B}$ is  \emph{series-parallel}. With the help of \emph{free probability theory}, we are able in this setting to deduce the second-order correction terms of each of the R\'enyi and von Neumann entropy.  
\begin{definition}
    A graph $G$ is called \emph{series-parallel} if it can be constructed recursively using the following two operations:
    \begin{itemize}
        \item Series concatenation: $G=H_1 \bigSsqcup H_2$ is obtained by identifying the sink of $H_1$ with the source of $H_2$.
        \item Parallel concatenation: $G=H_1\bigPsqcup H_2$ obtained by identifying the sources and the sinks of $H_1$ and $H_2$.
    \end{itemize}
\end{definition}
\begin{definition}
 To a series-parallel graph $G$ we associate a probability measure $\mu_G$, defined recursively as follows.
    \begin{itemize}
        \item To the trivial graph $G_{\text{triv}} = (\{s,t\}, \{\{s,t\}\})$, we associate the Dirac mass at $1$: $\mu_{G_{\text{triv}}} := \delta_1$.
        \item Series concatenation: $\mu_{G \bigSsqcup H} := \mu_G \boxtimes \MP \boxtimes \mu_H$.\footnote{$\mathrm{d}\MP:=\frac{1}{2\pi}\sqrt{4t^{-1}-1} \, \mathrm{d}t$ is the Mar\u{c}henko-Pastur distribution and $\boxtimes$ is the free convolution product. We refer to Appendix \ref{sec: appendix} for more details.}
        \item Parallel concatenation: $\mu_{G \bigPsqcup H} := \mu_G \times \mu_H.$
    \end{itemize}
\end{definition}

\begin{customthm}{C}
In the limit $D \to \infty$, the average R\'enyi entanglement entropy $\forall\; n \geq 1$ and von Neumann entropy of an approximate normalised state $\tilde \rho_A:=D^{-|E_{\partial}|}\rho_A$ behaves respectively as
    \begin{align*}
\mathbb{E}\left[S_n(\tilde \rho_A)\right] &=  \operatorname{maxflow}(G_{A|B}) \cdot \log D - \frac{1}{n-1}\log \int t^n \, \mathrm{d} \mu_{G_{A|B}}(t) + o(1)\\
\E \left[S(\tilde{\rho}_A)\right]&=\operatorname{maxflow}(G_{A|B}^o) \cdot \log D -\int t\,\log t\,\mathrm{d}\mu_{G_{A|B}}+o(1).
    \end{align*}
\end{customthm}
We refer to Corollary \ref{corr: average entropie scaling} for more details and the proof of the above statements. In particular if the obtained partial order $G_{A|B}^o$ is series-parallel the measure $\mu_{G_{A|B}}=\mu_{G_{A|B}^o}$ can be explicitly constructed, we refer to Theorem \ref{Th: moments and free pro} for more details. The use of the approximate normalised state instead of ``the'' normalised state $\tilde \rho_A:=\rho_A/\Tr\rho_A$ will be justified from the concentration result of $\Tr \rho_A$ in Subsection \ref{subsec: concentration}.

It was previously argued in \cite{hayden2016holographic,bao2019beyond} if one wants to encode the \emph{quantum fluctuations} one needs to use instead of a maximally entangled state a general ``link state" $\ket{\phi_e}$ defined by:
\begin{equation*}
    e\in E_b\to\ket{\phi_e}:=\sum_{i=1}^D\sqrt{\lambda_{e,i}}\ket{i_x,i_y}.
\end{equation*}
It was recently shown in \cite{cheng2024random} that the non-flat spectra of the link state under the existence \emph{assumption} of \emph{two non-disjoint cuts} that one obtains the quantum fluctuations beyond the semiclassical regime in ADS/CFT. The use of a generic link state in the context of ADS/CFT represents the bulk matter field contribution.
In this work with the \emph{maximal flow} approach, we were able to show the existence of quantum fluctuations without any \emph{minimal cut assumption} and with \emph{maximally entangled state} as link state. The obtained higher order correction terms in our context can be interpreted as the ``intrinsic" quantum fluctuations of spacetime geometry without any bulk matter field in the bulk represented by a general link state.

For example, in the case of the graph represented in Figure \ref{fig:tensor-network-with-marginal}, the resulting partial order $G_{A|B}^o$ is series-parallel  (see Figure \ref{fig: order graph}) where:
\begin{equation*}
        G_{A|B}^o=G_1 \bigSsqcup G_2 \bigSsqcup G_3\quad\text{with}\quad \mu_{G_{A|B}^o}= \mu_{G_1} \boxtimes \MP \boxtimes \mu_{G_2} \boxtimes \MP \boxtimes \mu_{G_3} =  \mu_{G_1} \boxtimes \MP^{\boxtimes 2},
\end{equation*}
as represented in Figure \ref{fig: G1G2G3} the graph $G_2$ and $G_3$ are trivial hence $\mu_{G_2} = \mu_{G_3} = \delta_1$. The graph $G_1$ can be factored as a parallel composition of two other graphs as represented in Figure \ref{fig: G4G5}:
\begin{equation*}
    G_1=G_5\bigPsqcup G_4\quad\text{with}\quad\mu_{G_1} = \mu_{G_4} \times \mu_{G_5}.
\end{equation*}
The graph $G_4$ as represented in Figure \ref{fig: G6G7G8} factorises as: 
\begin{equation*}
    G_4 = \left( G_6 \bigPsqcup G_7 \right) \bigSsqcup G_8\quad\text{with}\quad\mu_{G_4} = \left( \mu_{G_6} \times \mu_{G_7} \right) \boxtimes \MP \boxtimes \mu_{G_8}=\left( \MP \times \MP \right) \boxtimes \MP
\end{equation*}
where we have used the fact that $G_6$ and $G_7$ are series compositions of two trivial graphs, so $\mu_{G_6} = \mu_{G_7} = \MP$, while $\mu_{G_8} = \delta_1$.

Moreover the graph $G_5$ as represented in Figure \ref{fig G5} factorises as: 
\begin{equation*}
    G_5 = G_9 \bigSsqcup G_{10} \bigSsqcup \left( G_{11} \bigPsqcup G_{12} \right) \bigSsqcup G_{13}
    \end{equation*}
    with the associated measure
    \begin{equation*}
    \mu_{G_5} = \mu_{G_9} \boxtimes \MP \boxtimes \mu_{G_{10}} \boxtimes \MP \boxtimes \left( \mu_{G_{11}} \times \mu_{G_{12}} \right) \boxtimes \MP \boxtimes \mu_{G_{13}}=\MP^{\boxtimes 3} \boxtimes \left( \MP^{\boxtimes 2} \times \MP\right),
\end{equation*}
where we have used iteratively the series composition for $G_{11}$ and $G_{12}$ with their respective measure given by $\mu_{G_{11}} = \MP^{\boxtimes 2}$ and $\mu_{G_{12}} = \MP$.
 In the case of random tensor network represented in Figure \ref{fig:tensor-network-with-marginal} the partial order is series-parallel with the associated measure:
 \begin{equation*}
             G_{A|B}^o=G_1 \bigSsqcup G_2 \bigSsqcup G_3,
             \end{equation*}with
             \begin{equation*}
\mu_{G_{A|B}^o}=\left\{\left[ \MP^{\boxtimes 3} \boxtimes (\MP^{\boxtimes 2} \times \MP) \right] \times \left[ (\MP \times \MP) \boxtimes \MP \right] \right\} \boxtimes \MP^{\boxtimes 2},
 \end{equation*}
which is obtained by combining all the results stated above. If one considers the minimal cuts associated with the network $G_{A|B}$ (see Figure \ref{fig:network}) as represented in Figure \ref{fig:network-cuts} where we have considered four ways\footnote{We have only represented four cuts for simplicity. Remark in Figure \ref{fig:network-cuts} we have more than four minimal cuts which may share a common edge.} achieving the minimal cuts crossing common edges, therefore intersects. 

\bigskip

\section{Random tensor networks}\label{sec: random tensor network}
In this section, from a given graph with edges (bulk edges) and half edges (boundary edges), we will introduce random tensor network model. For that for each edge and half edge of the graph, we will associate a Hilbert space. The edge Hilbert space will induce a local Hilbert space for each vertex in the graph. We will associate to each of the vertices a random Gaussian state, and to each edge a maximally entangled state. The random tensor network is defined by projecting all the maximally entangled state associated to all edges of the graph over the vertex states given by the tensor product of all the random Gaussian vectors. This section aims to introduce the main definitions of the model and recall the different entanglement notions.

In Subsection \ref{subsec: random tensor network}, we shall introduce our random tensor network model. In Subsection \ref{subsec: entanglement of a subregion}, we recall the different entanglement notions and their properties. 

\subsection{Random tensor network}\label{subsec: random tensor network}
In the following, we shall give the construction of the random tensor network model. Let $G=(V, E)$ be a bulk connected undirected finite graph with edges and half edges. We shall denote by $E_b$ and $E_{\partial}$ the set of edges and half edges respectively. Formally the set of edges and half edges are defined as follows
\begin{align*}
     E_b&:=\{e_{x,y}\,|\,e_{x,y}=(x,y):\,x,y\in V\},\\
      E_{\partial}&:=\{e_x=(x,\cdot): \,x\in V\},\\
      E&:=E_b\sqcup E_{\partial}.
\end{align*}
For later discussion, the set of edges $E_b$ and half edges $E_{\partial}$ we shall call them the set of \emph{bulk} and \emph{boundary } edges. The bulk connectivity assumes that all the vertices in the bulk region of the graph are connected; this is the same notion as the ``connected network'' property from \cite[Definition 2]{hastings2017asymptotics}. We denote by $|E_b|$,  $|E_{\partial}|$ and $|E|=|E_b|+|E_{\partial}|$ the cardinality of the bulk, boundary and the total edge set. 

For each half-edge on a given vertex in the graph, we shall associate a Hilbert space $\C^{D}$, and for each bulk edge connecting two vertices, we associate $\C^{D}\otimes \C^{D}$ for finite $D$ known as the \emph{bond dimension}. We will define a random Gaussian to each vertex of the graph state that lies in the local Hilbert space associated to each vertex. Moreover, on each edge of the graph, we associate a maximally entangled state. The \emph{random tensor network} is a random quantum state constructed by projecting the total tensor product of the random Gaussian state for each vertex over all the maximally entangled state formed in bulk edges (see Definition \ref{def: random tensor network}).

Formally, for each part of the graph $G$ we shall associate to each part of the graph Hilbert spaces where: 
\begin{itemize}
    \item For each half-edge defined on a vertex $x$, we associate a finite-dimensional Hilbert space $\mathcal{H}_{e_x}$: 
\begin{align*}
    e_x\in E_{\partial}&\to \mathcal{H}_{e_x}:=\C^{D}\\
E_{\partial}&\to\mathcal{H}_{\partial}:=\bigotimes_{e_x\in E_{\partial}}\mathcal{H}_{e_x},
    \end{align*}

\item For each edges $e_{x,y}\in E_b$ we shall associate Hilbert space $\mathcal{H}_{e_{x,y}}$:
\begin{equation*}
    e_{x,y}\in E_b\to \mathcal{H}_{e_{x,y}}:=\C^{D}\otimes \C^{D}, 
\end{equation*}
where $\mathcal{H}_{e_{x,y}}$ denote the Hilbert space connecting the two vertices $x$ and $y$. 

\item For each vertex $x\in V$, we define the local vertex Hilbert space $\mathcal{H}_x$ where: 
\begin{align*}
    x\in V&\to \mathcal{H}_x:=\bigotimes_{E \ni e \to x} \mathcal H_e\\
    V&\to\mathcal{H}_V:=\bigotimes_{x\in V} \mathcal{H}_x = \bigotimes_{x \in V} \bigotimes_{E \ni e \to x} \mathcal H_e,
\end{align*}
where the Hilbert space $\mathcal{H}_x$ represents the local Hilbert space associated with a vertex $x$ defined as all the edges of Hilbert space that contribute locally. \end{itemize} 

Having defined the general Hilbert space structure associated with a generic graph $G$, in the following, we shall define quantum states in the graph $G$ which will allow us to introduce the random tensor network model. By construction let for each: 
\begin{itemize}
    \item Vertex $x$ a random quantum state $\ket{g_x}\in \mathcal{H}_x$ sampled from an \emph{i.i.d Gaussian distribution}:
    \begin{align*}
        x\in V&\to \ket{g_x}\in\mathcal{H}_x\\
        V&\to \bigotimes_{x\in V}\ket{g_x}\in \mathcal{H}_V        
    \end{align*}
\item Bulk edge $e_{x,y}$ a maximally entangled state $\ket{\Omega_e}$ given by: 
\begin{align*}
e_{x,y}\in E_b&\to\mathcal{H}_{e_{x,y}}\\
e_{x,y}&\to \ket{\Omega_e}:=\frac{1}{\sqrt{D}}\sum_{i=1}^{D}\ket{i_x,i_y},
\end{align*}
where we have used the notation $\ket{i_x}$ and $\ket{i_y}$ for the state associated to the vertex $x$ sharing an edge with $y$.   
\end{itemize}

\begin{definition}\label{def: random tensor network}
A \emph{random tensor network} $\ket{\psi_G}$ is defined as a projection of the vertex state over all the maximally entangled states $\ket{\Omega_e}$ for each $e_{x,y}$ in $E_b$ where:
\begin{equation}\label{eq: random tensor network}
    \ket{\psi_G}:=\Big\langle \bigotimes_{e\in E_b}\Omega_e \;  \Big | \; \bigotimes_{x\in V} g_x \Big\rangle \in  \big(\C^D\big)^{\otimes|E_{\partial}|}.
\end{equation}
\end{definition}
One should mention that the following example will be used in all other parts of this work as an illustration of the different results obtained in each section.
\begin{figure}[htpb]
\centering
\includegraphics{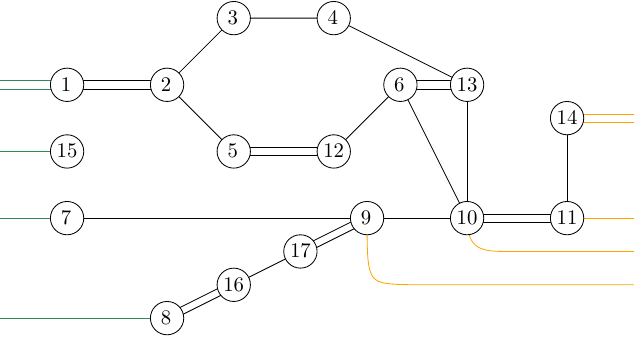}
\caption{A tensor network depicting a tensor from $(\mathbb C^D)^{\otimes 10}$ obtained by contracting 17 tensors. The 10 Hilbert space factors are partitioned into two subsets $\textcolor{mygreen}{B} \sqcup \textcolor{myorange}{A}$.}
\label{fig:tensor-network-with-marginal}
\end{figure}

\begin{example}
As an illustration of a random tensor, see Figure \ref{fig:tensor-network-with-marginal}, where the boundary region of $G$ are all the half edges $E_{\partial}:=\{e_{1},e_{15},e_{7},e_{8},e_{9},e_{10},e_{11},e_{14}\}$. We shall mention that in Figure \ref{fig:tensor-network-with-marginal}, the region $A$ are the half edges in the vertices $\{9,10,11,14\}$,i.e
$A:=\{e_{9},e_{10},e_{11},e_{14}\}$. The complementary region $B:=E_{\partial}\setminus A$ are half edges associated to the vertices $\{1,15,7,8\}$, where $B:=\{e_1,e_{15},e_7,e_8\}$. 
\end{example}

We shall also mention that our construction of the random tensor network, the edges, and the half edges generate the vertex Hilbert space $\mathcal{H}_x$. Other types of random tensor network models were already explored in the literature see \cite{hayden2016holographic,dong2021holographic,cheng2022random} and the reference therein. In the models mentioned previously, at first, they define the bulk and boundary vertices while in our work the focus is on the edges and the half edges which generates the local Hilbert space for each vertex, and the bulk states are given by a maximally entangled state. The first initial work in the random tensor network was in \cite{hayden2016holographic} where the aim was to compute the entanglement entropy of subregion of the random tensor network which is proportional as the bond dimension tends to infinity to the minimal cuts of the graph reproducing the famous Ryu-Takayanagi entanglement entropy \cite{ryu2006holographic} in a discrete version.

In a recent work \cite{cheng2022random}, the authors associate a state with a general ``link" state connecting two bulk vertices, therefore generalizing the previous models where they allowed the existence of two non-crossing minimal cuts. This result allows the authors to compute higher-order correction terms of the entanglement entropy.   
The main goal of this work, with the maximal flow approach without any minimal cut assumption, we will be able to derive the higher order correction terms with a maximally entangled state connecting the bulk vertices. 

\subsection{Entanglement }\label{subsec: entanglement of a subregion}
In the following, we shall recall different entanglement notions used in quantum information theory in particular von Neumann entropy and R\'enyi entropy.

The von Neumann entropie for a given normalised quantum state $\rho$ defined as 
\begin{equation}\label{VN entropy}
    S(\rho):=-\Tr\left(\rho\log\rho\right).
\end{equation}
In general, in physical systems with an exponential number of degrees of freedom it is in general difficult to compute it. There exists a generalisation where we do not need to diagonalise the density matrix $\rho$. This definition is due to Renyi which is known as the Renyi entropy defined as: 
\begin{equation}\label{renyi entropy}
S_n(\rho):=\frac{1}{1-n}\log\left(\Tr\rho^n\right),
\end{equation}
where it is well known that as $n\to 1$ the Renyi entropy converges to von Neumann entropy. The definitions given above are for normalised quantum states, if the state is not normalised one should normalise it first and then compute the entropy.\\

Now, we mention a bit about a subtlety regarding the upper bound on the rank of the reduced density matrix induced by the minimal cut. A minimal cut consists of finding the minimal number of edges in a graph that need to be removed to fully separate to a given fixed region of the graph. Although it is trivial to see that the rank of the reduced density matrix \(\rho_A\) is upper bounded by the local dimension \(D\) raised to the number of edges in the set \(A\), that is, \(\rank(\rho_A) \leq D^{\vert A \vert}\). However, there exists a subtlety. The rank of the reduced density matrix is, in fact, upper bounded by the minimum number of connecting edges or the bottleneck (min-cut) and not the number of edges:
\begin{equation}\label{eq:upper-bound-rank}
    \rank(\rho_A)\leq D^{F_A},
\end{equation}
where, \(F_A\) is the min-cut or the number of edges in the ``bottleneck".\\

Now, we demonstrate this more clearly using an example. Consider a state \(\ket{\psi_G}\), which we can use to construct \(\rho_A\) as shown below.
\begin{figure}[!h]
    \centering
    \includegraphics[width=0.9\textwidth]{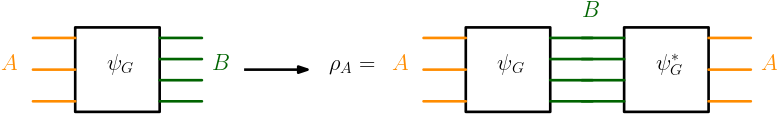}
    \caption{Constructing \(\rho_A\) from \(\ket{\psi_G}\) by tracing out the edges corresponding to the set \(B\).}
    \label{fig:enter-label}
\end{figure}
Now, consider the internal structure of \(\ket{\psi_G}\), where we divide the graph into two subgraphs denoted by \(L\) and \(R\), connected by the ``bottleneck" which is the set of all edges which when removed would disconnect the boundary sets \(A\) and \(B\).
\begin{figure}[!h]
    \centering
    \includegraphics[width=0.9\textwidth]{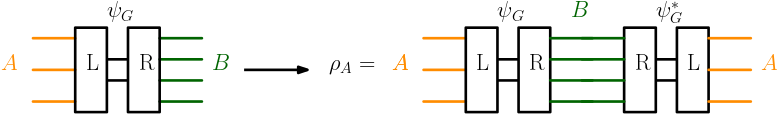}
    \caption{}
    \label{fig:enter-label2}
\end{figure}
Now, it is clear that \(\rank(\rho_A) \leq D^{F_A}\), where, in this case \(F_A = 2\), and consequently,
\begin{equation}
    S(\rho_A) \leq F_A \log D.
\end{equation}
Having established the natural intuition for the role of the min-cut (\(F_A\)) in upper-bounding the entropy, we now move on to establish our (maximal) flow approach for the random tensor network in the following sections.

\section{Moment computation}\label{sec: moment computation}

From a given random tensor network, we want to understand the behaviour of entanglement of a given subregion of the tensor network with the rest. 
For that we shall adress at first the moment computation of quantum state $\tilde {\rho}_A$ for a given subregion $A\subseteq E_{\partial}$. This first computation will allows us in the following sections to analyse the Renyi and the von Neumann entropy.  
\bigskip

Let $A\subseteq E_{\partial}$ be a sub-boundary region of the graph $G$. We shall denote by $B:=E_{\partial}\setminus A$ the complementary region of $A$. Let $\mathcal{H}_A:=\bigotimes_{e_x\in A}\mathcal{H}_{e_x}$ and $\mathcal{H}_{B}:=\bigotimes_{e_x\in B}\mathcal{H}_{e_x}$ respectively the Hilbert space associated to the boundary regions $A$ and $B$.

In this work, we will be interested in computing the \emph{average entanglement entropy} at large bond dimension: 
\begin{equation}\label{eq: large bond dimension average entanglement entropie}
    \tilde{\rho}_A:=\frac{\rho_A}{\Tr \rho_A}\to \begin{cases}\lim_{D\to\infty}\E\,S_n(\tilde{\rho}_A)&\\
 \lim_{D\to\infty}\E\,S(\tilde{\rho}_A),
    \end{cases}
\end{equation}
where $\tilde{\rho}_A$ is the normalised quantum state obtained by tracing out the region $B$, i.e $\rho_A=\Tr_B\ketbra{\psi_G}{\psi_G}$ where the partial trace over the Hilbert space $\mathcal{H}_B$. In the expression above, the average is over all the random Gaussian states. 

The first computation that we shall adress here is the moment computation as described in the following proposition. This will allow us later, as analysed in detail in the following sections, to compute the average entanglement entropy (R\'enyi and von Neumann entropy) as $D\to \infty$. The result above has been previously obtained in a very similar setting by Hastings \cite[Theorem 3, $E_{ind}$ ensemble]{hastings2017asymptotics}.

\begin{proposition}\label{prop: average of moments}
For any $A\subseteq E_{\partial}$, we have 

\begin{equation}\label{eq: expectation of moments}
    \E \left[\Tr (\rho_A^n)\right]=\sum_{\alpha=(\alpha_{x})\in\mathcal{S}_n^{|V|}} D^{{{n|E| - n|E_b|}}-H_G^{(n)}(\alpha)}, \;\forall\; n\in\mathbb N
\end{equation}
where $H_G^{(n)}(\alpha)$ can be understood as the Hamiltonian of a classical ``spin system'', where each spin variable takes a value from the permutation group $\mathcal{S}_n$:
\begin{equation}\label{eq: Hamiltonian}
 H_G^{(n)}(\alpha):=\sum_{(x,\cdot) \in A}|\gamma_x^{-1}\alpha_x|+\sum_{(x,\cdot)\in B}|\id_x^{-1} \alpha_x|+\sum_{(x,y)\in E_b}|\alpha_x^{-1}\alpha_y|.
 \end{equation}
\end{proposition}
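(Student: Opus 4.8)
The plan is to compute $\E[\Tr(\rho_A^n)]$ directly by expressing $\rho_A^n$ as a tensor contraction and applying the graphical Wick formula for Gaussian tensors. First I would write out $\Tr(\rho_A^n)$ explicitly: since $\rho_A = \Tr_B \ketbra{\psi_G}{\psi_G}$ and $\ket{\psi_G}$ is obtained by contracting the vertex Gaussians $\ket{g_x}$ against the edge states $\ket{\Omega_e}$, the quantity $\Tr(\rho_A^n)$ is a scalar obtained by laying out $n$ copies of $\ket{\psi_G}$ and $n$ copies of $\bra{\psi_G}$, wiring the $A$-indices cyclically (to implement the $n$-th power and final trace) and the $B$-indices within each copy (to implement the partial trace). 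This wiring pattern is exactly what the full-cycle permutation $\gamma$ encodes on the $A$ half-edges and the identity $\id$ encodes on the $B$ half-edges, so before invoking Wick I would fix these boundary permutations as the combinatorial data of the contraction.

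Next I would apply the graphical Wick formula of \cite{collins2011gaussianization}. Each vertex $x$ carries $n$ copies of $g_x$ and $n$ copies of $\bar g_x$; taking the Gaussian expectation pairs each $g_x$ with some $\bar g_x$, and the sum over all such pairings is precisely a sum over permutations $\alpha_x \in \mathcal{S}_n$, one per vertex. This produces the outer sum $\sum_{\alpha \in \mathcal{S}_n^{|V|}}$. For a fixed choice of $\alpha = (\alpha_x)_{x\in V}$, the remaining object is a product of loops formed by the index lines, and the total power of $D$ is the number of such loops (each closed loop over a $\C^D$ factor contributes a factor $D$, and each maximally entangled state contributes the normalisation $D^{-1/2}$ per leg). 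The key bookkeeping step is therefore to count loops: on a bulk edge $(x,y)$ the number of loops created is governed by the relative permutation $\alpha_x^{-1}\alpha_y$, contributing $n - |\alpha_x^{-1}\alpha_y|$ (where $|\sigma|$ denotes the minimal number of transpositions, i.e.\ $n$ minus the number of cycles), and similarly each $A$ half-edge contributes via $\gamma_x^{-1}\alpha_x$ and each $B$ half-edge via $\id_x^{-1}\alpha_x$.

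I would then collect the exponent of $D$. Summing the loop counts over all edges gives a leading term proportional to the total number of index lines (accounting for the normalisation of the $\Omega_e$), from which I would subtract the ``defect'' terms $|\gamma_x^{-1}\alpha_x|$, $|\id_x^{-1}\alpha_x|$, and $|\alpha_x^{-1}\alpha_y|$. Assembling the normalisation factors $D^{-1/2}$ from each leg of each $\ket{\Omega_e}$ (there are $2n|E_b|$ such legs across the $n$ copies of bra and ket, giving $D^{-n|E_b|}$) together with the positive loop contributions yields the prefactor exponent $n|E| - n|E_b|$, while the negative contributions assemble exactly into $H_G^{(n)}(\alpha)$ as defined in \eqref{eq: Hamiltonian}. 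This establishes \eqref{eq: expectation of moments}.

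The main obstacle I expect is the loop-counting step: translating the graphical tangle of $n$ ket-copies, $n$ bra-copies, the Wick pairings $\alpha_x$, and the boundary permutations $\gamma$ and $\id$ into the clean statement that each edge contributes a Cayley-distance term $|\cdot|$ is where all the care is needed. In particular I must verify the convention that $|\sigma| = n - \#\text{cycles}(\sigma)$ and check that the loop count on an edge is genuinely $n - |\alpha_x^{-1}\alpha_y|$ (equivalently, the number of cycles of $\alpha_x^{-1}\alpha_y$), and that the half-edge contributions attach the correct reference permutation ($\gamma$ for $A$, $\id$ for $B$) with the right orientation. Once the per-edge contributions are pinned down, factoring out the uniform prefactor $D^{n|E|-n|E_b|}$ and identifying the remainder as $-H_G^{(n)}(\alpha)$ is a routine rearrangement, and the exact equality (not merely asymptotic) follows because the Wick formula for Gaussian vectors is an exact identity.
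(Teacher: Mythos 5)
Your proposal follows essentially the same route as the paper: rewrite $\Tr(\rho_A^n)$ via the replica/contraction picture with $\gamma$ on the $A$ half-edges and $\id$ on the $B$ half-edges, apply the (graphical) Wick formula so that the Gaussian average at each vertex becomes a sum over $\alpha_x\in\mathcal S_n$, count loops edge-by-edge as $\#(\cdot)=n-|\cdot|$, and collect the normalisation $D^{-n|E_b|}$ from the maximally entangled states to arrive at the exponent $n|E|-n|E_b|-H_G^{(n)}(\alpha)$. The bookkeeping you flag as the delicate step is exactly the computation the paper carries out, and your accounting of the prefactor matches theirs.
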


Before giving the proof of the proposition above, we shall recall some properties of the permutation group $\mathcal{S}_n$ and fix some notations. We denote by $\gamma_x$ the \emph{total cycle} in the permutation group $\mathcal{S}_n$ evaluated in $(x,\cdot)\in A$ 
\begin{equation*}
    \forall (x,\cdot)\in A,\qquad\gamma_x=(n\ldots 1). 
\end{equation*}
We recall that one can define a notion of distance in $\mathcal{S}_n$ known as the \emph{Cayley distance} given by 
\begin{align*}
    \mathcal{S}_n\times\mathcal{S}_n&\to \R^+\\
    d:(\alpha_i,\alpha_j)&\to d(\alpha_i,\alpha_j):=n-\#(\alpha_i^{-1}\alpha_j),
\end{align*}
where $\#(\alpha)$ stands for the number of cycles in $\alpha$. The distance $d(\alpha_i,\alpha_j)$ gives the minimum number of transpositions to turn $\alpha_i$ to $\alpha_j$.
In general the distance in $\mathcal{S}_n$ satifies the triangle inequality where: 
\begin{equation*}
    d(\alpha_i,\alpha_j)\leq d(\alpha_i,\sigma)+d(\sigma,\alpha_j). 
\end{equation*}
In particular, we say that $\sigma$ is a \emph{geodesic} between $\alpha_i$ and $\alpha_j$ in $\mathcal S_n$ if $d(\alpha_i,\alpha_j)=d(\alpha_i,\sigma)+d(\sigma,\alpha_j)$.
We shall adopt the following notation for the distance instead of $d(\cdot,\cdot)$ where
\begin{equation*}
(\alpha_i,\alpha_j)\in\mathcal{S}_n\times\mathcal S_n,\quad d(\alpha_i,\alpha_j)=|\alpha_i^{-1}\alpha_j|.
\end{equation*}
\begin{proof}
To prove the result announced in the proposition, one should remark first that we can write the trace on the left-hand side of equation \eqref{eq: expectation of moments} 
 with the well known replica trick as:
\begin{equation*}
    \Tr(\rho_A^n)=\Tr\Big(\ketbra{\psi_G}{\psi_G}^{\otimes n}U_{\gamma_A}\otimes \id_B\Big),
\end{equation*}
The trace in the left-hand is on $\mathcal H_A$ that one rewrite as a full trace one $n$ copy of the full Hilbert space, bulk and boundary Hilbert space,  in the right-hand side of the equation above. Remark that we have used the notation $U_{\gamma_A}=\bigotimes_{(x,\cdot)\in A}U_{\gamma_{x}}$ the tensor product of unitary representation of the permutation $\gamma_{x}=(n\ldots 1)\in \mathcal S_n$ for each half edges $(x,\cdot)\in A$. 

By expanding and taking the average over random Gaussian states one obtains: 
\begin{align*}
    \E \Tr(\rho_A^n)&=\Tr\Big(\E \Big[\ketbra{\psi_G}{\psi_G}^{\otimes n}\Big]\,U_{\gamma_A}\otimes \id_B\Big)\\
    &=\Tr\Big(\bigotimes_{e\in E_b}\ketbra{\Omega_e}{\Omega_e}^{\otimes n}\E \Big[\bigotimes_{x\in V}\ketbra{g_x}{g_x}^{\otimes n}\Big]\,U_{\gamma_A}\otimes \id_B\Big)\\
    &=\Tr\Big(\bigotimes_{e\in E_b}\ketbra{\Omega_e}{\Omega_e}^{\otimes n}\bigotimes_{x\in V}\E \Big[\ketbra{g_x}{g_x}^{\otimes n}\Big]\,U_{\gamma_A}\Big), 
\end{align*}
where in the last equation above, we have used the shorthand notation $U_{\gamma_A}$ instead of $U_{\gamma_A}\otimes \id_B$.
We recall the following property of random Gaussian states see \cite{harrow2013church}:
\begin{equation*}
\forall x\in V,\quad\E\Big[\ketbra{g_x}{g_x}^{\otimes n}\Big]=\sum_{\{\alpha_{x}\}\in\mathcal{S}_n}U_{\alpha_{x}},
\end{equation*}
with $U_{\alpha_x}$ the unitary representation of $\alpha_x\in\mathcal S_n$. Each permutation $\alpha_x\in\mathcal{S}_n$ acts on each vertex Hilbert, hence implicitly on each edges associated to each vertex $x\in V$.  Therefore, the moments' formula becomes: 
\begin{align*}
    \E \Tr(\rho_A^n)&=\sum_{\{\alpha_x\}\in\mathcal{S}_n}\Tr\Big[\bigotimes_{e\in E_b}\ketbra{\Omega_e}{\Omega_e}^{\otimes n}\bigotimes_{x\in V}U_{\alpha_{x}}\,U_{\gamma_A}\Big]\\
    &=D^{-n|E_b|}\sum_{\{\alpha_x\}\in\mathcal S_n}\prod_{(x,\cdot)\in A}D^{\#(\gamma_x^{-1}\alpha_x)}\,\prod_{(x,\cdot)\in B}D^{\#(\id_x^{-1}\alpha_{x})}\prod_{(x,y)\in E_b}D^{\#(\alpha_x^{-1}\alpha_y)},
\end{align*}

where the formula above counts the number of loops obtained by contracting the maximally entangled states (edges) when one takes the trace. The factor of \(D^{-n|E_b|}\) appears due to the consequence of contracting the bulk edges, where each bulk edge contracted with itself, contributes a factor of \(D^{-1}\). By using the relation between the Cayley distance and the number of loops, we obtain the result in the statement of the proposition.  

\end{proof}

Graphically, one can understand the formula using Figure \ref{fig:wick} where we consider the case for \(n=3\). Upon utilizing the graphical integration technique for Wick integrals as presented in \cite{collins2011gaussianization,collins2016random}. We obtain loops and, consequently, Cayley distances of three kinds, (a) between $\id_x$ and elements directly connected to it, from the region $B$, (b) between $\gamma_x$ and elements directly connected to it, from the region \(A\) and (c) elements neither directly connected to $\id$ nor $\gamma$, from the bulk. Following this, we can rewrite the Hamiltonian in terms of Cayley distances as:

\begin{equation}\label{eq: Hamiltonian in G}
    H_G^{(n)}(\alpha):=\sum_{(x,\cdot) \in A}|\gamma^{-1}_x\alpha_x|+\sum_{(x,\cdot)\in B}|\id^{-1}_x \alpha_x|+\sum_{(x,y)\in E_b}|\alpha_x^{-1}\alpha_y|,
\end{equation}
 where $(x,\cdot)\in A$ represents half-edges in \(A\), $(x,\cdot)\in B$, represents half-edges in \(B\) and $(x,y)\in E_b$ represents edges in the bulk of the tensor network.

In the proposition above, we have addressed only the numerator term of the normalised quantum state $\tilde{\rho}_A$. However if one wants to compute the von Neumann and R\'enyi entropy (see equations \eqref{VN entropy} and \eqref{renyi entropy}), one should normalise the state and compute the moment. 

The following proposition gives the moment computation of the normalisation term in $\tilde{\rho}_A$. 

\begin{proposition}\label{prop: denominator contribution}
    For any $A\subseteq E_{\partial}$, we have 

\begin{equation}\label{eq: expectation of tensor moments}
    \E \left[\left(\Tr \rho_A\right)^n\right]=\sum_{\alpha=(\alpha_{x})\in\mathcal{S}_n^{|V|}} D^{{{n|E| - n|E_b|}}-h_G^{(n)}(\alpha)}, \;\forall\; n\in\mathbb N
\end{equation}
where the Hamiltonian $h_G^{(n)}(\alpha)$ is given by: 
\begin{equation}\label{eq: Hamiltonian with empty set}
h_G^{(n)}(\alpha):=\sum_{(x,\cdot) \in E_{\partial}}|\id_{x}^{-1}\alpha_x|+\sum_{(x,y)\in E_b}|\alpha_x^{-1}\alpha_y|.
 \end{equation}
\end{proposition}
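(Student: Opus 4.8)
The plan is to obtain this as an immediate specialization of Proposition \ref{prop: average of moments}. The starting point is the observation that $\Tr \rho_A$ is nothing but the squared norm of the random tensor: tracing out $B$ and then $A$ gives
\[
\Tr \rho_A = \Tr_A \Tr_B \ketbra{\psi_G}{\psi_G} = \scalar{\psi_G}{\psi_G}.
\]
Hence, via the same replica trick used for the numerator,
\[
(\Tr \rho_A)^n = \scalar{\psi_G}{\psi_G}^{n} = \Tr\Big(\ketbra{\psi_G}{\psi_G}^{\otimes n}\,\id\Big),
\]
where the operator now inserted on the $n$ replicas of the full boundary space $\mathcal{H}_A \otimes \mathcal{H}_B$ is the identity $\id = \id_A \otimes \id_B$, rather than the operator $U_{\gamma_A} \otimes \id_B$ appearing in the proof of Proposition \ref{prop: average of moments}. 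Put differently, $(\Tr \rho_A)^n$ is exactly the $n$-th moment $\Tr(\rho_\emptyset^n)$ of the reduction onto the empty region, i.e.\ the case $A = \emptyset$, $B = E_\partial$.

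With this reformulation I would simply rerun the proof of Proposition \ref{prop: average of moments}, replacing the cyclic boundary permutation $\gamma_x$ at each $A$-half-edge by the identity $\id_x$. Taking the Gaussian average and applying the graphical Wick formula $\E[\ketbra{g_x}{g_x}^{\otimes n}] = \sum_{\alpha_x \in \mathcal{S}_n} U_{\alpha_x}$ at each vertex yields the identical family of loop contractions: the bulk edges are contracted exactly as before, producing the prefactor $D^{-n|E_b|}$ together with the factor $\prod_{(x,y)\in E_b} D^{\#(\alpha_x^{-1}\alpha_y)}$, while every boundary half-edge now contracts against the identity. The only change from the numerator computation is therefore at the $A$-half-edges, where the loop count $D^{\#(\gamma_x^{-1}\alpha_x)}$ is replaced by $D^{\#(\id_x^{-1}\alpha_x)}$. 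Since the $B$-half-edges already carried the identity, all of $E_\partial = A \sqcup B$ now contributes a term $|\id_x^{-1}\alpha_x|$, which is precisely what assembles into the Hamiltonian $h_G^{(n)}(\alpha)$ of \eqref{eq: Hamiltonian with empty set}.

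To finish, I would convert loop numbers into Cayley distances via $\#(\sigma) = n - |\sigma|$ and collect the total power of $D$; since the bulk geometry is untouched, the overall prefactor $n|E| - n|E_b| = n|E_\partial|$ is the same as before, giving the exponent $n|E| - n|E_b| - h_G^{(n)}(\alpha)$ claimed in \eqref{eq: expectation of tensor moments}. There is no substantial obstacle here, as the statement is a strict specialization of the previous proposition; the only point meriting care is the boundary bookkeeping, namely checking that dropping the cyclic condition $\gamma_x$ converts the frustrated $A$-contribution into an additional unfrustrated $\id_x$-contribution, so that $h_G^{(n)}$ is indeed $H_G^{(n)}$ evaluated at $A = \emptyset$.
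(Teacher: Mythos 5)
Your proposal is correct and follows essentially the same route as the paper, which likewise obtains the result as the specialization $A=\emptyset$ of Proposition \ref{prop: average of moments}; your identification $(\Tr\rho_A)^n=\scalar{\psi_G}{\psi_G}^n=\Tr\big(\ketbra{\psi_G}{\psi_G}^{\otimes n}\big)$ and the replacement of $\gamma_x$ by $\id_x$ at the $A$-half-edges simply spell out the details that the paper's one-line proof leaves implicit.
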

\begin{proof}
The proof of this Proposition is a direct consequence of Proposition \ref{prop: average of moments} when one takes $A=\emptyset$, hence we obtain $h_G^{(n)}(\alpha)$ in the particular case when $A=\emptyset$ in $H_G^{(n)}(\alpha)$.
\end{proof}

\begin{figure}[htb] 
    \centering
    \includegraphics[width=0.7\textwidth]{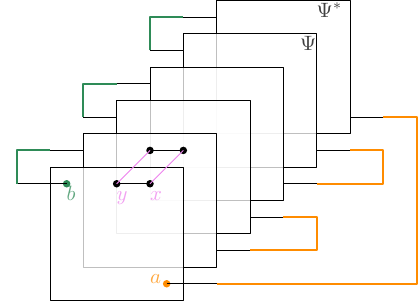}
    \caption{Graphical representation of the Wick theorem for the moment computation with $n=3$.}
    \label{fig:wick}
\end{figure}

\section{Asymptotic behaviour of moments}\label{sec: Optimisation with the MF}
This section will consist of describing the leading contributing terms as $D\to \infty$ of the moment by using the \emph{(maximal)-flow} approach. We will first introduce the (maximal)-flow approach wich will allows us to estimate the leading terms of the moments as $D\to\infty$ we refer to Proposition \ref{prop:min-H-max-flow} for more details. This result will allow us to deduce the convergence of the moment as $D\to \infty$ to moments of a graph dependent measure $\mu_{G_{A|B}}$ we refer to Theorem \ref{thm:limit-moments-general} for more details.

We recall first the obtained results from the previous section. In Proposition \ref{prop: average of moments} we have shown that the moments are given by:
\begin{equation*}
    \E \Tr (\rho_A^n)=\sum_{\alpha=(\alpha_x)\in\mathcal{S}^{|V|}_n}D^{n|E| - n|E_b| -H_G^{(n)}(\alpha)}
\end{equation*}
where the spin valued Hamiltonian in the permutation group $\mathcal S_n$ is given by:
\begin{equation*}
    H_G^{(n)}(\alpha):=\sum_{(x,\cdot) \in A}|\gamma_x^{-1}\alpha_x|+\sum_{(x,\cdot)\in B}|\id_x^{-1} \alpha_x|+\sum_{(x,y)\in E_b}|\alpha_x^{-1}\alpha_y|.
\end{equation*}
In particular, the contribution of the normalisation term in $\tilde \rho_A$ (see equation \eqref{eq: large bond dimension average entanglement entropie}) is the extended Hamiltonian $h_{G}^{(n)}(\alpha)$ as shown in Proposition \ref{prop: denominator contribution} when one takes $A=\emptyset$ in $H_G^{(n)}(\alpha)$. 
\bigskip

The main goal of this section, will consist on analysing the main contributed terms of the moment as $D\to \infty$. The leading terms will consist on solving the minimisation problem:
\begin{equation*}
    \min_{\alpha\in \mathcal{S}_n^{|V|}}H_G^{(n)}(\alpha).
\end{equation*}
Particularly as a consequence, we will minimize $h_G^{(n)}(\alpha)$ which will give us the leading contributed term as $D\to\infty$ of the normalisation term of $\tilde{\rho}_A$. The minimisation problem addressed above, will allow us to deduce the moment convergence as $D\to\infty$ to the moment of graph dependent measure $\mu_{G_{A|B}}$ in Theorem \ref{thm:limit-moments-general}. 

The minimisation problem above will be addressed with the \emph{(maximal)-flow} approach. This approach will consist first by constructing from the original graph $G$ a \emph{network} $G_{A|B}$. This network is constructed by adding first two extra vertices $\gamma$ and $\id$ to $G$ in such a way that all the half edges associated to $A$ are connected to the total cycle $\gamma$, and half edges in $B$ are connected to $\id$. The network $G_{A|B}$ has the same bulk structure of $G$, with the difference that all the vertices in $G_{A|B}$ are valued in the permutation group $\mathcal S_n$. 

The flow approach will consist on searching of different augmenting paths in the network $G_{A|B}$ that will start from $\id$ and ends to $\gamma$. This different paths will induce an \emph{order} structure in $G_{A|B}$. By taking off all the augmenting paths in $G_{A|B}$, we can find a lower bound of $H_{G_{A|B}}^{(n)}(\alpha)$ the extended Hamiltonian in the network $G_{A|B}$, we refer to Proposition \ref{prop: lower bound hamiltonian} for more details. Moreover, we will show that the minimum will be attained when the maximal flow starting from $\id$ to $\gamma$ is achieved, see Proposition \ref{prop:min-H-max-flow}. In particular we will show that the minimum of the extended Hamiltonian $h_{G_{A|B}}^{(n)}(\alpha)$ is zero, see Proposition \ref{prop: minimal of small hamiltonian} for more details.

Before we start with our flow approach, one should mention that the contributed terms of the moments at large dimension were analysed with the \emph{(minimal) cut} approach in \cite{cheng2022random}. The authors assumed the existence of two  \emph{disjoint minimal cut} in the graph separating the region of interest and the rest of the graph that will contribute in large bond dimension. With the maximal flow approach, that we will introduce, \emph{we do not assume any (minimal) cut assumption}. By identifying different augmenting paths achieving the maximal flow and uses the famous maximal-flow minimal-cut theorem (see e.g.~\cite[Theorem 8.6]{korte2011combinatorial}) one can deduce the different minimal cuts without any assumption.    

\begin{definition}\label{def: network tilde G}
    Let the network $G_{A|B}=(\tilde V,\tilde E)$ defined from the initial graph $G=(V,E)$ such that: 
    \begin{equation*}
    \tilde V:=V\sqcup \{\id,\gamma\}\quad\text{and}\quad\tilde E:= E_{ \tilde A}\sqcup E_b \sqcup E_{\tilde B}\end{equation*} 
   where the region $E_{\tilde A}$ and $E_{\tilde B}$ are defined as:
   \begin{align*}
       E_{ \tilde A}&:=\bigsqcup_{x\in V_A}(x,\gamma)\\
       E_{\tilde B}&:=\bigsqcup_{x\in V_B}(\id,x),
   \end{align*}
   where $V_A$ and $V_B$ denotes respectively all the vertices associated to the boundary region $A$ and $B$. Moreover the vertices are valued in the permutation group $\mathcal S_n$ where: $$\forall x\in \tilde V\to \alpha_x\in \mathcal S_n.$$
   \end{definition}
   Remark in the definition given above, the graph $G_{A|B}$ is constructed in such a way all the half edges $(x,\cdot)\in A$ are connected to $\gamma=(n\cdots 1)\in\mathcal S_n$ and the half edges $(x,\cdot)\in B$ are connected to $\id$. Note also that in $G_{A|B}$ there is no half edges, the bulk region in the network $G_{A|B}$ remains the same as the one in the graph $G$.
   
Let first consider the extended Hamiltonian $H_{G_{A|B}}^{(n)}(\alpha)$ of $H_{ G}^{(n)}(\alpha)$ in the network $G_{A|B}$ given by: 
\begin{equation}\label{eq: Hamiltonian in the graph G tilde}
    H_{G_{A|B}}^{(n)}(\alpha):=\sum_{x \in V_A}|\gamma^{-1}\alpha_x|+\sum_{x\in V_B}|\id^{-1} \alpha_x|+\sum_{(x,y)\in V_b}|\alpha_x^{-1}\alpha_y|,
\end{equation}

where each term in the new Hamiltonian is valued in the network $G_{A|B}$. Moreover, the sums in the above formula are over the vertices $V_A$, $V_B$ and $V_b$ are the vertices with the respective half edges in the region $A$, $B$ and $E_b$.

As was mentioned earlier, the flow approach will consist on analysing different paths that start from $\id$ and ends in $\gamma$. This will induce a natural orientation of the network $G_{A|B}$, more precisely a \emph{poset structure}. In the following, we will define the set of different paths in $G_{A|B}$ and the edges' disjoint paths.
\begin{definition}
 Let $\mathcal{P}(G_{A|B})$ be the set of all possible paths from the source to the sink in $G_{A|B}$, where the source and the sink in our case are the $\id$ and $\gamma$ respectively. Formally, the set of paths $\mathcal P(G_{A|B})$ is defined as:  
\begin{equation*}
    \mathcal{P}(G_{A|B}):=\{\pi_i:\quad \pi_i:\,\id\to \gamma\},
\end{equation*}
where $\{\pi_i\}_i$ are all the paths connecting the $\id$ to $\gamma$.
\end{definition}
\begin{definition}
Let $\tilde {\mathcal P}(G_{A|B})$ the set of all disjoint paths in $\mathcal P(G_{A|B})$,
\begin{equation*}
    \tilde {\mathcal P}(G_{A|B}):=\{\pi_i\in\mathcal{P}(G_{A|B}): \{\pi_i\}_i\, \text{ are edges disjoint}\,\}
\end{equation*}
\end{definition}
\begin{remark}
    It is clear from the definition that $\tilde{\mathcal{P}}(G_{A|B})\subseteq \mathcal{P}(G_{A|B})$.
\end{remark}
Searching for different paths that starts from the $\id$ and ends to $\gamma$ will induce an ordering, more precisely a \emph{poset structure} in the network $G_{A|B}$. First, we shall give in the following definition of a poset structure that will allow us later to use it in our maximal flow approach to minimize $H_{G_{A|B}}^{(n)}(\alpha)$. 
 
\begin{definition}
    The poset structure $\mathcal{P}_o(G_{A|B})$ is a \emph{homogeneous} relation denoted by $\leq$ satisfying the following conditions: 
    \begin{itemize}
        \item Reflexivity: $\alpha_x\leq \alpha_x$.
        \item  Antisymmetry: $\alpha_{x}\leq \alpha_{y}$ and $\alpha_{y}\leq \alpha_{x}$ implies $\alpha_{x}=\alpha_{y}$.
        \item Transitivity: $\alpha_{x}\leq\alpha_{y}$ and $\alpha_{y}\leq\alpha_{z}$ implies $\alpha_{x}\leq \alpha_{z}$.
    \end{itemize}
    for all $\alpha_x,\alpha_y,\alpha_z\in \tilde V$.
\end{definition}

\begin{definition}
Define the \emph{natural ordering} as:
\begin{equation*}
    \id\leq \alpha_1\leq\alpha_2\leq\cdots\leq\alpha_n\leq \gamma,
\end{equation*}
for a path $\pi_i\in\mathcal P(G_{A|B})$ given by 
\begin{equation*}
    \pi_i:\id\to \alpha_1\to\alpha_2\to\cdots\to\alpha_n\to\gamma.
\end{equation*}
\end{definition}
Another notion useful in our (maximal) flow analysis, is the \emph{permutation cluster}. We define a permutation cluster of a given permutation $\alpha_x$ as all the edge-connected permutations to $\alpha_x$.
\begin{definition}\label{def: Cluster}
   A \emph{permutation cluster} $[\alpha_x]$ is defined as all the edge-connected permutations to $\alpha_x \in\mathcal S_n$.
\end{definition}
\begin{remark}
With the poset structure in $G_{A|B}$, we have a naturally induced ordering in the cluster structures for each connected permutations to the permutation elemnents $\{\alpha_i\}_{i\in\{x,y,z\}}$ where all the properties of the above definition can be extended to the cluster $[\alpha_i]$ of a given permutation $\alpha_i$. More precisely the following holds: 
\begin{align*}
\alpha_x\leq \alpha_y&\implies [\alpha_x]\leq [\alpha_y].\\
\alpha_x=\alpha_y&\implies [\alpha_x]=[\alpha_y].&\\
\alpha_x\leq\alpha_y\leq\alpha_z&\implies [\alpha_x]\leq [\alpha_y]\leq [\alpha_z].
\end{align*}

\end{remark}

\begin{definition}
The maxflow in $G_{A|B}$ is the maximum of all the edges disjoint paths in $\tilde{\mathcal P}(G_{A|B})$:
\begin{equation*}
    \mathrm{maxflow}(G_{A|B}):=\max \left\{\left| \tilde{\mathcal{P}}(G_{A|B})\right| \, :\,\text{ s.t. the paths in } \tilde{\mathcal{P}}(G_{A|B}) \text{ are edge-disjoint}\right\}. 
\end{equation*}  
\end{definition}

In the following proposition, we will give a lower bound of the extended Hamiltonian $H_{G_{A|B}}(\alpha)$ which will be saturated when the maximal flow in $G_{A|B}$ is achieved as shown in Proposition \ref{prop:min-H-max-flow}. Hastings uses similar ideas in \cite[Lemma 4]{hastings2017asymptotics} to lower bound the moments of a random tensor network map.

\begin{proposition}\label{prop: lower bound hamiltonian}
Let $\alpha\in\mathcal{S}_n^{|V|}$ and $\tilde{\mathcal{P}}(G_{A|B})$ be an arbitrary set of edge-disjoint paths in $G_{A|B}$, and set $k:=|\tilde{\mathcal P}(G_{A|B})|$, the following inequalities holds:
\begin{equation}\label{eq: flow inequality}
    H_{G_{A|B}}^{(n)}(\alpha)\geq k(n-1)+H_{G_{A|B}\setminus \bigsqcup_{i\in[k]}\pi_i}^{(n)}(\alpha)\geq k(n-1),
\end{equation}
where $H_{G_{A|B}\setminus \bigsqcup_{i\in[k]}\pi_i}^{(n)}(\alpha)$ defined as: 
\begin{equation}\label{eq: Hamiltonian in  tilde G'}
    H_{G_{A|B}\setminus \bigsqcup_{i\in[k]}\pi_i}^{(n)}(\alpha):=\sum_{x\in V_A\setminus\bigsqcup_{i\in[k]}\pi_i}|\gamma^{-1}\alpha_x|+\sum_{x\in V_B\setminus\bigsqcup_{i\in[k]}\pi_i}|\id^{-1}\alpha_x|+\sum_{x\sim y\in V_b\setminus\bigsqcup_{i\in[k]}\pi_i}|\alpha_y^{-1}\alpha_x|.
\end{equation}

\end{proposition}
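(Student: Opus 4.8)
The plan is to regard the extended Hamiltonian $H_{G_{A|B}}^{(n)}(\alpha)$ defined in \eqref{eq: Hamiltonian in the graph G tilde} as a single sum of nonnegative edge weights over the network $G_{A|B}$: to each edge $e$ of $G_{A|B}$ we attach the Cayley distance $|\sigma_e^{-1}\tau_e|$ between the permutation values at its two endpoints, with the conventions that the source carries the value $\id$ and the sink carries the value $\gamma$. With this bookkeeping, the three sums appearing in \eqref{eq: Hamiltonian in the graph G tilde} (over $V_A$, over $V_B$, and over the bulk edges) merge into one sum over all edges $e\in\tilde E$, and the problem becomes a purely edge-based accounting question.

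First I would record the two elementary facts about the Cayley distance that drive the argument: it is a genuine metric on $\mathcal S_n$, so the triangle inequality holds (as recalled before Proposition \ref{prop: average of moments}), and $|\id^{-1}\gamma|=|\gamma|=n-1$, since $\gamma=(n\cdots 1)$ is a single $n$-cycle and hence sits at Cayley distance $n-1$ from the identity. Next I would treat one path at a time. Fixing a path $\pi_i\in\tilde{\mathcal P}(G_{A|B})$ and writing its vertices, in the natural ordering, as $\id=v_0,v_1,\dots,v_m,v_{m+1}=\gamma$ with associated permutation values $\id=\beta_0,\beta_1,\dots,\beta_m,\beta_{m+1}=\gamma$, the sum of the edge weights along $\pi_i$ telescopes under iterated application of the triangle inequality:
$$\sum_{j=0}^{m}|\beta_j^{-1}\beta_{j+1}|\;\geq\;|\beta_0^{-1}\beta_{m+1}|\;=\;|\id^{-1}\gamma|\;=\;n-1.$$
Thus every augmenting path carries total edge weight at least $n-1$, regardless of the chosen permutation values $\alpha_x$.

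The decisive step is to combine these per-path bounds without double counting, and this is precisely where the edge-disjointness hypothesis enters. Since the paths $\pi_1,\dots,\pi_k$ are pairwise edge-disjoint, their edge sets are disjoint subsets of $\tilde E$, so splitting the total edge sum into the edges lying on some $\pi_i$ and those lying on none yields
$$H_{G_{A|B}}^{(n)}(\alpha)=\sum_{i=1}^{k}\Big(\sum_{e\in\pi_i}|\sigma_e^{-1}\tau_e|\Big)+H_{G_{A|B}\setminus\bigsqcup_{i\in[k]}\pi_i}^{(n)}(\alpha).$$
Bounding each inner sum below by $n-1$ gives the first inequality, and since the residual Hamiltonian is itself a sum of Cayley distances, hence nonnegative, the second inequality $H_{G_{A|B}}^{(n)}(\alpha)\geq k(n-1)$ follows immediately.

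The only genuine subtlety — and the step I would be most careful about — is the clean edge partition in the displayed identity: it would fail if two paths shared an edge, since that edge's weight would then be counted twice and the decomposition would overcount. Edge-disjointness rules this out, guaranteeing that each edge of $\tilde E$ belongs to at most one $\pi_i$. I would stress that the paths are permitted to share \emph{vertices}; this causes no difficulty, because the Hamiltonian is organized by edges rather than vertices, so a shared vertex never produces a repeated term. Everything else reduces to the metric property of the Cayley distance applied along each path.
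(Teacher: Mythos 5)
Your proof is correct and follows essentially the same route as the paper's: the triangle inequality for the Cayley distance telescopes each path's edge weights down to $|\id^{-1}\gamma|=n-1$, edge-disjointness prevents double counting, and nonnegativity of the residual Hamiltonian gives the final bound. The only cosmetic difference is that you partition the edge set over all $k$ paths at once, whereas the paper peels off one path at a time and iterates; your version makes the role of edge-disjointness slightly more explicit, but the argument is the same.
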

One should mention that in the proposition above the sums are over $\beta\setminus \bigsqcup_{i\in[k]}\pi_i$ for $\beta \in\{V_A,V_B,V_b\}$ which are the set of the different boundary and bulk regions when one takes off all the different edges and vertices that will contribute in different paths $\pi_i\in\tilde{\mathcal P}(G_{A|B})$ in $G_{A|B}$. 

\begin{proof}
    Let the set of edge disjoint paths $\{\pi_i\}_{i\in[k]}\in \tilde{\mathcal P}(G_{A|B})$. Fix a path $\pi_i$ for a given $i\in[k]$ where: 
    \begin{equation*}
        \pi_i:\id\to\alpha_{x_1}\to\alpha_{x_2}\to\cdots\to\alpha_{x_n}\to\gamma,    \end{equation*}
        is a path that starts from $\id$ and explores $\{x_i\}_{i\in[n]}$ vertices and ends in $\gamma$.
        By using equation $\eqref{eq: Hamiltonian in the graph G tilde}$, and using the path defined above one obtains: 
        \begin{equation*}
         H_{G_{A|B}}^{(n)}(\alpha)=|\alpha_{x_1}|+\sum_{i=1}^{n-1}|\alpha_{x_i}^{-1}\alpha_{x_{i+1}}|+|\alpha_{x_n}^{-1}\gamma|+H_{G_{A|B}\setminus \pi_i}^{(n)}(\alpha)\geq n-1+ H_{G_{A|B}\setminus \pi_i}^{(n)}(\alpha),
        \end{equation*}
        where we have used the triangle inequality of the Cayley distance and $|\gamma|=n-1$. The Hamiltonian $H_{G_{A|B}\setminus\pi_i}^{(n)}(\alpha)$ is the contribution when the  path $\pi_i$ from $G_{A|B}$ is used.

        By iteration over all the edges disjoint paths $\{\pi_i\}_{i\in[k]}\in \tilde{\mathcal P}(G_{A|B})$ one obtains the desired result. The second inequality is obtained by observing that $H_{G_{A|B}\setminus\pi_i}^{(n)}(\alpha)\geq 0$, ending the proof of the proposition.
\end{proof}

\begin{proposition}
    Given a graph $G$, there exist a tuple of permutations $\alpha$ such that $H_{G_{A|B}}^{(n)}(\alpha) = \operatorname{maxflow}(G_{A|B})$.
\end{proposition}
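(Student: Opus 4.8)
The plan is to exhibit an explicit tuple $\alpha$ that saturates the lower bound of Proposition \ref{prop: lower bound hamiltonian}, thereby producing the matching upper bound and identifying the global minimum with $(n-1)\operatorname{maxflow}(G_{A|B})$ (this is the value asserted in Theorem A; the factor $n-1$ should be read into the statement). The key observation is that it suffices to search over the \emph{two-valued} assignments $\alpha_x \in \{\id, \gamma\}$. Indeed, for such an assignment every Cayley-distance term appearing in $H_{G_{A|B}}^{(n)}$ can only take the values $0$ or $|\gamma^{-1}\id| = |\gamma| = n-1$, according to whether its two endpoints carry the same label or different labels. This turns the spin Hamiltonian into a purely combinatorial quantity, namely $(n-1)$ times the number of edges whose endpoints disagree.

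Concretely, I would first fix a minimum edge cut $C$ of the network $G_{A|B}$ separating the source $\id$ from the sink $\gamma$. Such a cut induces a bipartition $\tilde V = S \sqcup T$ with $\id \in S$ and $\gamma \in T$, where $C$ is exactly the set of edges of $G_{A|B}$ having one endpoint in $S$ and one in $T$. I then set $\alpha_x := \id$ for $x \in S$ and $\alpha_x := \gamma$ for $x \in T$, consistently with the boundary values $\alpha_{\id}=\id$ and $\alpha_\gamma=\gamma$. With this choice each summand of $H_{G_{A|B}}^{(n)}(\alpha)$ equals $n-1$ precisely when its edge crosses the cut and equals $0$ otherwise, so that
\[
H_{G_{A|B}}^{(n)}(\alpha) = (n-1)\,|C| = (n-1)\operatorname{mincut}(G_{A|B}).
\]
To conclude I would invoke the max-flow min-cut theorem, equivalently the edge version of Menger's theorem already cited in the text as \cite[Theorem 8.6]{korte2011combinatorial}, which for unit edge capacities gives $\operatorname{mincut}(G_{A|B}) = \operatorname{maxflow}(G_{A|B})$. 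Hence the binary assignment above realises $H_{G_{A|B}}^{(n)}(\alpha) = (n-1)\operatorname{maxflow}(G_{A|B})$, and combined with the lower bound of Proposition \ref{prop: lower bound hamiltonian} (applied to a maximal family of edge-disjoint paths, so that $k=\operatorname{maxflow}(G_{A|B})$) this shows that the value is in fact the global minimum of $H_{G_{A|B}}^{(n)}$.

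The construction itself is short; the only genuine content is the reduction to two-valued spins, which rests on the elementary facts that $d(\id,\gamma)=n-1$ and that equal labels cost nothing. The step I expect to require the most care is the bookkeeping of the boundary terms: one must verify that $\sum_{x\in V_A}|\gamma^{-1}\alpha_x|$ counts exactly the cut edges incident to the sink and $\sum_{x\in V_B}|\id^{-1}\alpha_x|$ the cut edges incident to the source, which is precisely where the placement $\id \in S$, $\gamma \in T$ enters. Beyond this there is no optimisation difficulty, since optimality over all of $\mathcal S_n^{|V|}$ need not be argued directly: it is already delivered by the lower bound, so the binary assignment only has to \emph{achieve} the value, not be shown optimal among all permutation-valued configurations.
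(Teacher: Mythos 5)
Your construction is exactly the paper's: both take the $S\sqcup T$ bipartition furnished by a minimum cut via the max-flow min-cut theorem, assign $\alpha_x=\id$ on $S$ and $\alpha_x=\gamma$ on $T$, and observe that each term of $H_{G_{A|B}}^{(n)}$ contributes $n-1$ exactly when its edge crosses the cut, giving $(n-1)\operatorname{maxflow}(G_{A|B})$. You are also right that the factor $n-1$ is missing from the proposition as stated and must be read in; the paper's own computation confirms this.
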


\begin{proof}
    By the celebrated max-flow min-cut theorem, the maximum flow in the network is equal to its minimal cut. Recall that a \emph{cut} of a network is a partition of its set of vertices into two subsets $S \ni s$ and $T \ni t$, and the size of the cut is the number of $S-T$ edges. In our setting, the max-flow min-cut theorem (see e.g.~\cite[Theorem 8.6]{korte2011combinatorial}) implies that there exists a partition of the vertex set $\tilde V$ of $G_{A|B}$ (see \cref{def: network tilde G} into two subsets, $\tilde V = S \sqcup T$, with $\id \in S$ and $\gamma \in T$, such that 
    $$\operatorname{maxflow}(G_{A|B}) = \Big | \{ (x,y) \in \tilde E \, : \, x \in S \text{ and } y \in T\} \Big |.$$

    Define, for $x \in V$, 
    $$\alpha_x = \begin{cases}
        \id &\qquad \text{ if } x \in S\\
        \gamma &\qquad \text{ if } x \in T.
    \end{cases}$$
    Since $\id \in S$ and $\gamma \in T$, we have:
    \begin{align*}
        H_{G_{A|B}}^{(n)}(\alpha) &= \sum_{x \in V_A}|\gamma^{-1}\alpha_x|+\sum_{x\in V_B}|\id^{-1} \alpha_x|+\sum_{(x,y)\in V_b}|\alpha_x^{-1}\alpha_y| \\
        &= \sum_{\substack{x\in V_A \\ x \in S}}|\gamma^{-1}\alpha_x|+\sum_{\substack{x\in V_B \\ x \in T}}|\id^{-1} \alpha_x|+\sum_{\substack{(x,y)\in V_b \\ x \in S \text{ and } y \in T}}|
        \alpha_x^{-1}\alpha_y| \\
        &= (n-1) \Big[ |\{ (x,\cdot) \in A \, : \, x \in S\}| + |\{ (x,\cdot) \in B \, : \, x \in T\}| + \\
        &\qquad\qquad\qquad |\{ (x,y) \in E_b \, : \, x \in S \text{ and } y \in T\}| \Big] \\
        &= (n-1)\operatorname{maxflow}(G_{A|B}),
    \end{align*}
    where we have used in the last claim the fact that there are no edges between $\id$ and $\gamma$ in $\tilde E$, see \cref{fig:cut-size}. 
    \begin{figure}
        \centering
        \includegraphics{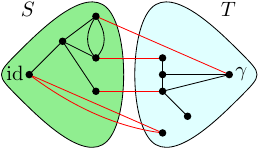}
        \caption{The different edges that correspond to a cut $S-T$ in the network $G_{A|B}$.}
        \label{fig:cut-size}
    \end{figure}
\end{proof}

\begin{proposition}\label{prop:min-H-max-flow}
For all $n \geq 1$, we have

\begin{equation*}
    \min_{\alpha \in \mathcal S_n^{|V|}} H_{G_{A|B}}^{(n)}(\alpha)=(n-1)\operatorname{maxflow}(G_{A|B}).
\end{equation*}
\end{proposition}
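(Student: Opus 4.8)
The plan is to prove the identity by establishing two matching inequalities: a lower bound $\min_\alpha H_{G_{A|B}}^{(n)}(\alpha) \geq (n-1)\operatorname{maxflow}(G_{A|B})$ and an upper bound $\min_\alpha H_{G_{A|B}}^{(n)}(\alpha) \leq (n-1)\operatorname{maxflow}(G_{A|B})$. Both halves are in fact already available from the preceding results, so the task is mainly to assemble them.

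For the lower bound, I would invoke Proposition~\ref{prop: lower bound hamiltonian} with a maximal set of edge-disjoint paths. Concretely, let $\tilde{\mathcal P}(G_{A|B})$ be a collection of edge-disjoint augmenting paths of size $k = \operatorname{maxflow}(G_{A|B})$; such a family exists by the very definition of maxflow. Then for every $\alpha \in \mathcal S_n^{|V|}$, equation~\eqref{eq: flow inequality} gives
\begin{equation*}
    H_{G_{A|B}}^{(n)}(\alpha) \geq k(n-1) + H_{G_{A|B}\setminus \bigsqcup_{i\in[k]}\pi_i}^{(n)}(\alpha) \geq k(n-1) = (n-1)\operatorname{maxflow}(G_{A|B}),
\end{equation*}
using nonnegativity of the Cayley distance for the residual Hamiltonian. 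Taking the minimum over $\alpha$ preserves this bound, which settles one direction.

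For the upper bound, I would use the explicit permutation assignment constructed in the immediately preceding (unlabeled) proposition. That argument applies the max-flow min-cut theorem to produce a vertex partition $\tilde V = S \sqcup T$ with $\id \in S$, $\gamma \in T$ whose cut size equals $\operatorname{maxflow}(G_{A|B})$, and then sets $\alpha_x = \id$ for $x \in S$ and $\alpha_x = \gamma$ for $x \in T$. The displayed computation there evaluates $H_{G_{A|B}}^{(n)}(\alpha)$ at this particular $\alpha$ to be exactly $(n-1)\operatorname{maxflow}(G_{A|B})$, since each boundary or bulk edge crossing the cut contributes $|\id^{-1}\gamma| = |\gamma| = n-1$ and all non-crossing edges contribute zero. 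Hence the minimum over $\alpha$ is at most this value.

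Combining the two bounds yields the equality. I do not anticipate a genuine obstacle here, since both pieces are furnished by the two preceding propositions; the only care needed is to confirm that the edge-disjoint family realizing the lower bound has size exactly $\operatorname{maxflow}(G_{A|B})$ (guaranteed by Menger's theorem / the definition of maxflow) and that the feasible $\alpha$ realizing the upper bound is a legitimate global configuration in $\mathcal S_n^{|V|}$ (which it is, as $\id,\gamma \in \mathcal S_n$). Thus the proof is essentially a one-line synthesis: the lower bound from the flow inequality and the matching upper bound from the cut construction close the gap exactly at $(n-1)\operatorname{maxflow}(G_{A|B})$.
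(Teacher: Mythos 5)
Your proposal is correct and follows exactly the paper's route: the paper's own proof of Proposition~\ref{prop:min-H-max-flow} is the one-line remark that it ``follows from the two previous propositions,'' namely the lower bound from Proposition~\ref{prop: lower bound hamiltonian} applied to a maximal edge-disjoint family and the matching upper bound from the max-flow min-cut construction of an explicit $\alpha$. You have simply written out the synthesis the paper leaves implicit (and, usefully, your upper-bound statement corrects the missing $(n-1)$ factor in the statement of the unlabeled proposition).
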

\begin{proof}
    This follows from the two previous propositions. 
\end{proof}

Once we identify and remove all the augmenting paths in the network $G_{A|B}$ achieving the maximal flow, we obtain a \emph{clustered graph} $G_{A|B}^c$ by identifying different remaining connected permutations.

The following example gives an illustration of the different steps described above to analyse the maxflow problem in the case of the tensor network represented in Figure \ref{fig:tensor-network-with-marginal}. 
\begin{example}
Figure \ref{fig:network} represents the network $G_{A|B}$ associated with the random tensor network from Figure \ref{fig:tensor-network-with-marginal}. The vertices in the network are valued in the permutation group $\mathcal S_n$. The network is constructed by adding two extra vertices $\gamma$ and $\id$ by connecting all the half edges in $A$ to $\gamma$ and the half edges in $B$ to $\id$. The flow approach induces a flow from $\id$ to $\gamma$ where the maximum flow in Figure \ref{fig:network} is $4$ where the augmenting paths achieving it are colored. By removing the four edge-disjoint augmenting paths we obtain the clustered graph $G_{A|B}^c$ in Figure \ref{fig:residual-network} by identifying the remaining connected edges as a single permutation cluster, i.e $\id$ with $\alpha_{15}$ to form the cluster $[\id,15]$.  
\end{example}

\begin{theorem}\label{thm:limit-moments-general}
    In the limit $D \to \infty$, we have, for all $n \geq 1$, 
    \begin{equation*}
        \lim_{D\to \infty}\E\frac{1}{D^{F(G_{A|B})}}\Big[\Tr\left(\left(D^{F(G_{A|B})-|E_{\partial}|}\,\rho_A\right)^n\right)\Big]=m_{n},
    \end{equation*}
    where $m_n$ is the number of permutations achieving the minimum of the network Hamiltonian $G_{A|B}^{(n)}$. These numbers are the moments of a probability measure $\mu_{G_{A|B}}$ which is compactly supported on $[0, +\infty)$.
    
\end{theorem}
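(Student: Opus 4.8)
The plan is to substitute the exact moment formula of Proposition~\ref{prop: average of moments} into the rescaled trace and read off which terms survive as $D\to\infty$. Writing $F:=F(G_{A|B})=\operatorname{maxflow}(G_{A|B})$ and using $\Tr\big((D^{F-|E_\partial|}\rho_A)^n\big)=D^{n(F-|E_\partial|)}\Tr(\rho_A^n)$ together with $|E|-|E_b|=|E_\partial|$, a direct computation collapses the prefactors:
\begin{equation*}
\E\frac{1}{D^{F}}\Tr\!\left(\left(D^{F-|E_\partial|}\rho_A\right)^n\right)
= D^{-F+n(F-|E_\partial|)}\sum_{\alpha\in\mathcal S_n^{|V|}} D^{\,n|E_\partial|-H_{G_{A|B}}^{(n)}(\alpha)}
= \sum_{\alpha\in\mathcal S_n^{|V|}} D^{\,(n-1)F-H_{G_{A|B}}^{(n)}(\alpha)}.
\end{equation*}
By Proposition~\ref{prop:min-H-max-flow} every exponent satisfies $(n-1)F-H_{G_{A|B}}^{(n)}(\alpha)\le 0$, with equality exactly on the minimizers. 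Since $\mathcal S_n^{|V|}$ is finite, the non-minimizing terms are $O(D^{-1})$ and only the minimizers (each contributing $1$) survive, giving
\begin{equation*}
\lim_{D\to\infty}\E\frac{1}{D^{F}}\Tr\!\left(\left(D^{F-|E_\partial|}\rho_A\right)^n\right)
=\#\{\alpha\in\mathcal S_n^{|V|}:H_{G_{A|B}}^{(n)}(\alpha)=(n-1)F\}=:m_n,
\end{equation*}
which is the asserted counting description.

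To see that $(m_n)_{n\ge1}$ is a moment sequence I would work with the positive-semidefinite matrices $M_D:=D^{F-|E_\partial|}\rho_A$ and the measures $\nu_D:=D^{-F}\sum_{\lambda_i(M_D)>0}\delta_{\lambda_i(M_D)}$. The min-cut rank bound \eqref{eq:upper-bound-rank} combined with max-flow--min-cut gives $\rank\rho_A\le D^{F}$, so each $\nu_D$ has total mass $\le1$ and $\int t^n\,\mathrm{d}\nu_D=D^{-F}\Tr(M_D^n)$ for $n\ge1$; in particular $\int t\,\mathrm{d}(\E\nu_D)\to m_1=\lim_D D^{-|E_\partial|}\E\Tr\rho_A=1$. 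Markov's inequality then makes the family $\E\nu_D$ tight, so a subsequential weak limit $\mu_+$ on $[0,\infty)$ exists with $\int t^n\,\mathrm{d}\mu_+=m_n$ for all $n\ge1$ and $\mu_+(\mathbb{R})\le1$. Padding with an atom at the origin, $\mu_{G_{A|B}}:=\mu_++\big(1-\mu_+(\mathbb{R})\big)\delta_0$, produces a probability measure on $[0,\infty)$ whose moments are precisely $m_n$ for $n\ge1$ (positivity of the Stieltjes Hankel forms is automatic, since $\sum_{i,j}\bar c_i c_j\,m_{i+j+1}=\lim_D D^{-F}\Tr\big(M_D\,|p(M_D)|^2\big)\ge0$ for every polynomial $p=\sum_i c_i t^i$).

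For compact support, and hence uniqueness of $\mu_{G_{A|B}}$, I would prove a growth bound $m_n\le C^n$ with $C$ independent of $n$. The crucial structural input is a characterization of the minimizers: saturating the triangle-inequality chain of Proposition~\ref{prop: lower bound hamiltonian} along the augmenting paths, and then on the clustered graph $G_{A|B}^c$, forces each $\alpha_x$ to lie on a geodesic between $\id$ and $\gamma$ in the Cayley graph of $\mathcal S_n$. Geodesic permutations are exactly those indexed by non-crossing partitions $\NC(n)$, of which there are $C_n\le 4^n$ (the $n$-th Catalan number), so $m_n\le|\NC(n)|^{|V|}\le 4^{\,n|V|}$ and $\limsup_n m_n^{1/n}\le 4^{|V|}<\infty$. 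A Stieltjes moment sequence with such growth is determinate with compactly supported representing measure; since two probability measures on a common compact interval with identical moments for all $n\ge1$ differ only by an atom at $0$ and both have mass $1$, they coincide, pinning down $\mu_{G_{A|B}}$ and upgrading the subsequential convergence to genuine convergence.

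I expect the main obstacle to be exactly this last structural step: showing that the configurations achieving $H_{G_{A|B}}^{(n)}(\alpha)=(n-1)F$ are precisely the geodesic (non-crossing) permutation tuples, after accounting for the augmenting-path decomposition and the clustering that produces $G_{A|B}^c$. This combinatorial identification — rather than the soft positivity, tightness, and moment-problem arguments — is what delivers the $C^n$ growth and therefore the compactness of the support; it is also the fact that later makes contact with the Catalan moments of the Mar\u{c}henko--Pastur law in the series-parallel regime.
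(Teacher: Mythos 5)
Your proposal is correct and follows essentially the same route as the paper: the counting of minimizers via Proposition~\ref{prop: average of moments} and Proposition~\ref{prop:min-H-max-flow}, the realization of $m_n$ as limiting moments of empirical eigenvalue measures whose mass is controlled by the rank bound $\rank\rho_A\le D^{F(G_{A|B})}$, tightness/Prokhorov, and the geometric bound $m_n\le \mathrm{Cat}_n^{|V|}\le 4^{n|V|}$ (coming from the fact that minimizing configurations consist of geodesic, i.e.\ non-crossing, permutations) to get compact support on $[0,+\infty)$. The only cosmetic difference is that you pad the sub-probability limit with an explicit atom at $0$, whereas the paper restricts to a $D^{F(G_{A|B})}$-dimensional subspace containing the support so that the empirical measure is a probability measure from the outset.
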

\begin{proof}
   For fixed $n$, the convergence to $m_n$, the number of minimizers of the Hamiltonian $G_{A|B}^{(n)}$, follows from \cref{prop: average of moments} and \cref{prop:min-H-max-flow}. The claim that the numbers $(m_n)_n$ are the moments of a compactly supported probability measure follows basically from Prokhorov's theorem \cite[Section 5]{billingsley2013convergence} (see also \cite[Footnote 2]{hastings2017asymptotics}). Indeed, note that, at fixed $D$, the quantity 
    $$\E\frac{1}{D^{F(G_{A|B})}}\Big[\Tr\left(\left(D^{F(G_{A|B})-|E_{\partial}|}\,\rho_A\right)^n\right)\Big]$$
    is the $n$-th moment of the empirical eigenvalue distribution of the random matrix $D^{F(G_{A|B})-|E_{\partial}|}\,\rho_A$, restricted to a subspace of dimension $D^{F(G_{A|B})}$ containing its support (this follows from the fact that $D^{F(G_{A|B})}$ is an upper bound on the rank of $\rho_A$, see \cref{eq:upper-bound-rank}). These measures have finite second moment, so the sequence (index by $D$) is tight. The limiting moments satisfy Carleman's condition since $m_n \leq \mathrm{Cat}_{n}^{|V|}$, proving that the limit measure $\mu_{G_{A|B}}$ has compact support; recall that $\mathrm{Cat}_n \leq 4^n$ is the $n$-th \emph{Catalan number}, see \cref{sec: appendix}. Since the matrix $\rho_A$ is positive semidefinite, $\mu_{G_{A|B}}$ must be supported on $[0,+\infty)$. 
\end{proof}
\begin{remark}
The obtained moments are given by a graph dependent measure. We will show in the following sections that such measures can be explicitly constructed if the partial order $G_{A|B}^o$ is series-parallel (see Section \ref{sec: entanglement and free probability theory} and Theorem \ref{Th: moments and free pro} for more details).
\end{remark}

In all that we have described above, the contribution terms at large bond dimension $D\to \infty$ of $\E\Tr(\rho_A^n)$ are the ones that minimise $H_{G_{A|B}}^{(n)}(\alpha)$. As we have shown in Proposition \ref{prop:min-H-max-flow} $H_{G_{A|B}}^{(n)}(\alpha)$ is minimized when the maximal flow is attained in $G_{A|B}$.

For later purposes, if one wants to analyse the moment of $\tilde{\rho}_A$, one should also consider the contribution of the normalisation term of $\tilde{\rho}_A$ at large bond dimension. We recall from Proposition \ref{prop: denominator contribution} the contribution of the normalisation term is given by: 
\begin{equation*}
    \E \left[(\Tr \rho_A)^n\right]=\sum_{\alpha=(\alpha_{x})\in\mathcal{S}_n^{|V|}} D^{{{n|E| - n|E_b|}}-h_{G}^{(n)}(\alpha)}, \;\forall\; n\in\mathbb N
\end{equation*}
where 
\begin{equation*}
h_{G}^{(n)}(\alpha):=\sum_{(x,\cdot) \in E_{\partial}}|\id^{-1}_x\alpha_x|+\sum_{(x,y)\in E_b}|\alpha_x^{-1}\alpha_y|.
 \end{equation*}
At large dimension $D\to \infty$, the contributed terms are given by the one that will minimize the extended Hamiltonian $h_{G_{A|B}}^{(n)}(\alpha)$ in $G_{A|B}$: 
\begin{equation*}
    h_{G_{A|B}}^{(n)}(\alpha):=\sum_{x \in V_{\partial}}|\id^{-1}\alpha_x|+\sum_{(x,y)\in V_b}|\alpha_x^{-1}\alpha_y|,
\end{equation*}
where the first some is over all the vertices $V_{\partial}$ with boundary edges, and $V_b$ are the bulk vertices. 
\begin{proposition}\label{prop: minimal of small hamiltonian}
Let $h_{G_{A|B}}^{(n)}(\alpha)$ the extended Hamiltonian in $G_{A|B}$. For all $n\geq 1$, we have: 
\begin{equation*}
\min_{\alpha\in \mathcal S_n^{|\tilde V|}}h_{G_{A|B}}^{(n)}(\alpha)=0,
\end{equation*}
achieved by identifying all the permutations with $\id$.
\end{proposition}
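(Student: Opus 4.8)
The plan is to prove both the lower bound $h_{G_{A|B}}^{(n)}(\alpha) \geq 0$ and its attainability. The lower bound is immediate: every term in $h_{G_{A|B}}^{(n)}(\alpha)$ is a Cayley distance $|\sigma^{-1}\tau|$, which is nonnegative by definition, so the entire sum is nonnegative for any choice of $\alpha \in \mathcal S_n^{|\tilde V|}$. Hence $\min_{\alpha} h_{G_{A|B}}^{(n)}(\alpha) \geq 0$.

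For attainability, I would exhibit the explicit minimizer indicated in the statement. Set $\alpha_x = \id$ for \emph{every} vertex $x \in \tilde V$. I would then substitute this choice into the expression
\begin{equation*}
    h_{G_{A|B}}^{(n)}(\alpha) = \sum_{x \in V_{\partial}} |\id^{-1}\alpha_x| + \sum_{(x,y) \in V_b} |\alpha_x^{-1}\alpha_y|.
\end{equation*}
The key observation is that, unlike $H_{G_{A|B}}^{(n)}$, the Hamiltonian $h_{G_{A|B}}^{(n)}$ has \emph{no} frustrated boundary conditions: every boundary half-edge is attached to the source $\id$ (recall that $h_G^{(n)}$ is $H_G^{(n)}$ with $A = \emptyset$, so all boundary vertices prefer $\id$). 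With $\alpha_x = \id$ throughout, each boundary term becomes $|\id^{-1}\id| = |\id| = 0$, and each bulk term becomes $|\id^{-1}\id| = 0$ as well. Therefore $h_{G_{A|B}}^{(n)}(\id, \ldots, \id) = 0$, which meets the lower bound and establishes the minimum.

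There is essentially no obstacle here, which is precisely the conceptual point worth emphasizing: because the numerator $\E[(\Tr \rho_A)^n]$ corresponds to taking $A = \emptyset$, there is no competing sink $\gamma$ pulling any vertex away from $\id$, so the all-$\id$ configuration is unfrustrated and trivially optimal. I would contrast this explicitly with Proposition \ref{prop:min-H-max-flow}, where the presence of both $\id$ (source) and $\gamma$ (sink) forces a nonzero minimum equal to $(n-1)\operatorname{maxflow}(G_{A|B})$. The only minor point to verify is that the all-$\id$ assignment is a legitimate element of $\mathcal S_n^{|\tilde V|}$ and consistent with the fixed values at the source (it is, since the source is $\id$); one may also note that for $n \geq 1$ this is the expected behavior, reflecting that the normalisation $\Tr \rho_A$ concentrates with leading exponent matching $n|E_\partial|$, which in turn underlies the convergence of the normalisation term to $1$ used in Corollary \ref{corr: average of normalisation}.
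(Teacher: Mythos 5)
Your proof is correct and in substance the same as the paper's: both establish attainability by exhibiting the all-$\id$ configuration, under which every boundary term $|\id^{-1}\id|$ and every bulk term $|\alpha_x^{-1}\alpha_y|$ vanishes. The only (minor) difference is in the lower bound: you argue directly from nonnegativity of the Cayley distance, whereas the paper notes that with $A=\emptyset$ there is no path from $\id$ to $\gamma$, so $\operatorname{maxflow}=0$ and Proposition~\ref{prop:min-H-max-flow} yields the minimum $0$ --- your route is more elementary and self-contained, though it does not touch the uniqueness of the minimizer (forced by bulk connectivity), which the paper's phrasing gestures at and which matters later for the moment count in Corollary~\ref{corr: average of normalisation}.
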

\begin{proof}
    To minimize the Hamiltonian $h_{G_{A|B}}^{(n)}(\alpha)$ we shall follow the same recipe where we connect all half edges $A$ to $\gamma$ and half edges $B$ to $\id$. However in $h_{G_{A|B}}^{(n)}(\alpha)$, all the boundary terms will be connected to $\id$, hence no path starts from $\id$ that ends in $\gamma$. By the bulk connectivity of $G$, the minimum is achieved by identifying all the permutations to $\id$, therefore by Proposition \ref{prop:min-H-max-flow} we obtain the desired result.  
\end{proof}
\begin{remark}
    In Proposition \ref{prop: minimal of small hamiltonian}, the Hamiltonian $h_{G_{A|B}}^{(n)}(\alpha)$ is obtained by tacking $A=\emptyset$ in $H_{G}^{(n)}(\alpha)$ (see equation \eqref{eq: Hamiltonian}). One should mention if $B=E_{\partial}\setminus A=\emptyset$ we will have the same form of the Hamiltonian $h_{G_{A|B}}^{(n)}(\alpha)$ where instead of all the half edges connected to $\id$ they will be all connected to $\gamma$. Therefore one deduce that there is no paths that starts from $\id$ and ends to $\gamma$, hence the minimum is $0$ achieved by identifying all the permutations with $\gamma$.  
\end{remark}
\begin{corollary}\label{corr: average of normalisation}
    For any $A\subseteq E_{\partial}$ moments of the normalisation term converges to $1$, more precisely: 
    \begin{equation*}
\lim_{D\to\infty}\E\left(\Tr\left(D^{-|E_\partial|}\rho_A\right)\right)^n=1.
    \end{equation*}
\end{corollary}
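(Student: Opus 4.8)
The plan is to start from the exact moment formula for the normalisation given in Proposition \ref{prop: denominator contribution} and extract the leading behaviour as $D \to \infty$. First I would observe that $n|E| - n|E_b| = n|E_\partial|$, since $|E| = |E_b| + |E_\partial|$, so that
$$\E\left[(\Tr\rho_A)^n\right] = \sum_{\alpha \in \mathcal S_n^{|V|}} D^{\,n|E_\partial| - h_G^{(n)}(\alpha)}.$$
Because $\Tr\left(D^{-|E_\partial|}\rho_A\right)^n = D^{-n|E_\partial|}(\Tr\rho_A)^n$, multiplying through by $D^{-n|E_\partial|}$ cancels the boundary exponent exactly, and the whole statement reduces to proving
$$\lim_{D\to\infty} \sum_{\alpha \in \mathcal S_n^{|V|}} D^{-h_G^{(n)}(\alpha)} = 1.$$

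Next I would invoke Proposition \ref{prop: minimal of small hamiltonian}, which gives $\min_\alpha h_G^{(n)}(\alpha) = 0$, attained by sending every vertex permutation to $\id$. The extra fact I need is that this minimiser is the \emph{unique} one: since $h_G^{(n)}$ is a sum of nonnegative Cayley distances, $h_G^{(n)}(\alpha) = 0$ forces $\alpha_x = \id$ on every vertex carrying a boundary half-edge and $\alpha_x = \alpha_y$ across every bulk edge. Bulk connectivity of $G$ together with the existence of at least one boundary half-edge then propagates the value $\id$ to all of $V$. Hence exactly one term in the sum has exponent $0$, contributing $D^{0} = 1$.

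Finally I would control the remaining terms. Every non-minimising configuration has $h_G^{(n)}(\alpha) \geq 1$, as Cayley distances are integers, so each such term is at most $D^{-1}$; since there are at most $(n!)^{|V|}$ configurations in total, the tail is bounded by $(n!)^{|V|}\,D^{-1}$, which vanishes as $D \to \infty$. Adding the single minimiser term to the vanishing tail yields the limit $1$.

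The computation is essentially routine once Proposition \ref{prop: minimal of small hamiltonian} is available; the only point requiring care — and the main thing I would nail down — is the \emph{uniqueness} of the all-$\id$ minimiser, which relies crucially on connectivity of the bulk graph and on $E_\partial$ being nonempty. Were there several minimisers, the limit would equal their number rather than $1$, so this step is what actually pins the normalisation constant to $1$.
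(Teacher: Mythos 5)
Your proof is correct, and it rests on the same underlying fact as the paper's: the minimum of $h_G^{(n)}$ is $0$ and is attained by a \emph{unique} configuration, namely $\alpha_x\equiv\id$. The route differs in execution. The paper's proof stays inside its max-flow machinery: it invokes Proposition \ref{prop: minimal of small hamiltonian} to get $\operatorname{maxflow}=0$, observes that the resulting clustered graph is trivial (the clusters $[\id]$ and $[\gamma]$ are the only components), and then cites Theorem \ref{thm:limit-moments-general}, whose limit $m_n$ counts the minimizers and here equals $1$. You instead work directly with the exact sum from Proposition \ref{prop: denominator contribution}: you cancel the $D^{n|E_\partial|}$ prefactor, isolate the single zero-energy term, and kill the tail with the crude but sufficient bound $(n!)^{|V|}D^{-1}$. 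Your version is more self-contained and, notably, makes explicit the uniqueness argument (propagation of $\id$ through bulk connectivity from a nonempty boundary) that the paper leaves implicit in the assertion that the residual graph is trivial; this is precisely the point that pins the limit to $1$ rather than to some larger integer, so spelling it out is a genuine improvement in rigor. The paper's route buys consistency with the general framework used for the numerator, at the cost of leaning on the heavier Theorem \ref{thm:limit-moments-general} for what is ultimately a one-line counting fact.
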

\begin{proof}
    By tacking the average as was shown in Proposition \ref{prop: denominator contribution} one obtains the Hamiltonian $h_{G}^{(n)}(\alpha)$. By the maximal flow the Hamiltonian is minimised by identifying all the permutations to the $\id$, therefore $F(G_{A|B})=0$ as was shown in Proposition \ref{prop: minimal of small hamiltonian}. Therefore by removing all the augmenting paths achieving the maximal flow the obtained residual graph is trivial with only two disjoint vertices $\gamma$ and the identity cluster $[\id]$. Hence by Theorem \ref{thm:limit-moments-general} one obtains the desired result.
\end{proof}

\section{Moment for ordered series-parallel network}\label{sec: entanglement and free probability theory}
In this section, we will introduce the notion of a series-parallel graph. This notion will allow us to compute the moment as an explicit graph-dependent measure explicitly. More precisely we will show with the help of free probability in the case of the obtained partial order $G_{A|B}^o$ is series-parallel the obtained graph-dependent measure is explicitly constructed. 

In Subsection \ref{subsec: series parallel graph} we will introduce the notion of the series-parallel graph and the associated measures. In Subsection \ref{subs: moment as graph} we will show in the case of a series-parallel partial order $G_{A|B}^o$ the moments converge to moments of a graph-dependent measure.

\subsection{Series-Parallel graph}\label{subsec: series parallel graph}
In this subsection, we introduce the notion of the series-parallel partial orders which will allow us in the following subsection to explicitly compute the moments as graph dependent measures. 

We shall start first by recalling first the notion of series-parallel partial order \cite{bechet1997complete} and giving some crucial definitions that will play an important role in all the rest of this section.

Given two partial orders $(P_i, \leq_i)$, $i=1,2$, one defines their \emph{series}, resp.~\emph{parallel}, composition as follows. The base set is $P:= P_1 \sqcup P_2$ and the order relation is: $x \leq y$ if 
\begin{itemize}
    \item $x,y \in P_i$ and $x \leq_i y$ \emph{or} $x \in P_1$ and $y \in P_2$ in the series case;
    \item $x,y \in P_i$ and $x \leq_i y$ in the parallel case.
\end{itemize}

It is more convenient for us to represent partial orders by their \emph{covering graphs}, where to a partial order $(P, \leq)$ we associate an oriented graph $G(V,E)$, with $V=P$ and $x \to y \in E$ iff $x <y$ and $\nexists z$ s.t.~$x <z<y$. We recall that we write $x < y$ to denote $x \leq y$ and $x \neq y$. The series and parallel composition for partial orders have an elegant interpretation in terms of directed graphs (or networks in this case). In what follows, we shall interchangeably use the terms partial order or partial order graph. 
\begin{definition}\cite{bechet1997complete}\label{def: series parallel network}
Let $H_1$ and $H_2$ two directed graph with there respective source $s_i$ and sink $t_i$ for $i\in\{1,2\}$. A \emph{series-parallel network} is a directed graph $G=(V,E)$ containing two distinct vertices $s \neq t \in V$, called the \emph{source} and the \emph{sink} that can be obtained recursively from the trivial network $G_{\text{triv}} = (\{s,t\}, \{\{s,t\}\})$ using the following two operations:
\begin{itemize}
        \item Series concatenation: $G=H_1 \bigSsqcup H_2$ is obtained by identifying the sink of $H_1$ with the source of $H_2$, i.e $t_1=s_2$.
        \item Parallel concatenation: $G=H_1\bigPsqcup H_2$ obtained by identifying the source and the sink of $H_1$ and $H_2$, i.e.~$s_1=s_2$ and $t_1=t_2$.
    \end{itemize}
\end{definition}

\begin{figure}[htb]
    \centering
    \includegraphics[scale=0.75]{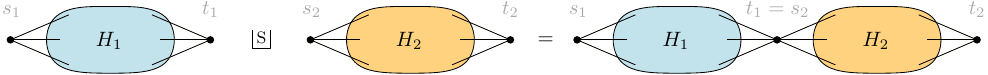}\\\bigskip
    \includegraphics[scale=0.75]{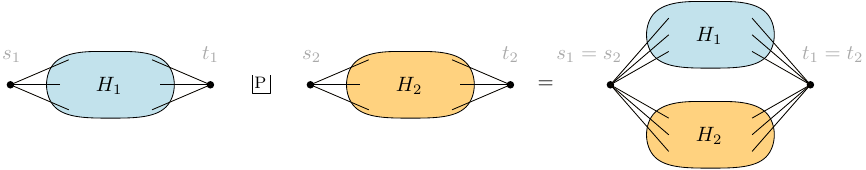}
    \caption{Series (top) and parallel (bottom) composition of two networks (graphs) $H_{1,2}$.}
    \label{fig:series-parallel-composition}
\end{figure}

\begin{remark}
Note that the parallel concatenation is a commutative operation, while the series concatenation is not, in general, commutative:
\begin{align*}
    \forall G,\,H \qquad G \bigPsqcup H &= H \bigPsqcup G \\
    \text{in general} \qquad G \bigSsqcup H &\neq H \bigSsqcup G.
\end{align*}
\end{remark}

We shall associate from a given series-parallel network different probability distributions constructed from the paralllel and the series concatenation introduced in Definition \ref{def: series parallel network}. 

\begin{definition}\label{def:series-parallel-probability-measure}
 To a series-parallel network $G$ we associate a probability measure $\mu_G$, defined recursively as follows: 
    \begin{itemize}
        \item To the trivial network $G_{\text{triv}} = (\{s,t\}, \{\{s,t\}\})$, we associate the Dirac mass at $1$:
        $$\mu_{G_{\text{triv}}} := \delta_1$$
        \item Series concatenation corresponds to the free multiplicative convolution of the parts, along with the measure $\MP$:
        $$\mu_{G \bigSsqcup H} := \mu_G \boxtimes \MP\boxtimes \mu_H$$
        \item Parallel concatenation corresponds to the classical multiplicative convolution of the parts:
        $$\mu_{G \bigPsqcup H} := \mu_G \times \mu_H.$$
    \end{itemize}
\end{definition}
In the definition above we have used the free product convolution $\boxtimes$ and the Mar\u{c}henko-Pastur distribution $\MP$. We refer to the Appendix \ref{sec: appendix} for a self-contained introduction to free probability theory.  
\subsection{Moment as graph dependent measure}\label{subs: moment as graph}
In this subsection, with the help of the series-parallel notion introduced in the previous subsection, we will show the moments $m_n$ in Theorem \ref{thm:limit-moments-general} are explicitly constructed from a graph-dependent measure in the case of the obtained partial order $G_{A|B}^o$ is series-parallel. 

Before we give the results of this subsection we recall first the different results obtained from the previous sections. From a given random tensor network as represented for an example in Figure \ref{fig:tensor-network-with-marginal}, we have computed in Section \ref{sec: moment computation} the moment for a normalised quantum state to a given subregion $A\subseteq E_{\partial}$ of the graph (see Propositions \ref{prop: average of moments} and \ref{prop: denominator contribution}). We approached the evaluation of the moment as $D\to\infty$ by the maximal flow approach as analysed in Section \ref{sec: Optimisation with the MF}. We have constructed from the graph $G$  the network $G_{A|B}$ by connecting each of the regions $A$ and $B$ respectively to $\gamma$ and $\id$. The flow consists of analysing the different paths starting from $\id$ and ending in $\gamma$. By taking off all the different augmenting paths achieving the maximal flow a clustered graph $G_{A|B}^c$ remains by identifying different edge-connected permutations. As represented in Figure \ref{fig:residual-network} for the clustered graph associated with the network $G_{A|B}$ in Figure \ref{fig:tensor-network-with-marginal}. With the maximal flow, we were able in Proposition \ref{prop:min-H-max-flow} which allows us to show the convergence of moments given by a graph dependent measure $\mu_{G_{A|B}}$ as shown in Theorem \ref{thm:limit-moments-general}. Moreover from Proposition \ref{prop: minimal of small hamiltonian} one deduce in Corollary \ref{corr: average of normalisation} that the normalisation terms converge to $1$. 

From the clustered graph $G_{A|B}^c$, we will construct an \emph{partial order} $G_{A|B}^o$ where the vertices in $G_{A|B}^o$ are the different permutation clusters. See Figure \ref{fig:partial-order-example} for the obtained partial order $G_{A|B}^o$ to the network $G_{A|B}$ in Figure \ref{fig:tensor-network-with-marginal}. If the partial order $G_{A|B}^o$ is series-parallel (see Definition \ref{def: series parallel network}), then we will explicitly show, in the following subsections, that we have a convergence in moments of $\tilde{\rho}_A$ to an explicit partial order measure $\mu_{G_{A|B}^o}$.

The following theorem shows the convergence to a moment-dependent measure $\mu_{G_{A|B}^o}$ in case of obtained partial order $G_{A|B}^o$ is series-parallel.

\begin{theorem}\label{Th: moments and free pro}
    For any $A\subseteq E_{\partial}$, and assuming the partial order $G_{A|B}^o$ is series-parallel, then the limit measure from \cref{thm:limit-moments-general} can be explicitly constructed from the partial order:
    $$\mu_{G_{A|B}} = \mu_{G^o_{A|B}}.$$
    In particular, the moments of the reduced tensor network matrix are given by:  
    \begin{equation}\label{eq:limit-moments-rho-A}
        \lim_{D\to\infty}\frac{1}{D^{F(G_{A|B})}}\E\Tr\left(\left(D^{F(G_{A|B})-|E_\partial|}\rho_A\right)^n\right) =\int t^n\,\mathrm{d}\mu_{G_{A|B}^o}.
    \end{equation}
\end{theorem}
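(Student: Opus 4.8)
The plan is to prove the moment identity \eqref{eq:limit-moments-rho-A} by induction on the series-parallel decomposition of the poset $G_{A|B}^o$, matching at each step the combinatorial count of Hamiltonian minimizers with the moment of the recursively defined measure $\mu_{G_{A|B}^o}$. By \cref{thm:limit-moments-general}, the $n$-th moment of $\mu_{G_{A|B}}$ equals $m_n$, the number of configurations $\alpha \in \mathcal S_n^{|V|}$ achieving $\min_\alpha H_{G_{A|B}}^{(n)}(\alpha)=(n-1)\operatorname{maxflow}(G_{A|B})$; since both $\mu_{G_{A|B}}$ and $\mu_{G_{A|B}^o}$ are compactly supported, it suffices to show $m_n=\int t^n\,\mathrm d\mu_{G_{A|B}^o}(t)$ for all $n$. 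The first step is to characterize the minimizers geometrically: saturating the telescoping triangle inequalities from \cref{prop: lower bound hamiltonian} forces, along every maximal chain of $G_{A|B}^o$ from $[\id]$ to $[\gamma]$, the assigned permutations to trace out a \emph{geodesic} in the Cayley graph of $\mathcal S_n$ between $\id$ and $\gamma$, while permutations inside a common cluster must coincide. Thus $m_n$ counts geodesic-compatible labelings of the covering graph of $G_{A|B}^o$ with pinned boundary values $[\id]\mapsto\id$ and $[\gamma]\mapsto\gamma$.

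I would then run the induction along the recursive construction of \cref{def: series parallel network}. For the base case $G_{\mathrm{triv}}$ the only labeling is the forced pair $(\id,\gamma)$, giving count $1=\int t^n\,\mathrm d\delta_1$. For a parallel composition $G\bigPsqcup H$, the two sub-posets share the source $[\id]$ and sink $[\gamma]$, whose labels are pinned; the geodesic constraints on the two branches then decouple, so the joint count factorizes as the product of the two branch counts. At the moment level this is exactly $\int t^n\,\mathrm d(\mu_G\times\mu_H)=\big(\int t^n\,\mathrm d\mu_G\big)\big(\int t^n\,\mathrm d\mu_H\big)$, matching $\mu_{G\bigPsqcup H}=\mu_G\times\mu_H$.

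The heart of the argument is the series step $G\bigSsqcup H$, where the identified junction vertex $v$ carries a free permutation $\beta=\alpha_v$ that must itself be geodesic from $\id$ to $\gamma$. By Biane's bijection, such $\beta$ correspond to non-crossing partitions $\pi_\beta\in\NC(n)$, and the refinement order on $\NC(n)$ is precisely the geodesic order on $\mathcal S_n$; consequently the geodesic-compatible labelings of a series chain are enumerated by multichains in the lattice $\NC(n)$. Summing the product of the $G$-count (with sink $\beta$) and the $H$-count (with source $\beta$) over geodesic $\beta$ reproduces the non-crossing-partition moment formula for the free multiplicative convolution with the Mar\u{c}henko-Pastur law. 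Concretely, I would use that $\MP$ is the free Poisson law of rate one, with free cumulants all equal to $1$ and moments $\int t^n\,\mathrm d\MP=\mathrm{Cat}_n=|\NC(n)|$, together with the identity $\kappa_n(\nu\boxtimes\MP)=m_n(\nu)$ (checkable via the $S$-transform $S_{\MP}(z)=1/(1+z)$, and directly in low degree). Unwinding the Kreweras-complement structure in the moment-cumulant expansion of $\mu_G\boxtimes\MP\boxtimes\mu_H$ then matches it term-by-term with the geodesic sum over $\beta$, yielding $m_n(G\bigSsqcup H)=\int t^n\,\mathrm d(\mu_G\boxtimes\MP\boxtimes\mu_H)$.

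I expect this series step to be the main obstacle: the bookkeeping that identifies the symmetric-group geodesic sum over the junction permutation with the free multiplicative convolution requires carefully tracking how the non-crossing partition attached to $\beta$ interlocks, via the Kreweras complement, with the internal non-crossing structures contributed by $G$ and $H$. The base and parallel cases are essentially immediate once the geodesic characterization of minimizers is in place, so assembling the induction over the series-parallel decomposition delivers $m_n=\int t^n\,\mathrm d\mu_{G_{A|B}^o}$, hence $\mu_{G_{A|B}}=\mu_{G_{A|B}^o}$ by equality of all moments of these compactly supported measures.
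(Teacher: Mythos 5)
Your proposal follows essentially the same route as the paper: induction over the series-parallel decomposition, with the base and parallel cases handled by pinning and factorization of the geodesic constraints, and the series case handled by summing over the free junction permutation $\beta$ on the $\id$--$\gamma$ geodesic, identified with $\NC(n)$, which yields the free multiplicative convolution with $\MP$. Your treatment of the series step is in fact more detailed than the paper's (which simply asserts that the sum over $\beta \in \NC(n)$ corresponds to $\boxtimes\,\MP$, without spelling out the cumulant/Kreweras bookkeeping), but the argument is the same.
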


\begin{proof}
    All we need to show is that the numbers $m_{n,G_{A|B}^o}$ are the moments of the probability measure $\mu_{G_{A|B}^o}$. We shall prove this using the recursive structure of the series-parallel networks (see Definition \ref{def: series parallel network}) and that of the probability measure $\mu_{G_{A|B}^o}$ (see Definition \ref{def:series-parallel-probability-measure}). 

    If the partial order $G_{A|B}^o$ is trivial, it consists only of two connected components, that of the identity (the source) $[\id]$ and that of the sink, $[\gamma]$. Hence, all the permutations associated to the connected components are fixed to be either $\id$ or $\gamma$. We have thus $m_{n, G_{A|B}^o} = 1$ for all $n \geq 1$, which are the moments of the measure $\mu_{G_{A|B}^o} = \delta_1$. This shows that the claim holds for the initial case of a trivial network. 

    If the partial order $G_{A|B}^o$ is the \emph{parallel} concatenation of two networks $G_{A|B}^o = H_1 \bigPsqcup H_2$ having the same source and sink as $G_{A|B}^o$, the geodesic equalities for $G_{A|B}^o$ are the disjoint union of the geodesic equalities for the vertices in $H_1$ and those for the vertices of $H_2$. This implies in turn that, for all $n \geq 1$,
    $$m_{n, G_{A|B}^o} = m_{n,H_1} \cdot m_{n,H_2},$$
    since there is no geodesic inequality mixing vertices from $H_1$ with vertices in $H_2$. Hence, by the induction hypothesis, we have
    $$m_{n, G_{A|B}^o} = \int t^n\, \mathrm{d}\mu_{H_1} \cdot \int t^n\, \mathrm{d}\mu_{H_2} = \int t^n\, \mathrm{d}\big[\mu_{H_1} \bigPsqcup \mu_{H_2}\big] = \int t^n\, \mathrm{d}\mu_{G_{A|B}^o},$$
    proving the claim for the parallel concatenation of networks. 

    Finally, let us consider the case where the network is the \emph{series} concatenation of two networks $G_{A|B}^o = H_1 \bigSsqcup H_2$. This means that there is a connected component, call it $[\beta]$ which is common of the two networks, being the sink of $H_1$ and the source of $H_2$. All geodesic equality conditions for the $H_1$ are of the form
    $$\id \to \alpha^{(1)}_1 \to \cdots \to \alpha^{(1)}_{k_1} \to \beta,$$
    while those of $H_2$ are of the form
    $$\beta \to \alpha^{(2)}_1 \to \cdots \to \alpha^{(2)}_{k_2} \to \gamma.$$
    In particular, the geodesic equality conditions for $G_{A|B}^o = H_1 \bigSsqcup H_2$ are of the form 
    $$\id \to \alpha^{(1)}_1 \to \cdots \to \alpha^{(1)}_{k_1} \to \beta \to \alpha^{(2)}_1 \to \cdots \to \alpha^{(2)}_{k_2} \to \gamma.$$
    The variable $\beta$ is a non-constrained non-crossing partition of $[n]$, and summing over it corresponds to taking the free multiplicative convolution with respect to the Mar\u{c}henko-Pastur distribution:
    $$m_{n,G_{A|B}^o} = \sum_{\substack{\beta \in \NC(n)\\ \alpha^{(1)}_i \leq \beta \leq \alpha^{(2)}_j}} 1 = \int t^n\, \mathrm{d} \mu_{H_1} \boxtimes \MP \boxtimes \mu_{H_2} = \int t^n\, \mathrm{d} \mu_{H_1 \bigSsqcup H_2} = \int t^n\, \mathrm{d}\mu_{G_{A|B}^o},$$
    proving the final claim and concluding the proof. 
\end{proof}

\begin{example}
    As an example, let us consider the graph illustrated in \cref{fig:tensor-network-with-marginal}. As we have described in the previous sections, the dominant terms of moments in Proposition \ref{prop: average of moments} are obtained by analyzing the maximum flow in $G_{A|B}$, given in \cref{fig:network} where $\operatorname{maxflow}(G_{A|B})=4$. The partial order $G_{A|B}^o$, obtained by removing from $G_{A|B}$ the edges that participate in the maximum flow is depicted in \cref{fig:residual-network}. Using the 4 augmenting paths (displayed in colors in \cref{fig:network}), we construct the \emph{partial order} on the connected components of the partial order, that we depict in \cref{fig:partial-order-example}.
    
    \begin{figure}[htpb]
        \centering
        \includegraphics{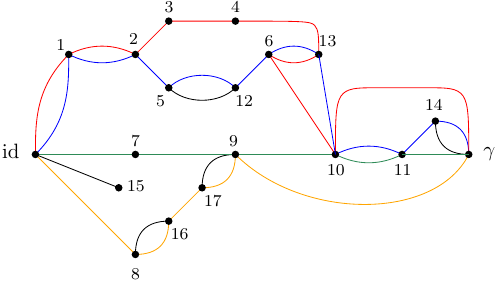}
        \caption{The network associated to the random tensor network marginal from \cref{fig:tensor-network-with-marginal}. The maximum flow of this network (with source $\id$ and sink $\gamma$) is 4, the four augmenting paths achieving this value are colored.}
        \label{fig:network}
    \end{figure} 
    
    \begin{figure}[htpb]
        \centering
        \includegraphics{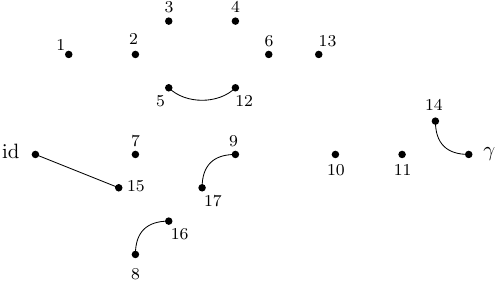}
        \caption{The clustered network corresponding to \cref{fig:network}, obtained by removing the four edge-disjoint augmenting paths.}
        \label{fig:residual-network}
    \end{figure}    
    
    \begin{figure}[htpb]
        \centering
        \includegraphics{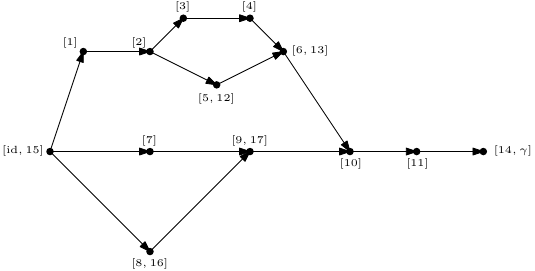} 
        \caption{The order graph corresponding to the networks from \label{fig: order graph}. The partial order relations are to be read from left to right. The elements of this partial order relation are the connected components of the clustered network from \cref{fig:residual-network}, eventually identified after taking into account the inequalities from the augmenting paths from \cref{fig:network}.}
        \label{fig:partial-order-example}
    \end{figure}    
    
    This process is fundamental in our approach, we give the details for one of these geodesics next. For example, consider the augmenting path 
    $$ \id \to 1 \to 2 \to 3 \to 4 \to 13 \to 6 \to 10 \to \gamma$$
    depicted in red in \cref{fig:network}. Since in the clustered graph from \cref{fig:residual-network} the respective pair of points $(\id, 15)$, and $14, \gamma$ are in the same connected components (clusters), this augmenting path gives rise to the following list of partial order relations: 
    $$[\id, 15] \preceq [1] \preceq [2] \preceq [3] \preceq [4] \preceq [13] \preceq [6] \preceq [10] \preceq [14,\gamma].$$
    The other three augmenting paths, depicted respectively in blue, green, and orange in \cref{fig:network}, give rise to the following list of inequalities: 
    \begin{align*}
        [\id, 15] &\preceq [1] \preceq [2] \preceq [5,12] \preceq [6] \preceq [13] \preceq [10] \preceq [11] \preceq [14,\gamma]\\
        [\id, 15] &\preceq [7] \preceq [9,17] \preceq [10] \preceq [11] \preceq [14,\gamma]\\
        [\id, 15] &\preceq [8,16] \preceq [9,17] \preceq [14,\gamma].
    \end{align*}
    The partial order depicted in \cref{fig:partial-order-example} is compiled from the set of inequalities coming from the (fixed) list of augmenting paths yielding the maximum flow (here 4). Note that, importantly, some connected components (clusters) can be identified in this partial order, due to the \emph{anti-symmetry} property $x \preceq y \text{ and } y \preceq x \implies x=y$; this happened in this example for the clusters $[6]$ and $[13]$. 
    
    As an application of Theorem \ref{Th: moments and free pro}, one can give explicit moments of the measure $\mu_{G_{A|B}^o,A}$: 
    \begin{equation}
\lim_{D\to\infty}\E D^{-4}\Tr\Big[(D^{-6}\rho_A)^n\Big]=m_{n,G_{A|B}^o}=\int x^n\,\mathrm{d}\mu_{G_{A|B}^o}.        
    \end{equation}
The powers of $D$ in the normalization follow from $|E_\partial| = 10$ (see the boundary edges in \cref{fig:tensor-network-with-marginal}) and from $\operatorname{maxflow}(G_{A|B}) = 4$. The resulting probability measure $\mu_{G_{A|B}^o}$ associated to the partial order from \cref{fig:partial-order-example} is given by:
\begin{equation}
    \mu_{G_{A|B}^o} = \left\{\left[ \MP^{\boxtimes 3} \boxtimes (\MP^{\boxtimes 2} \times \MP) \right] \times \left[ (\MP \times \MP) \boxtimes \MP \right] \right\} \boxtimes \MP^{\boxtimes 2}.
\end{equation}
The measure given above is obtained by the iterative procedure from  \cref{def:series-parallel-probability-measure} as follows. First, observe that the graph in \cref{fig:partial-order-example} can be decomposed as a \emph{series} composition of three graphs $G_1 \bigSsqcup G_2 \bigSsqcup G3$:
\begin{figure}
\centering
\includegraphics{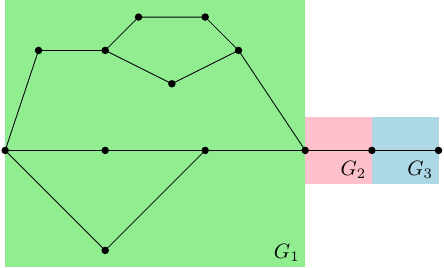}
\caption{Representation of the partial order $G_{A|B}^o$ factorises to a series combination of graphs $G_1$, $G_2$ and $G_3$.}\label{fig: G1G2G3}
\end{figure}
hence, using $\mu_{G_2} = \mu_{G_3} = \delta_1$, we have
$$ \mu_{G_{A|B}^o} = \mu_{G_1} \boxtimes \MP \boxtimes \mu_{G_2} \boxtimes \MP \boxtimes \mu_{G_3} =  \mu_{G_1} \boxtimes \MP^{\boxtimes 2}.$$
Observe now that $G_1$ is a \emph{parallel} composition of two other graphs
\begin{figure}[htb]
    \centering
    \includegraphics{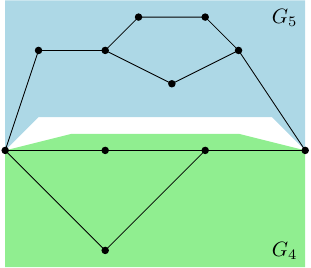}
    \caption{The graph $G_1$ is factorized to the parallel composition of $G_4$ and $G_5$.}\label{fig: G4G5}
\end{figure}
hence
$$\mu_{G_1} = \mu_{G_4} \times \mu_{G_5}.$$
Let us now analyze separately $G_4$ and $G_5$. Firstly, $G_4$ can be decomposed as a series composition between the parallel composition of $G_6$ and $G_7$, and $G_8$:
\begin{figure}    \includegraphics{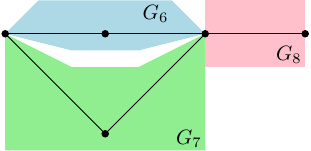}
    \caption{The graph $G_4$ is factorised to the series composition of the graph $G_8$ with the graph $G_6$ parallel to $G_7$.}
    \label{fig: G6G7G8}
\end{figure}
that is 
$$G_4 = \left( G_6 \bigPsqcup G_7 \right) \bigSsqcup G_8 \implies \mu_{G_4} = \left( \mu_{G_6} \times \mu_{G_7} \right) \boxtimes \MP \boxtimes \mu_{G_8}.$$
Now, $G_6$ and $G_7$ are series compositions of two trivial graphs, so $\mu_{G_6} = \mu_{G_7} = \MP$, while $\mu_{G_8} = \delta_1$. We have thus
$$\mu_{G_4} = \left( \MP \times \MP \right) \boxtimes \MP.$$
Let us now turn to $G_5$, which can be decomposed as follows: 
\begin{figure}
\centering
    \includegraphics{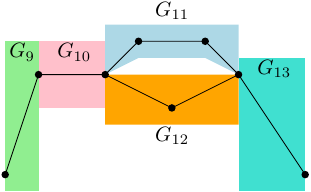}
    \caption{The graph $G_5$ factorises as a serie composition of $G_9$, $G_{10}$ with $G_{11}$ composed in parallel with $G_{12}$ and series with $G_{13}$.}\label{fig G5}
\end{figure}
that is 
$$G_5 = G_9 \bigSsqcup G_{10} \bigSsqcup \left( G_{11} \bigPsqcup G_{12} \right) \bigSsqcup G_{13}.$$
In terms of the associated probability measures, we have
$$\mu_{G_5} = \mu_{G_9} \boxtimes \MP \boxtimes \mu_{G_{10}} \boxtimes \MP \boxtimes \left( \mu_{G_{11}} \times \mu_{G_{12}} \right) \boxtimes \MP \boxtimes \mu_{G_{13}}.$$
Using iteratively series compositions, we have 
$$\mu_{G_{11}} = \MP^{\boxtimes 2} \qquad \text{and} \qquad \mu_{G_{12}} = \MP.$$
We obtain
$$\mu_{G_5} = \MP^{\boxtimes 3} \boxtimes \left( \MP^{\boxtimes 2} \times \MP\right).$$
Putting all these ingredients together, we obtained the announced formula for $\mu_{G_{A|B}^o}$.
\end{example}
\begin{remark}
In the example of the tensor network represented in Figure \ref{fig:tensor-network-with-marginal} we were able to compute the moments from the factorised series-parallel thought the flow approach. One should mention if one take the minimal cut approach to the problem, there exist minimal cuts in the network represented in Figure \ref{fig:network} do intersect, see \cref{fig:network-cuts}. Therefore we can compute the correction terms of the entropy as the moment of a given measure without any minimal cut assumption considered in previous work.

\begin{figure}[htb]
    \centering
    \includegraphics{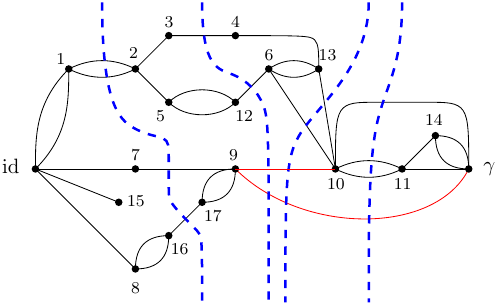}
    \caption{Different cuts of size 4 in the network of \cref{fig:network}, represented with blue, dashed lines. Notice that some of these minimal cuts (the maximum flow in the network is 4) share some edges, represented in red.}
    \label{fig:network-cuts}
\end{figure}

\end{remark}

\begin{remark}\label{rmq: measure mu ab compact support}
    The obtained measure $\mu_{G_{A|B}^o}$ for a given ordered series-parallel graph $G_{A|B}^o$ has a compact support where it combines the Mar\u{c}henko-Pastur distribution with classical product measure and free product convolution constructed from the structure of $G_{A|B}^o$.   
\end{remark}

\section{Examples of series-parallel networks}\label{sec:examples}

In this section we apply the results obtained previously for various random tensor networks having an induced series-parallel order. We start from simple cases and work our way towards more physically relevant cases. 

\subsection{Single vertex network}

We start with the simplest possible case: a tensor network having only one vertex, no bulk edges, and two boundary half-edges, see \cref{fig:single-vertex}. For this network, the associated random tensor 
$$\Psi_G \in \mathbb C^D \otimes \mathbb C^D$$
has i.i.d.~standard complex Gaussian entries. The two boundary half-edges are partitioned into two one-element sets $A$ and $B = \bar A$. From this tensor, we construct the reduced matrix 
$$\rho_A := \Tr_B \ketbra{\Psi_G}{\Psi_G}$$
obtained by partial tracing the half-edge $B$. Note that in this very simple case, the matrix $\rho_A$ can also be seen as a product of the matricization of the tensor $\Psi_G$ with its hermitian adjoint, hence $\rho_A$ is a Wishart random matrix (see \cref{sec: appendix} for the definition and basic properties of Wishart matrices). 

In order to analyze the large $D$ spectral properties of $\rho_A$, we first construct the network $G_{A|B}$, obtained by connecting all the half-edges in $G$ that belong to $A$ to a new vertex $\gamma$ and those in $B$ to a new vertex $\id$. The flow analysis of this network is trivial: there is a unique path from $\id$ to $\gamma$, hence the maximum flow is $1$ and the residual network is empty (both edges in the network have been used for the construction of the unique maximum flow). 

\begin{figure}[htb]
    \centering
    \includegraphics{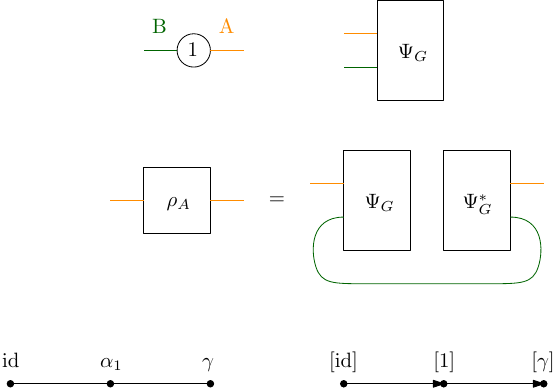}\bigskip
    \caption{A single vertex network. On the top row, we represent the network and the associated random tensor $\Psi_G$. In the middle row we represent the reduced matrix $\rho_A$, obtained by partial tracing the edge $B$ between $\Psi_G$ and $\Psi_G^*$. In the bottom row, we represent the network $G_{A|B}$ and the partial order graph $G_{A|B}^o$.}
    \label{fig:single-vertex}
\end{figure}

Since there is a unique path achieving maximum flow and a single vertex in the network, the partial order induced by the path is very simple: $\id - \alpha_1 - \gamma$. Hence, the only condition on the permutation $\alpha_1 \in \mathcal S_n$ is that it should lie on the geodesic between the identity permutation $\id$ and the full cycle permutation $\gamma \in \mathcal S_n$. We have thus a series network, see \cref{fig:single-vertex} bottom right diagram. The limit moment distribution is $\MP$, the Mar\u{c}henko-Pastur distribution (of parameter 1). This matches previously obtained results about the \emph{induced measure} of mixed quantum states (density matrices) \cite{zyczkowski2001induced,sommers2004statistical,nechita2007asymptotics}. Indeed, the matrix $\rho_A$ can be interpreted in quantum information theory as the partial trace of the rank-one matrix in the direction of a random Gaussian vector $\Psi_G \in \C^d \otimes \C^d$. Up to normalization, this random density matrix belongs to the ensemble of induced density matrices. The fact that the two factors of the tensor product have equal dimensions corresponds to taking the \emph{uniform measure} on the (convex, compact) set of density matrices \cite{zyczkowski2003hilbert, zyczkowski2011generating}. The statistics of the eigenvalues of such random matrices have been extensively studied in the literature. In particular, the asymptotic von Neumann entropy has been studied by Page \cite{page1993average,foong1994proof,sanchez-ruiz1995simple}, who conjectured that
$$\E S(\rho_A)= \sum_{i=D+1}^{2D} \frac 1 i - \frac{D-1}{2D} \sim \log D - \frac 1 2 \quad \text{ as } D \to \infty.$$
We refer to \cref{sec:Results for normalized tensor network states} for a derivation of such statistics in the context of our work.

\subsection{Series network}

Let us now consider a tensor network consisting of $s$ vertices arranged in a path graph, with two half-edges at the end points. We depict this network, as well as the various steps needed to compute the limiting spectrum distribution of the reduced matrix. The network associated to the graph (where the partition of the half-edges is clear) has a single path from the source to the sink, so the maximum flow is unity. 

\begin{figure}[htb]
    \centering
    \includegraphics{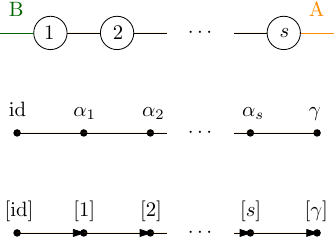}\bigskip
    \caption{A series network. The $s$ vertices of the network are arranged in a line, with two half-edges at the end points. The maximum flow is 1, and it is unique. The single path realizing the flow induces a (total) order $\alpha_1 \leq \alpha_2 \leq \cdots \leq \alpha_s$ on the geodesic permutations.}
    \label{fig:series-network}
\end{figure}

The residual graph, obtained by removing the edges from the unique path achieving maximum flow, is empty. Hence, the partial order on the vertices is again a total order: 
$$\id \preceq \alpha_1 \preceq \cdots \preceq \alpha_s \preceq \gamma.$$

We have thus a \emph{series network}, and the final measure can be obtained by applying $s$ times the series concatenation procedure from \cref{def:series-parallel-probability-measure} to obtain 
$$\mu_{G_{A|B}^o} = \underbrace{\MP \boxtimes \MP \boxtimes \cdots \boxtimes \MP}_{s \text{ times}} = \MP^{\boxtimes s}.$$

Let us note that very similar results were previously obtained by Cécilia Lancien \cite{lancien}, see also \cite{cheng2024random}. This measure is commonly know as the \emph{Fuss-Catalan} distribution of order $s$ \cite{banica2011free}, see also \cref{th: Mp law and FC}. Its moments are known in combinatorics as the Fuss-Catalan numbers: 
$$\int t^n \, \mathrm{d} \MP^{\boxtimes s}(t) = \frac{1}{sn + 1} \binom{sn + n}{n}$$
and its entropy is \cite[Proposition 6.2]{collins2010randoma}
$$\int - t \log t  \, \mathrm{d} \MP^{\boxtimes s}(t) = \sum_{i=2}^{s+1} \frac 1 i.$$
Such tensor network states have already been considered in quantum information theory \cite{collins2010randoma,collins2013area,zyczkowski2011generating}

\subsection{2D lattice}

We now discuss a physically relevant network: a rectangle that is part of a $2D$ lattice (part of $\mathbb Z^2$). We have thus two integer parameters, the length $L$ and the height $H$ of the rectangle, and $H \cdot L$ vertices. The vertices are connected by the edges inherited from the $\mathbb Z^2$ lattice, see \cref{fig:2D-lattice-network}. The left-most (resp.~right-most) columns of vertices have half-edges that belong to the class $B$ (resp.~A) of the half-edge partition defining the two regions. 

\begin{figure}[htb]
    \centering
    \includegraphics{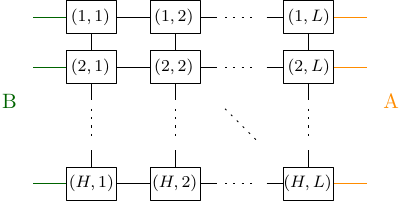}\bigskip
    \caption{A network corresponding to a $H \times L$ 2D lattice. Half-edges are attached to vertices on the left and right boundary, corresponding to a tensor in $(\mathbb C^D)^{\otimes 2H}$.}
    \label{fig:2D-lattice-network}
\end{figure}

The flow network corresponding to the graph and the partition $A|B$ is depicted in \cref{fig:2D-lattice-flow-order}, top diagram. The maximum flow in this network is $H$: one can consider $H$ parallel horizontal paths which go from $\id$ to $\gamma$. Note that the set of $H$ edge-disjoint paths in the network achieving the maximum flow is unique. The residual network is non-empty in this case, with $H$ clusters of the form 
$$ \mathcal C_j := \{[i,j] \, : \, i =1, \ldots, L\}.$$ 
The order relation on the clusters is again a total order on $L$ points, see \cref{fig:2D-lattice-flow-order}, bottom diagram. We are thus recovering again the Fuss-Catalan distribution: 
$$\mu_{G_{A|B}^o} = \MP{^\boxtimes L}.$$

\begin{figure}[htb]
    \centering
    \includegraphics{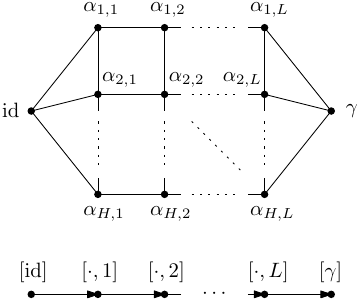}\bigskip
    \caption{The newtwork associated to the $H \times L$ $\mathbb Z^2$ lattice fragment. The maximum flow in this network is $H$, corresponding to horizontal edge-disjoint paths. These $H$ paths induce a total order on the $L$ vertex clusters, each having $H$ vertices.}
    \label{fig:2D-lattice-flow-order}
\end{figure}

\section{Results for normalized tensor network states}\label{sec:Results for normalized tensor network states} 
In this section, we will give our main technical contribution. With the help of all the results obtained from the previous sections, we will be able in this section to compute the R\'enyi and von Neumann entropy for a given approximated normalised state $\tilde{\rho}_A:=D^{-|E_{\partial}|}\rho_A$ associated to a given boundary subregion $A\subseteq E_{\partial}$. The main results of this section consist first on showing the weak convergence of moments associated to an approximated reduced state $\tilde{\rho}_A$ associated with a given boundary region $A$ in Theorem \ref{th:weack convergence}. Moreover we will show in Corollary \ref{corr: average entropie scaling} the existence of correction terms as moments of a graph-dependent measure which can be explicitly computed in the case of an obtained series-parallel partial order $G_{A|B}^o$. 

In Subsection \ref{subsec: concentration} we will show different concentration inequalities, which will allows us in Subsection \ref{subsec: entnaglement} to give the main results of this section.
\subsection{Concentration}\label{subsec: concentration}
In this subsection, we will give different concentration results that will allows us in the following subsection to give our main technical contribution. 

First, we recall the following theorem that estimates the deviation probability of polynomials in Gaussian random variables. This theorem will be relevant for different concentration results that we will proof in the rest of this subsection.
\begin{theorem}\label{Th: concentration poly gaussian}
    Let $g$ be a polynomial in $m$ variables of degree $q$. Then, if $G_1,\cdots,G_m$ are independent centered Gaussian variables, 
    \begin{equation*}
        \forall t>0,\quad \mathbb P\left(\big|g(G_1,\cdots,G_m)-\E \,g\big|>t(\operatorname{Var}(g))^{\frac{1}{2}}\right)\leq \exp{\left(-c_q\,t^{\frac{2}{q}}\right)},
    \end{equation*}
    where $V(g)$ is the variance of $g(G_1,\cdots,G_m)$ and $c_q$ is a constant which depends only on $q$.
\end{theorem}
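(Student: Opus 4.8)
The plan is to obtain the tail bound by the moment method, the essential ingredient being a hypercontractive control on the $L^p$-norms of a Gaussian polynomial. First I would reduce to the normalized case: replacing $g$ by $(g - \E g)/(\operatorname{Var}(g))^{1/2}$ yields a polynomial of the same degree $q$ with $\E g = 0$ and $\E g^2 = 1$, so that the claim becomes $\mathbb P(|g| > t) \le \exp(-c_q t^{2/q})$.

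The key analytic input is the moment comparison: for every $p \ge 2$,
\begin{equation*}
\left(\E |g|^p\right)^{1/p} \le (p-1)^{q/2}\,\left(\E g^2\right)^{1/2} = (p-1)^{q/2}.
\end{equation*}
I would derive this from Nelson's hypercontractive inequality for the Ornstein--Uhlenbeck semigroup $(P_s)_{s \ge 0}$, which states that $\|P_s h\|_p \le \|h\|_2$ whenever $e^{2s} \ge p-1$. Decomposing $g = \sum_{k=0}^q g_k$ into its Wiener--It\^o chaoses and setting $s = \tfrac12\log(p-1)$ together with $h = \sum_{k=0}^q e^{ks} g_k$, one has $P_s h = g$ and $\|h\|_2^2 = \sum_k e^{2ks}\|g_k\|_2^2 \le e^{2qs}\|g\|_2^2$ because $g$ has degree at most $q$. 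Nelson's inequality then gives $\|g\|_p = \|P_s h\|_p \le \|h\|_2 \le e^{qs}\|g\|_2 = (p-1)^{q/2}\|g\|_2$, which is the displayed bound. This hypercontractivity estimate is the substantive part of the argument, and I would quote it as an external ingredient.

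Given the moment bound, the tail estimate is a Markov-plus-optimization computation. For any $p \ge 2$,
\begin{equation*}
\mathbb P(|g| > t) \le t^{-p}\,\E|g|^p \le t^{-p}(p-1)^{pq/2} = \exp\!\left(-p\log t + \tfrac{pq}{2}\log(p-1)\right).
\end{equation*}
Choosing $p - 1 = t^{2/q}/e$ (which satisfies $p \ge 2$ once $t$ exceeds a constant depending only on $q$) collapses the exponent to $-\tfrac{q}{2}(t^{2/q}/e + 1) \le -\tfrac{q}{2e}\,t^{2/q}$, yielding the claim with $c_q = q/(2e)$ in this regime. For the remaining bounded range of $t$ the probability is trivially at most $1$, and one combines the two regimes into a single inequality $\mathbb P(|g|>t) \le \exp(-c_q t^{2/q})$ after shrinking $c_q$ to a value still depending only on $q$.

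The main obstacle is the hypercontractive moment inequality: once it is in hand, everything else is routine. Proving it from first principles rests on Nelson's hypercontractivity (equivalently, the Gaussian logarithmic Sobolev inequality), which carries all the probabilistic content; alternatively one may invoke a ready-made polynomial concentration inequality from the literature, in which case the theorem is essentially a citation.
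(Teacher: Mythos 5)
The paper does not actually prove this statement: it is ``recalled'' as a known deviation inequality for Gaussian polynomials, so there is no internal proof to compare against, and you are supplying the standard argument. The substantive part of what you wrote is correct: after centering and normalizing, the hypercontractive moment bound $\|g\|_p \le (p-1)^{q/2}\|g\|_2$ follows exactly as you describe from Nelson's theorem applied to $h=\sum_k e^{ks}g_k$ with $s=\tfrac12\log(p-1)$ (using that $P_s$ acts on the $k$-th chaos by $e^{-ks}$ and that a degree-$q$ polynomial lives in chaoses of order at most $q$), and the Markov-plus-optimization step with $p-1=t^{2/q}/e$ correctly yields $\mathbb P(|g|>t)\le\exp\big(-\tfrac{q}{2e}\,t^{2/q}\big)$ for all $t\ge e^{q/2}$.

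The genuine gap is the final patching step for small $t$. Since $\exp(-c_q t^{2/q})<1$ for every $t>0$ and every $c_q>0$, the trivial bound $\mathbb P(|g|>t)\le 1$ never suffices on the bounded range, and shrinking $c_q$ cannot repair this. In fact the inequality as literally stated fails for small $t$ when $q\ge 3$: take $g=G_1+\delta\,(G_2^q-\E G_2^q)$ with $\delta$ small, a genuine degree-$q$ polynomial whose law has a bounded density near its mean, so that $\mathbb P\big(|g-\E g|\le t(\operatorname{Var} g)^{1/2}\big)=O(t)=o(t^{2/q})$ as $t\to 0$, whereas the claimed bound would force this probability to be at least $1-\exp(-c_q t^{2/q})\sim c_q t^{2/q}$. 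The correct formulations in the literature (Janson's \emph{Gaussian Hilbert Spaces}, Thm.~6.7, or Lata\l a's estimates) either carry a multiplicative prefactor $C_q$ in front of the exponential or restrict to $t\ge t_0(q)$. This is as much a defect of the statement being proved as of your proof, and it is harmless downstream: in Propositions \ref{prop: concentration trace} and \ref{prop:concentration moments} the theorem is only invoked with an effective $t$ that grows polynomially in $D$, where your large-$t$ bound applies verbatim.
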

\begin{proposition}\label{prop: concentration trace}
Let $G$ a bulk connected graph and let $A\subseteq E_{\partial}$ then: 
\begin{equation*}
    \mathbb P\Big(\Big|\Tr \tilde{\rho}_A-1\Big|>\epsilon\Big)\leq\exp{\left(-c_{|E|}\epsilon^{\frac{1}{|E|}}D^{\frac{|E_b|}{2|E|}}\right)}, 
\end{equation*}
where $\tilde{\rho}_A:=D^{-|E_{\partial}|}\rho_A$.
\end{proposition}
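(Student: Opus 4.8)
The plan is to recognize that $\Tr\tilde\rho_A$ is a low-degree polynomial in the underlying Gaussian coordinates and then invoke the Gaussian polynomial concentration inequality of \cref{Th: concentration poly gaussian}. Since $\Tr\tilde\rho_A = D^{-|E_\partial|}\Tr\rho_A = D^{-|E_\partial|}\langle\psi_G|\psi_G\rangle$ and the tensor $\ket{\psi_G}$ depends multilinearly on the vertex vectors $(g_x)_{x\in V}$ (one factor per vertex), the quantity $\langle\psi_G|\psi_G\rangle$ is a polynomial of degree $2|V|$ in the real and imaginary parts of the Gaussian entries. Writing these entries as independent centered real Gaussians $G_1,\dots,G_m$, I would set $g := \Tr\tilde\rho_A = g(G_1,\dots,G_m)$, a polynomial of degree $q \le 2|V| \le 2|E|$ (the last inequality uses bulk-connectivity of $G$, which forces $|V|\le|E_b|+1\le|E|$). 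This lets me apply \cref{Th: concentration poly gaussian} with the degree bound $q=2|E|$.

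Next I would pin down the two inputs the bound needs: the mean and the variance of $g$. The mean is immediate from the $n=1$ instance of \cref{prop: denominator contribution}: since $\mathcal S_1$ is trivial, $\E[\Tr\rho_A]=D^{|E|-|E_b|}=D^{|E_\partial|}$, hence $\E[g]=1$, exactly the centering value in the statement. The crux is the variance. Using the $n=2$ instance of \cref{prop: denominator contribution},
$$\E\big[(\Tr\rho_A)^2\big]=\sum_{\alpha\in\mathcal S_2^{|V|}} D^{\,2|E_\partial|-h_G^{(2)}(\alpha)},$$
and subtracting $(\E[\Tr\rho_A])^2 = D^{2|E_\partial|}$ (the all-identity configuration, with $h_G^{(2)}=0$), the leading power cancels and one is left with
$$\operatorname{Var}(g)=D^{-2|E_\partial|}\operatorname{Var}(\Tr\rho_A)=\sum_{\alpha\in\mathcal S_2^{|V|},\,\alpha\neq\id}D^{-h_G^{(2)}(\alpha)}.$$
I would then bound this sum by its dominant term, governed by the minimal positive value of the normalization Hamiltonian $h_G^{(2)}$; identifying this minimal energy with the bulk contribution yields the estimate $\operatorname{Var}(g)\le C\,D^{-|E_b|}$.

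Finally, I would feed the mean and variance into \cref{Th: concentration poly gaussian}. Choosing $t=\epsilon/\sqrt{\operatorname{Var}(g)}$ gives
$$\mathbb P\big(|\Tr\tilde\rho_A-1|>\epsilon\big)\le \exp\!\big(-c_q\,t^{2/q}\big)=\exp\!\big(-c_{2|E|}\,\epsilon^{1/|E|}\,\operatorname{Var}(g)^{-1/(2|E|)}\big),$$
and substituting $\operatorname{Var}(g)\le C\,D^{-|E_b|}$ produces the claimed bound $\exp(-c_{|E|}\,\epsilon^{1/|E|}\,D^{|E_b|/(2|E|)})$ after absorbing constants into $c_{|E|}$.

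The main obstacle is the variance estimate: although the $n=2$ moment formula reduces $\operatorname{Var}(g)$ to the explicit sum $\sum_{\alpha\neq\id}D^{-h_G^{(2)}(\alpha)}$, controlling its decay exponent requires understanding the minimal nonzero value of $h_G^{(2)}$ over all nontrivial spin configurations, i.e.\ a min-cut computation on the normalization network in which every boundary half-edge is attached to $\id$. Pinning this exponent down to $|E_b|$, rather than to a smaller, graph-dependent bottleneck (flipping a single low-degree vertex already costs only its degree), is the delicate point, and is precisely where the bulk-connectivity hypothesis must enter.
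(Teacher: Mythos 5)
Your overall route is the same as the paper's: view $\Tr\tilde\rho_A$ as a bounded-degree polynomial in the Gaussian coordinates, compute $\E[\Tr\tilde\rho_A]=1$, estimate the variance, and feed $t=\epsilon/\sqrt{\operatorname{Var}}$ into \cref{Th: concentration poly gaussian}. The only difference in presentation is that you extract the variance from the $n=2$ case of \cref{prop: denominator contribution}, whereas the paper redoes the $n=2$ Wick computation by hand with $\E[\ketbra{g_x}{g_x}^{\otimes 2}]=\id_x+F_x$; these are the same calculation.

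The gap you flag at the variance step is genuine, and you have diagnosed it correctly --- in fact the same gap is present, unacknowledged, in the paper's own proof. Expanding $\prod_{x\in V}(\id_x+F_x)$ produces $2^{|V|}$ terms indexed by the set $T\subseteq V$ of swapped vertices, and one finds
\begin{equation*}
\operatorname{Var}(\Tr\tilde\rho_A)=\sum_{\emptyset\neq T\subseteq V} D^{-e_\partial(T)-e_b(T,T^c)},
\end{equation*}
where $e_\partial(T)$ counts boundary half-edges attached to $T$ and $e_b(T,T^c)$ counts bulk edges crossing the cut; this is exactly your $\sum_{\alpha\neq\id}D^{-h_G^{(2)}(\alpha)}$. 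The paper keeps only the terms $T=\emptyset$ and $T=V$ and concludes $O(D^{-|E_b|})$, but the dominant nontrivial term is $D^{-c(G)}$ with $c(G):=\min_{\emptyset\neq T\subseteq V}\big[e_\partial(T)+e_b(T,T^c)\big]$, a min-cut quantity bounded above by $\min_x\deg(x)$ and by $|E_\partial|$. As you suspected, this can be strictly smaller than $|E_b|$: for the series network of \cref{fig:series-network} with $s\geq 4$ vertices one has $c(G)=2<s-1=|E_b|$, so the exponent $|E_b|$ cannot be justified, and bulk-connectivity does not rescue it --- it only guarantees $c(G)\geq 1$ (provided every vertex carries at least one edge). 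The honest conclusion is that the proposition holds with $|E_b|$ replaced by $c(G)$ in the exponent, i.e.\ $\mathbb P(|\Tr\tilde\rho_A-1|>\epsilon)\leq\exp\big(-c_{|E|}\,\epsilon^{1/|E|}D^{c(G)/(2|E|)}\big)$, which still gives concentration and suffices for everything the paper uses downstream (only $\operatorname{Var}\to 0$ is needed). So rather than trying to prove the stated exponent, you should correct it.
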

\begin{proof}
First remark that $\Tr \rho_A$ is a $2|E|$ polynomial in $\ket{g_x}\in \mathcal{H}_x$, moreover we recall that for random Gaussian vector $\ket{g_x}\in\mathcal{H}_x$ one have: 
\begin{equation*}
    \forall x\in V,\quad \E\left[\ketbra{g_x}{g_x}\right]=\id_x\quad\text{and}\quad \E\left[\ketbra{g_x}{g_x}^{\otimes 2}\right]=\id_x+F_x,
\end{equation*}
where the $\id_x$ and $F_x$ acts in all the edges of Hilbert space generating the local Hilbert space for each vertex $x$. Moreover, it is implicitly assumed that $\id_x\equiv \id_x^{\otimes 2}$ and the Swap operator $F_x$ is a unitary representation of permutation element in $\mathcal{S}_2$.

It is easy to check the variance $\operatorname{Var}(\Tr\tilde{\rho}_A)$ gives: 
\begin{equation*}
    \operatorname{Var}(\Tr\tilde{\rho}_A)=\E\left[(\Tr\tilde{\rho}_A)^2\right]-\left(\E\left[\Tr(\tilde{\rho}_A)\right]\right)^2=O\left(D^{-|E_b|}\right), 
\end{equation*}
where we have used that: 

\begin{align*}
\E\left[(\Tr\tilde{\rho}_A)^2\right]&=\Tr\left(\bigotimes_{e\in E_b}\ketbra{\Omega_e}{\Omega_e}^{\otimes 2}\bigotimes_{x\in V}\E\left[\ketbra{g_x}{g_x}^{\otimes 2}\right]\right)\\
&=1+D^{|E_{\partial}|}\prod_{e\in E_b}\Tr\left(\ketbra{\Omega_e}{\Omega_e}^{\otimes 2}\,F_e\right)\prod_{e\in E_{\partial}}\Tr\left(F_e\right)\\
&=1+O\left(D^{-|E_b|}\right),
\end{align*}
where in the last equality the bulk contribution is of $D^{-|E_b|}$ while the boundary edges contribute with $D^{|E_{\partial}|}$.
The second term of the variance is :
\begin{equation*}
\E\left[\Tr(\tilde{\rho}_A)\right]=\Tr\left(\bigotimes_{e\in E_b}\ketbra{\Omega_e}{\Omega_e}\bigotimes_{x\in V}\E\left[\ketbra{g_x}{g_x}\right]\right)=1.
\end{equation*}
By combining the variance  $\operatorname{Var}(\Tr(\tilde{\rho}_A))$ with Proposition \ref{Th: concentration poly gaussian} one have: 
\begin{equation*}
    \mathbb P\Big(\big|\Tr \tilde{\rho}_A-\E \Tr \tilde{\rho}_A\big|>\epsilon\Big)\leq\exp{\left(-c_{|E|}\epsilon^{\frac{1}{|E|}}D^{\frac{|E_b|}{2|E|}}\right)}, 
\end{equation*}
where we have defined $\epsilon:=t\left(D^{-\frac{|E_b|}{2}}\right)$ with $c_{|E|}> 0$ is a constant depending only in the total number of edges $|E|$.

\end{proof}
\begin{proposition}\label{prop:concentration moments}
Let $G$ a bulk connected graph and let $A\subseteq E_{\partial}$ we have:
    \begin{equation*}
\forall n>1,\quad \mathbb P\Big(\Big|\frac{1}{D^{F(G_{A|B})}}\Tr(\sigma_A^n)-\frac{1}{D^{F(G_{A|B})}}\E[\Tr(\sigma_A^n)]\Big|>\epsilon\Big)\leq\exp{\Big(-c_{2n|E|}D^{\frac{1}{2n|E|}}\epsilon^{\frac{1}{n|E|}}\Big)},
\end{equation*}

    where $\sigma_A:=D^{F(G_{A|B})}\tilde{\rho}_A$.
\end{proposition}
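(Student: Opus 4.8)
The plan is to mimic the strategy of Proposition~\ref{prop: concentration trace}, applying the Gaussian polynomial concentration bound of Theorem~\ref{Th: concentration poly gaussian} to the random variable
\begin{equation*}
g := \frac{1}{D^{F}}\Tr(\sigma_A^n) = D^{(n-1)F - n|E_{\partial}|}\,\Tr(\rho_A^n), \qquad F := F(G_{A|B}) = \operatorname{maxflow}(G_{A|B}),
\end{equation*}
viewed as a function of the independent Gaussian vectors $\{\ket{g_x}\}_{x\in V}$. First I would record that $g$ is a polynomial of degree $q = 2n|E|$ in the real and imaginary parts of the entries of the $\ket{g_x}$ (each of the $n$ copies of $\rho_A$ is quadratic in $\ket{\psi_G}$, which is itself multilinear in the vertex tensors), extending the degree count underlying Proposition~\ref{prop: concentration trace}. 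With $q = 2n|E|$, Theorem~\ref{Th: concentration poly gaussian} yields $\mathbb P(|g - \E g| > t\,(\operatorname{Var} g)^{1/2}) \le \exp(-c_q t^{2/q})$, and setting $\epsilon := t\,(\operatorname{Var} g)^{1/2}$ rewrites the exponent as $-c_q\,\epsilon^{1/(n|E|)}\,(\operatorname{Var} g)^{-1/(2n|E|)}$. Hence everything reduces to the single variance estimate $\operatorname{Var}(g) = O(D^{-1})$, which upon insertion produces the announced bound after absorbing constants into $c_{2n|E|}$.

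The heart of the argument is therefore this variance bound, which I would obtain by a \emph{doubled replica} computation. Writing $X := \Tr(\rho_A^n)$, the graphical Wick formula of Proposition~\ref{prop: average of moments} applied to $\E[X^2]$ produces a sum over configurations $\alpha \in \mathcal S_{2n}^{|V|}$, where the boundary region $A$ now carries the target permutation $\Gamma := \gamma^{(1)}\gamma^{(2)} \in \mathcal S_{2n}$ consisting of two disjoint $n$-cycles (one per factor $X$) and $B$ carries $\id_{2n}$; denote the resulting Hamiltonian by $\hat H_\Gamma$. The term $(\E X)^2$ is exactly the restriction of this sum to the block-diagonal configurations $\alpha = \beta \oplus \beta' \in \mathcal S_n \times \mathcal S_n \subset \mathcal S_{2n}$, for which the Cayley lengths split over the two disjoint orbits, giving $\hat H_\Gamma(\beta\oplus\beta') = H_G^{(n)}(\beta) + H_G^{(n)}(\beta')$. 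These block-diagonal contributions cancel exactly between $\E[X^2]$ and $(\E X)^2$, leaving
\begin{equation*}
\operatorname{Var}(g) = \sum_{\substack{\alpha \in \mathcal S_{2n}^{|V|}\\ \alpha \text{ not block-diagonal}}} D^{\,2(n-1)F - \hat H_\Gamma(\alpha)}.
\end{equation*}

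To conclude it suffices to show that every non-block-diagonal configuration is strictly suboptimal, i.e.\ $\hat H_\Gamma(\alpha) \ge 2(n-1)F + 1$ for such $\alpha$; since $|\mathcal S_{2n}|^{|V|}$ is a $D$-independent constant, this gives $\operatorname{Var}(g) = O(D^{-1})$ at once. The global minimum of $\hat H_\Gamma$ equals $2(n-1)F$: the lower bound follows from the flow argument of Propositions~\ref{prop: lower bound hamiltonian} and~\ref{prop:min-H-max-flow} using $|\Gamma| = 2(n-1)$ along each of the $F$ edge-disjoint augmenting paths, and it is attained by the block-diagonal min-cut assignment $\alpha_x \in \{\id,\Gamma\}$. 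The structural fact I would then establish is that \emph{all} minimizers are block-diagonal: since both boundary conditions $\id$ and $\Gamma = \gamma^{(1)}\gamma^{(2)}$ preserve the partition $\{1,\dots,n\}\sqcup\{n+1,\dots,2n\}$ and the Cayley distance is additive over the two disjoint orbits of $\Gamma$, any configuration saturating the flow lower bound must have each $\alpha_x$ preserve this partition and hence be block-diagonal. Consequently any configuration mixing the two replica blocks incurs at least one extra transposition, forcing $\hat H_\Gamma(\alpha) \ge 2(n-1)F + 1$.

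The main obstacle is precisely this last block-diagonality claim: one must argue rigorously that saturation of the max-flow lower bound for the doubled target $\Gamma$ propagates the block structure of the boundary conditions through the entire bulk of $G_{A|B}$, thereby excluding any ``crossing'' minimizer that would make $\operatorname{Var}(g) = O(1)$ and destroy concentration. Once this is in place, the remaining steps are the routine bookkeeping of substituting $\operatorname{Var}(g) = O(D^{-1})$ into the concentration exponent and redefining the constant, in direct analogy with Proposition~\ref{prop: concentration trace}.
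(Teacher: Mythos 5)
Your argument has the same skeleton as the paper's: observe that $\Tr(\sigma_A^n)$ is a Gaussian polynomial of fixed degree $2n|E|$, apply Theorem \ref{Th: concentration poly gaussian}, and reduce the whole statement to the variance estimate $\operatorname{Var}\big(D^{-F}\Tr(\sigma_A^n)\big)=O(D^{-1})$ via the substitution $\epsilon=t(\operatorname{Var} g)^{1/2}$. The one place you genuinely diverge is that the paper does not prove this variance bound at all --- it cites \cite[Lemma 14]{hastings2017asymptotics} --- whereas you attempt a self-contained doubled-replica proof. Your reduction is sound: $(\E X)^2$ is exactly the block-diagonal part of the $\mathcal S_{2n}$-sum for $\E[X^2]$ with target $\Gamma=\gamma^{(1)}\gamma^{(2)}$, the global minimum of $\hat H_\Gamma$ is $2(n-1)F$ by the flow argument with $|\Gamma|=2(n-1)$, and since the Hamiltonian is integer-valued it suffices to show that every minimizer is block-diagonal.

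The gap you flag --- propagating block-diagonality from the boundary conditions through the bulk --- is real as written (additivity of the Cayley distance over the orbits of $\Gamma$ only applies to permutations already known to preserve those orbits, so it cannot by itself force preservation), but it is closable with two observations. First, saturation of the flow lower bound forces, by telescoping the triangle inequality along each augmenting path, that every vertex on a path satisfies the geodesic equality $|\alpha_x|+|\alpha_x^{-1}\Gamma|=|\Gamma|$; the standard fact about the Cayley metric (see \cite{nica2006lectures}) is that such $\alpha_x$ has all of its cycles contained in the orbits of $\Gamma$, hence is block-diagonal. Second, saturation also forces $H^{(n)}_{G_{A|B}\setminus\bigsqcup\pi_i}(\alpha)=0$, i.e.\ all endpoints of residual edges carry equal permutations, so each connected component of the residual graph is constant; and since $G$ is bulk-connected, every such component contains at least one path vertex (any edge leaving a component with no path vertex would have to be a removed path edge, whose endpoints are both path vertices), so the block-diagonal value propagates to every vertex. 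With these two points your bound $\hat H_\Gamma(\alpha)\geq 2(n-1)F+1$ for non-block-diagonal $\alpha$ follows, the sum over the finitely many ($D$-independent) remaining configurations is $O(D^{-1})$, and the rest of your proof matches the paper's.
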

\begin{proof}
  The proof of this proposition follows the same proof spirit of the proposition above. Remark that $\Tr\sigma_A^n$ is a $2n|E|$ polynomial in $\ket{g_x}$. Moreover the variance was estimated in \cite[Lemma 14]{hastings2017asymptotics} where:
  \begin{equation*}
      \operatorname{Var}\left(\frac{1}{D^{F(G_{A|B})}}\Tr(\sigma_A^n)\right)=O\left(\frac{1}{D}\right).
  \end{equation*}
  By defining $\epsilon:=tD^{-\frac{1}{2}}$ we obtain the desired result.
\end{proof}
\subsection{Entanglement entropy}\label{subsec: entnaglement} In this subsection we will introduce the main technical contribution of this work. With the help of concentration results, we will first assume and work with the approximate normalised state $\tilde{\rho}_A:=D^{-|E_{\partial}|}\rho_A$. 
We will show that as $D\to\infty$ one can compute the average R\'enyi and von Neumann entanglement entropy with correction terms. In particular if the obtained partial order is series-parallel, the correction terms will be given as moment of an partial order dependent measure $\mu_{G_{A|B}^o}$.

We recall first from Subsection \ref{subsec: entanglement of a subregion} that the rank of the approximate normalised state is upper bounded by $D^{F(G_{A|B})}$. Let consider the restricted approximate normalised quantum state $\tilde{\rho}_A$ to its support and its empirical measure $\mu^{(D)}_A$ defined as: 
\begin{equation*}
    \sigma_A:=D^{F(G_{A|B})}\tilde{\rho}_A^S,\quad\text{and}\quad\mu_A^{(D)}:=\frac{1}{D^{F(G_{A|B})}}\sum_{\lambda\in \mathrm{spec}(\sigma_A)}\delta_{\lambda},
\end{equation*}
where $\tilde{\rho}_A^S$ is the reduced approximate normalised state restricted on its support. The definition of $\tilde{\rho}_A$ and the empirical measure $\mu_A^{(D)}$ will allow us to show in Theorem \ref{th:weack convergence} the weak convergence of $\mu_{A}^{(D)}$ to $\mu_{G_{A|B}}$. In particular if the obtained partial order $G_{A|B}^o$ is series-parallel from Theorem \ref{Th: moments and free pro} one will have weak convergence to $\mu_{G_{A|B}}$. This result will allow us in Corollary \ref{corr: average entropie scaling} to compute the R\'enyi and von Neumann entanglement entropy.

 Recall first, that a measure $\mu^{(D)}$ \emph{converges weakly} to a measure $\mu$ if for any continuous function $f:\R\to\R$ we have: 
\begin{equation*}
    \forall \epsilon >0,\quad \lim_{D\to\infty}\mathbb P\left(\left|\int f(t)\mathrm{d}\mu^{(D)}(t)- \int f(t)\mathrm{d}\mu(t)\right|\leq \epsilon \right)=1.
\end{equation*}
\begin{theorem}\label{th:weack convergence}
Let boundary region $A\subseteq E_{\partial}$ in the graph $G$. The empirical measure $\mu_A^{(D)}$ associated to the approximated normalised state $\sigma_A$ converges weakly to $\mu_{G_{A|B}}$. More precisely for all continuous function $f:\R\to\R$ we have:
 \begin{equation*}
    \forall \epsilon >0,\quad \lim_{D\to\infty}\mathbb P\left(\left|\int f(t)\mathrm{d}\mu^{(D)}_A(t)- \int f(t)\mathrm{d}\mu_{G_{A|B}}(t)\right|\leq \epsilon \right)=1.
\end{equation*}
\end{theorem}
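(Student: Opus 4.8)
The plan is to prove the weak convergence in probability by the \emph{method of moments}, feeding the convergence of the expected moments from \cref{thm:limit-moments-general} into the concentration estimate of \cref{prop:concentration moments}. Write $F:=F(G_{A|B})$ and recall that $\mu_A^{(D)}$ is a probability measure on $[0,+\infty)$ (the empirical spectral distribution of $\sigma_A$ on the $D^{F}$-dimensional space containing its support), whose $n$-th moment is $\int t^n\,\mathrm d\mu_A^{(D)}(t)=D^{-F}\Tr(\sigma_A^n)$. First I would establish that each individual moment converges in probability to $m_n$, the $n$-th moment of $\mu_{G_{A|B}}$. \cref{thm:limit-moments-general} gives $\E\int t^n\,\mathrm d\mu_A^{(D)}\to m_n$, while \cref{prop:concentration moments} bounds the deviation of $D^{-F}\Tr(\sigma_A^n)$ from its mean by $\exp\!\big(-c_{2n|E|}D^{1/(2n|E|)}\epsilon^{1/(n|E|)}\big)$, which tends to $0$ for every fixed $n\ge 1$ and $\epsilon>0$. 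Combining these, $\int t^n\,\mathrm d\mu_A^{(D)}\to m_n$ in probability, and therefore $\int p\,\mathrm d\mu_A^{(D)}\to\int p\,\mathrm d\mu_{G_{A|B}}$ in probability for \emph{every} polynomial $p$, being a finite linear combination of such moments.

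Next I would promote this polynomial convergence to convergence against an arbitrary bounded continuous $f$. By \cref{thm:limit-moments-general} the limit $\mu_{G_{A|B}}$ is compactly supported; quantitatively, $m_n\le\mathrm{Cat}_n^{|V|}\le 4^{n|V|}$ forces $\operatorname{supp}\mu_{G_{A|B}}\subseteq[0,R_0]$ with $R_0:=4^{|V|}$. Fixing any $R>R_0$ and $\eta>0$, Weierstrass approximation produces a polynomial $p$ with $\sup_{[0,R]}|f-p|<\eta$, and I would bound
\begin{align*}
\Big|\int f\,\mathrm d\mu_A^{(D)}-\int f\,\mathrm d\mu_{G_{A|B}}\Big|
&\le \int|f-p|\,\mathrm d\mu_A^{(D)} \\
&\quad+\Big|\int p\,\mathrm d\mu_A^{(D)}-\int p\,\mathrm d\mu_{G_{A|B}}\Big|
+\int|f-p|\,\mathrm d\mu_{G_{A|B}}.
\end{align*}
The middle term vanishes in probability by the previous step; the last term is at most $\eta$ because $\mu_{G_{A|B}}$ is carried by $[0,R]$; and the first term splits into its restriction to $[0,R]$ (bounded by $\eta$) plus a tail over $(R,\infty)$.

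The tail $\int_{(R,\infty)}|f-p|\,\mathrm d\mu_A^{(D)}$ is the main obstacle, and it is exactly where tightness of the random spectrum must be exploited: since polynomials are unbounded, moment convergence alone cannot control the contribution of large eigenvalues. I would dominate $|f-p|\le\|f\|_\infty+|p|$ and use $|p(t)|\le C_p\,t^{\deg p}$ for $t\ge R\ge 1$, so that the tail is at most $\|f\|_\infty\,\mu_A^{(D)}((R,\infty))+C_p R^{-k}\int t^{\deg p+k}\,\mathrm d\mu_A^{(D)}$ for any integer $k\ge 0$. By Markov, $\mu_A^{(D)}((R,\infty))\le R^{-2m}\int t^{2m}\,\mathrm d\mu_A^{(D)}$, which converges in probability to $R^{-2m}m_{2m}\le(4^{|V|}/R)^{2m}$, while the second piece converges in probability to $C_p R^{-k}m_{\deg p+k}\le C_p\,4^{\deg p\,|V|}(4^{|V|}/R)^{k}$. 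Because $R>4^{|V|}$ is fixed \emph{before} choosing the polynomial, both geometric factors can be driven below any threshold by taking $m$ and $k$ large, with $p$ (hence $C_p$ and $\deg p$) already fixed. Choosing the parameters in the order $\epsilon,\delta\mapsto R\mapsto\eta\ (\Rightarrow p)\mapsto m,k\mapsto D$, every term is made $<\epsilon$ with probability exceeding $1-\delta$ for $D$ large, which is precisely the claimed weak convergence in probability. The only genuinely delicate point is this bookkeeping of the order of limits, guaranteeing that the tail estimate remains uniform enough to survive the approximation of $f$ by $p$.
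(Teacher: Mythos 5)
Your proposal is correct and follows essentially the same route as the paper: convergence of the expected moments (Theorem \ref{thm:limit-moments-general}) combined with concentration of the moments (the variance bound $O(1/D)$, i.e.\ Proposition \ref{prop:concentration moments}), upgraded to weak convergence by the method of moments using the compact support of $\mu_{G_{A|B}}$. The only difference is one of detail: the paper delegates the ``moment convergence plus tightness implies weak convergence'' step to a citation, whereas you carry out the Weierstrass approximation and the Markov-inequality tail estimate explicitly (for bounded continuous $f$; the paper's later use of $t\log t$ is handled separately via Proposition \ref{prop: proposition 4.4 Ion Collins}), which is a legitimate and complete way to fill in that step.
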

\begin{proof}
As was shown in Theorem \ref{thm:limit-moments-general} the moment converges to a unique measure $\mu_{G_{A|B}}$. In the particular case of an ordered series-parallel graph $G_{A|B}^o$ we have an explicit graph dependent measure $\mu_{G_{A|B}^o}$. Recall from Theorem \ref{thm:limit-moments-general} that: 
\begin{equation*}
    \frac{1}{D^{F(G_{A|B})}}\E\Big[\Tr\left(\sigma_A^n\right)\Big]\xrightarrow[D \to \infty]{}m_{n}=\int t^n \mathrm{d}\mu_{G_{A|B}}(t).
\end{equation*}
From standard probability theory results the convergence in probability implies weak convergence (see \cite[Theorem 25.2]{billingsley2012probability}. For that one needs only to show the decreasing scaling of the variance as $D\to\infty$. By using \cite[Lemma 14]{hastings2017asymptotics} that: 
\begin{equation*}
    \operatorname{Var}\left(\frac{1}{D^{F(G_{A|B})}}\Tr(\sigma_A^n)\right)=O\left(\frac{1}{D}\right),\quad(D\to\infty),
\end{equation*}
hence the weak convergence of $\mu_A^{(D)}$ to $\mu_{G_{A|B}}$, in particular if the graph is series-parallel we have $\mu_{G_{A|B}^o}$.
\end{proof}

\begin{lemma}\label{lemma: concentration of log}
Let boundary region $A\subseteq E_{\partial}$ and let $m_n^{(D)}$ the moment associated to the empirical measure $\mu_A^{(D)}$ one have:
    \begin{equation*}
        \mathbb P\Big(\Big|\E\log\Big(m_{n}^{(D)}\Big)-\log\Big(\E m_{n}^{(D)}\Big)\Big|>\epsilon\Big)\xrightarrow[D \to \infty]{}1\quad\text{where}\quad m_n^{(D)}:=\frac{1}{D^{F(G_{A|B})}}\E\Big[\Tr\left(\sigma_A^n\right)\Big].
    \end{equation*}
\end{lemma}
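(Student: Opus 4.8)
The plan is to treat $m_n^{(D)} = D^{-F(G_{A|B})}\Tr(\sigma_A^n)$ as the \emph{random} $n$-th moment of the empirical measure $\mu_A^{(D)}$, to recall from \cref{thm:limit-moments-general} that its mean converges, $\E m_n^{(D)} \to m_n := \int t^n\,\mathrm d\mu_{G_{A|B}}(t) > 0$ (the limit being strictly positive since $\mu_{G_{A|B}}$ is a probability measure not concentrated at the origin), and to transfer the concentration of $m_n^{(D)}$ around $\E m_n^{(D)}$ through the logarithm. The goal is the concentration of $\log m_n^{(D)}$ at $\log\E m_n^{(D)}$ and, as a consequence, $\E\log m_n^{(D)} - \log\E m_n^{(D)} \to 0$; this is precisely the input needed to upgrade the moment convergence of \cref{thm:limit-moments-general} into the R\'enyi entropy asymptotics of \cref{corr: average entropie scaling}, since $S_n(\tilde\rho_A) = F(G_{A|B})\log D + \tfrac{1}{1-n}\log m_n^{(D)}$.

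First I would fix $\epsilon > 0$ and invoke \cref{prop:concentration moments}, which gives, for every $\delta > 0$,
\[
\mathbb P\Big(\big|m_n^{(D)} - \E m_n^{(D)}\big| > \delta\Big) \leq \exp\Big(-c_{2n|E|}\,D^{\frac{1}{2n|E|}}\,\delta^{\frac{1}{n|E|}}\Big) =: p_D(\delta) \xrightarrow[D\to\infty]{} 0.
\]
Because $\E m_n^{(D)} \to m_n > 0$, there is $c_0 > 0$ with $\E m_n^{(D)} \geq c_0$ for all large $D$. On the event $\{|m_n^{(D)} - \E m_n^{(D)}| \leq \delta\}$ with $\delta < c_0/2$, both $m_n^{(D)}$ and $\E m_n^{(D)}$ lie in the interval $[\,c_0/2,\ \E m_n^{(D)}+\delta\,]$, on which $\log$ is Lipschitz with constant $2/c_0$; hence $|\log m_n^{(D)} - \log\E m_n^{(D)}| \leq (2/c_0)\delta$. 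Taking $\delta = \epsilon c_0/2$ yields
\[
\mathbb P\Big(\big|\log m_n^{(D)} - \log\E m_n^{(D)}\big| > \epsilon\Big) \leq p_D(\epsilon c_0/2) \xrightarrow[D\to\infty]{} 0,
\]
which is the claimed concentration of $\log m_n^{(D)}$.

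To deduce the deterministic convergence $\E\log m_n^{(D)} \to \log\E m_n^{(D)}$, I would combine this with a uniform-integrability estimate. Writing $G_D := \{|m_n^{(D)} - \E m_n^{(D)}| \leq \delta\}$, the contribution of $G_D$ to $\E|\log m_n^{(D)} - \log\E m_n^{(D)}|$ is at most $(2/c_0)\delta$, while the contribution of $G_D^c$ I would control by Cauchy--Schwarz, $\E[\,|\log m_n^{(D)}|\,\mathbf 1_{G_D^c}\,] \leq \|\log m_n^{(D)}\|_{L^2}\,p_D(\delta)^{1/2}$, and let the stretched-exponential decay of $p_D(\delta)$ dominate the (at most polynomial in $\log D$) growth of the $L^2$ norm.

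The hard part will be controlling $\log m_n^{(D)}$ at its \emph{lower} end, where $\log$ is unbounded: an atypically small $m_n^{(D)}$ would make $\log m_n^{(D)}$ very negative and could spoil both the $L^2$ bound and the limit of the expectation. I would resolve this with a deterministic lower bound. Applying the power-mean inequality to the eigenvalues of the positive semidefinite matrix $\sigma_A$, whose rank is at most $r := D^{F(G_{A|B})}$ by \cref{eq:upper-bound-rank}, gives
\[
\Tr(\sigma_A^n) \geq \frac{(\Tr\sigma_A)^n}{r^{\,n-1}}, \qquad\text{hence}\qquad m_n^{(D)} = \frac{\Tr(\sigma_A^n)}{D^{F(G_{A|B})}} \geq \Big(\frac{\Tr\sigma_A}{D^{F(G_{A|B})}}\Big)^{\! n} = \big(\Tr\tilde\rho_A\big)^n,
\]
where we used $r \leq D^{F(G_{A|B})}$, so that $\log m_n^{(D)} \geq n\log\Tr\tilde\rho_A$. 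Since $\Tr\tilde\rho_A \to 1$ with stretched-exponential concentration by \cref{prop: concentration trace}, the lower tail of $\log m_n^{(D)}$ is confined, with overwhelming probability, to a bounded window; this supplies the uniform integrability required above and closes the argument.
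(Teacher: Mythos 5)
Your proposal follows the same skeleton as the paper's proof---both rest on \cref{prop:concentration moments}, on the fact that $\E m_n^{(D)}$ converges to a strictly positive limit (so $\log$ is locally Lipschitz near it), and on transferring the concentration of $m_n^{(D)}$ through the logarithm---so in that sense the route is the same and your Lipschitz estimate is just a two-sided version of the paper's one-sided bound $\log(\E m_n^{(D)}-\delta)\geq\log(\E m_n^{(D)})-\epsilon$ combined with Jensen. Where you genuinely depart from (and improve on) the paper is the passage from ``$\log m_n^{(D)}\geq\log\E m_n^{(D)}-\epsilon$ holds with high probability'' to a statement about $\E\log m_n^{(D)}$: the paper simply asserts this, ignoring that on the complementary event $\log m_n^{(D)}$ is unbounded below, whereas you supply the missing ingredient via the power-mean bound $m_n^{(D)}\geq(\Tr\tilde\rho_A)^n$ (which is correct, using $\operatorname{rank}(\sigma_A)\leq D^{F(G_{A|B})}$ from \cref{eq:upper-bound-rank}) together with \cref{prop: concentration trace}. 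The one remaining soft spot in your argument is the Cauchy--Schwarz step: you assert that $\|\log m_n^{(D)}\|_{L^2}$ grows at most polylogarithmically, but your lower bound only reduces this to controlling $\E[(\log\Tr\tilde\rho_A)^2\,\mathbf 1_{G_D^c}]$, which still requires either a small-ball estimate for $\Tr\rho_A$ near zero or a layered (peeling) application of the concentration inequality over dyadic shells $\{2^{-k-1}<\Tr\tilde\rho_A\leq 2^{-k}\}$; as written this is plausible but not proved. Since the paper's own proof omits this issue entirely, your version is, modulo that last estimate, the more complete of the two.
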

\begin{proof}
By Proposition \ref{prop:concentration moments} and Jensen's inequality that $\E\log\Big(m_{n}^{(D)}\Big)\leq\log\Big(\E m_{n}^{(D)}\Big)$. All what remains to show that $\E\log\Big(m_{n}^{(D)}\Big)\geq\log\Big(\E m_{n}^{(D)}\Big)$ holds with high probability. Fix $\epsilon>0$. From Proposition \ref{prop:concentration moments} we know that 
    \begin{equation*}
        m_n^{(D)}\geq \E m_n^{(D)}-\delta\quad\text{with}\quad 0<\delta\leq\frac{\epsilon}{\epsilon+1}\E m_n^{(D)},
    \end{equation*}
holds with probability $1-\exp{\Big(-c_{2n|E|}D^{\frac{1}{2n|E|}}\delta^{\frac{1}{n|E|}}\Big)}$. It is easy to check that the following inequalities hold: 
\begin{align*}
    \log\left(m_n^{(D)}\right)\geq \log\left(\E m_n^{(D)}-\delta\right)&=\log\left(\E m_n^{(D)}\right)+\log\left(1-\frac{\delta}{\E m_n^{(D)}}\right)\\
    &\geq \log\left( \E m_n^{(D)}\right)-\frac{\delta}{\E m_n^{(D)}-\delta}\geq \log\left(\E m_n^{(D)}\right)-\epsilon.
\end{align*}
Therefore we have that $\E\log\left(m_n^{(D)}\right)\geq \log\left(\E m_n^{(D)}\right)-\epsilon$ occurs with probability at least
\begin{equation*}
    1-\exp{\Big(-c_{2n|E|}D^{\frac{1}{2n|E|}}\delta_{\max}^{\frac{1}{n|E|}}\Big)}\quad\text{where}\quad \delta_{\max}=\frac{\epsilon}{\epsilon+1}\E m_n^{(D)}.
\end{equation*}
As $D\to \infty$, $\E m_n^{(D)}$ converges, hence $\delta_{\max}=O(1)$, showing that the probability estimate above converges to 1 and finishing the proof.
\end{proof}
We recall for completeness the following proposition from \cite{collins2011gaussianization} which will play a key role for the proof of our main result. 
\begin{proposition}\cite[Proposition 4.4]{collins2011gaussianization}\label{prop: proposition 4.4 Ion Collins}
Let $f$ be a continuous function on $\R$ with polynomial growth and $\nu_n$ a sequence of probability measures which converges in moments to a compactly supported measure $\nu$. Then $\int f\mathrm{d}\nu_n\to\int f \mathrm{d}\nu$.
    
\end{proposition}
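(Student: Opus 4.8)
The plan is to upgrade the hypothesis of moment convergence to genuine weak convergence of $\nu_n$ to $\nu$, and then to control the polynomial tails uniformly so as to pass to the limit in $\int f\,\mathrm{d}\nu_n$ despite $f$ being unbounded. First I would establish weak convergence by the classical method of moments. Since $\nu$ is compactly supported, say $\mathrm{supp}(\nu)\subseteq[-R,R]$, its moments satisfy $|\int t^k\,\mathrm{d}\nu|\leq R^k$, so $\nu$ is determined by its moment sequence (Carleman's condition holds trivially). The convergence $\int t^2\,\mathrm{d}\nu_n\to\int t^2\,\mathrm{d}\nu<\infty$ makes the second moments uniformly bounded, so by Markov's inequality $\nu_n(\{|t|>M\})\leq M^{-2}\sup_n\int t^2\,\mathrm{d}\nu_n\to 0$ as $M\to\infty$, uniformly in $n$; hence $(\nu_n)_n$ is tight. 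By Prokhorov's theorem every subsequence admits a further weakly convergent one, and any weak limit $\nu'$ has, for each fixed $k$, the same $k$-th moment as $\nu$ (combining moment convergence with uniform integrability of $t^k$, itself a consequence of the uniform boundedness of a slightly higher even moment). Since $\nu$ is moment-determinate, $\nu'=\nu$, and therefore $\nu_n\Rightarrow\nu$.

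Next I would control the polynomial growth. Writing $|f(t)|\leq C(1+|t|^p)$ and fixing an even integer $q>p$, moment convergence gives $\int t^q\,\mathrm{d}\nu_n\to\int t^q\,\mathrm{d}\nu$, so $K:=\sup_n\int t^q\,\mathrm{d}\nu_n<\infty$. Since $q$ is even, $|t|^q=t^q$, and for $M\geq 1$ and every $n$ one has $1+|t|^p\leq 2|t|^p\leq 2M^{-(q-p)}|t|^q$ on $\{|t|>M\}$, whence
\begin{equation*}
\int_{|t|>M}|f|\,\mathrm{d}\nu_n\leq C\int_{|t|>M}(1+|t|^p)\,\mathrm{d}\nu_n\leq\frac{2C}{M^{q-p}}\int t^q\,\mathrm{d}\nu_n\leq\frac{2CK}{M^{q-p}},
\end{equation*}
which tends to $0$ as $M\to\infty$ uniformly in $n$; the identical bound holds for $\nu$. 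This is the uniform-integrability estimate that lets the unbounded part of $f$ be discarded.

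Finally I would combine the two ingredients with a truncation. Given $\epsilon>0$, choose $M$ so large that the tail estimate above is below $\epsilon$ for every $n$ and for $\nu$, and pick a continuous cutoff $\chi_M$ with $\chi_M\equiv 1$ on $[-M,M]$, $\chi_M\equiv 0$ outside $[-M-1,M+1]$, and $0\leq\chi_M\leq 1$. Then $f\chi_M$ is bounded and continuous, so weak convergence yields $\int f\chi_M\,\mathrm{d}\nu_n\to\int f\chi_M\,\mathrm{d}\nu$, while $\left|\int f(1-\chi_M)\,\mathrm{d}\nu_n\right|\leq\int_{|t|>M}|f|\,\mathrm{d}\nu_n<\epsilon$, and likewise for $\nu$. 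A standard $3\epsilon$-argument then gives $\limsup_n\left|\int f\,\mathrm{d}\nu_n-\int f\,\mathrm{d}\nu\right|\leq 2\epsilon$, and letting $\epsilon\to 0$ concludes.

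The main obstacle is the uniform control of the polynomial tails: weak convergence alone is insufficient because $f$ is unbounded, and the crux is to convert moment convergence into a uniform-integrability statement for the family $\{|t|^p\}$ against the whole sequence $(\nu_n)_n$. This is exactly what the uniform bound on the higher even moment $\int t^q\,\mathrm{d}\nu_n$ supplies. It is worth stressing that the compact support of the limit $\nu$ is used twice and is indispensable: it guarantees that all moments of $\nu$ are finite and that $\nu$ is moment-determinate, without which neither the weak convergence nor the tail bound would be available.
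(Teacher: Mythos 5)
Your argument is correct and complete: the method of moments (tightness from bounded second moments, Prokhorov, moment-determinacy of the compactly supported limit via Carleman) upgrades moment convergence to weak convergence, and the uniform bound on a higher even moment supplies exactly the uniform integrability needed to truncate $f$ and pass to the limit. The paper itself does not prove this proposition --- it is imported verbatim from the cited reference --- so there is no internal proof to compare against, but your route is the standard one and every step checks out.
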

\begin{corollary}\label{corr: average entropie scaling}
Let boundary region $A\subseteq E_{\partial}$ in $G$, and let $\tilde{\rho}_A$ the approximated reduced normalised state. Then the averaged R\'enyi and von Neumann entropy converges weakly as $D\to \infty$ are given by:
\begin{align*}
        F(G_{A|B}) \log D -\E S_n(\tilde{\rho}_A) &\xrightarrow[D \to \infty]{} \frac{1}{n-1}\log\left(\int t^n\,\mathrm{d}\mu_{G_{A|B}}\right),\\
        F(G_{A|B}) \log D- \E S(\tilde{\rho}_A)&\xrightarrow[D \to \infty]{} \int t\,\log t\,\mathrm{d}\mu_{G_{A|B}}.
\end{align*}
where $F(G_{A|B}):=\operatorname{maxflow}(G_{A|B})$ .    
\end{corollary}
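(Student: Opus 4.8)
The plan is to reduce both entropies to quantities already under control: the $n$-th empirical moment $m_n^{(D)} := D^{-F}\Tr(\sigma_A^n)$ and the logarithmic integral $\int t\log t\,\mathrm{d}\mu_A^{(D)}$, where I abbreviate $F := F(G_{A|B})$ and recall that $\tilde\rho_A = D^{-F}\sigma_A$ on the support of $\sigma_A$. For the R\'enyi entropy, since $\Tr\tilde\rho_A^n = D^{-nF}\Tr\sigma_A^n = D^{(1-n)F}m_n^{(D)}$, the defining formula gives the exact identity
$$S_n(\tilde\rho_A) = \frac{1}{1-n}\log\Tr\tilde\rho_A^n = F\log D + \frac{1}{1-n}\log m_n^{(D)},$$
so that, after taking expectations, $F\log D - \E S_n(\tilde\rho_A) = \frac{1}{n-1}\E\log m_n^{(D)}$. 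For the von Neumann entropy, diagonalising $\sigma_A$ and using $\log(D^{-F}\sigma_A) = \log\sigma_A - F\log D$ on the range of $\sigma_A$ yields the exact identity
$$S(\tilde\rho_A) = F\log D\cdot\Tr\tilde\rho_A - \int t\log t\,\mathrm{d}\mu_A^{(D)}(t).$$

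The remaining work is to pass to the limit in each right-hand side. For the R\'enyi case I would invoke Theorem \ref{thm:limit-moments-general}, which gives $\E m_n^{(D)} \to m_n = \int t^n\,\mathrm{d}\mu_{G_{A|B}}$, together with Lemma \ref{lemma: concentration of log}, which controls the gap between $\E\log m_n^{(D)}$ and $\log\E m_n^{(D)}$: combining Jensen's inequality (one direction) with the concentration of $m_n^{(D)}$ from Proposition \ref{prop:concentration moments} (the other) shows that $\E\log m_n^{(D)} - \log\E m_n^{(D)} \to 0$. Since $m_n > 0$ (the limit measure is not $\delta_0$) and $\log$ is continuous, this gives $\E\log m_n^{(D)} \to \log m_n$, which is the first stated limit.

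For the von Neumann case the key observation is that $\E\Tr\tilde\rho_A = 1$ \emph{exactly}: this is the $n=1$ instance of Proposition \ref{prop: denominator contribution}, which gives $\E\Tr\rho_A = D^{|E_\partial|}$. Taking expectations in the identity above, the boundary term becomes $F\log D\cdot\E\Tr\tilde\rho_A = F\log D$ with no error, so
$$F\log D - \E S(\tilde\rho_A) = \E\int t\log t\,\mathrm{d}\mu_A^{(D)}(t) = \int t\log t\,\mathrm{d}(\E\mu_A^{(D)})(t).$$
The averaged measure $\E\mu_A^{(D)}$ has $n$-th moment $\E m_n^{(D)} \to m_n$ for every $n\geq 1$, so, after completing it to a probability measure by an atom at $0$ (which affects neither these moments nor the integral of $t\log t$, both vanishing at $0$), it converges in moments to the compactly supported measure $\mu_{G_{A|B}}$. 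As $f(t) = t\log t$ is continuous with polynomial growth, Proposition \ref{prop: proposition 4.4 Ion Collins} yields $\int t\log t\,\mathrm{d}(\E\mu_A^{(D)}) \to \int t\log t\,\mathrm{d}\mu_{G_{A|B}}$, the second stated limit. In the series-parallel case one replaces $\mu_{G_{A|B}}$ by the explicit measure $\mu_{G_{A|B}^o}$ of Theorem \ref{Th: moments and free pro}.

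I expect the main obstacle to be the R\'enyi step, namely the interchange of expectation and logarithm. Because $\log$ is unbounded below, controlling $\E\log m_n^{(D)}$ requires not only that $m_n^{(D)}$ concentrate near its mean but also that its lower tail cannot produce a large negative contribution; this is precisely what the concentration estimate of Proposition \ref{prop:concentration moments} supplies to Lemma \ref{lemma: concentration of log}. By contrast the von Neumann computation is comparatively clean, since the exact normalisation $\E\Tr\tilde\rho_A = 1$ removes the dangerous $\log D$-scale term before any limit is taken, and $t\log t$ is bounded on the asymptotically compact support, so only convergence in moments of the mean measure is needed.
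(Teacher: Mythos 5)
Your proposal is correct and follows essentially the same route as the paper: the exact identity $S_n(\tilde\rho_A)=F\log D+\tfrac{1}{1-n}\log m_n^{(D)}$ combined with Lemma \ref{lemma: concentration of log} and Theorem \ref{thm:limit-moments-general} for the R\'enyi part, and the eigenvalue decomposition $S(\tilde\rho_A)=F\log D\cdot\Tr\tilde\rho_A-\int t\log t\,\mathrm{d}\mu_A^{(D)}$ with $\E\Tr\tilde\rho_A=1$ and Proposition \ref{prop: proposition 4.4 Ion Collins} for the von Neumann part. The one small divergence is that in the von Neumann step you apply Proposition \ref{prop: proposition 4.4 Ion Collins} to the deterministic mean measure $\E\mu_A^{(D)}$ (completed by an atom at $0$), whose moments are exactly the $\E m_n^{(D)}$, rather than combining the in-probability weak convergence of Theorem \ref{th:weack convergence} with Proposition \ref{prop: proposition 4.4 Ion Collins} as the paper does; your variant is if anything cleaner, since it sidesteps the uniform-integrability issue implicit in passing from convergence in probability to convergence of expectations.
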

\begin{proof}
The poof of this corollary is a direct consequence of different obtained concentration results from the previous subsection and the weak convergence of $\mu^{(D)}_A$ to $\mu_{G_{A|B}}$. 

First, we shall start with the R\'enyi entropy, for that let consider: 
\begin{align*}
    F(G_{A|B}) \log D -\E S_n(\tilde{\rho}_A)=\frac{1}{1-n}\E \log\left(m_{n,A}^{(D)}\right),\quad\text{where}\quad m_n^{(D)}:=\frac{1}{D^{F(G_{A|B})}}\E\Big[\Tr\left(\left(\sigma_A\right)^n\right)\Big],
\end{align*}
and recall that $\sigma_A:=D^{F(G_{A|B})}\tilde{\rho}_A^S$ restricted on the support of $\tilde{\rho}_A:=D^{-|E_{\partial}|}\rho_A$. By using Lemma \ref{lemma: concentration of log} and in the limit $D\to\infty$ we have: 
\begin{equation*}
F(G_{A|B}) \log D -\E S_n(\tilde{\rho}_A)\xrightarrow[D \to \infty]{} \frac{1}{n-1}\log\left(\int t^n\,\mathrm{d}\mu_{G_{A|B}}\right).
\end{equation*}
For the von Neumann entropy let consider $\{\lambda_i\}\in\mathrm{spec}(\sigma_A)$ and $\{\tilde{\lambda}_i\}\in\mathrm{spec}(\tilde{\rho}_A)$, it is direct that: 
\begin{equation*}
    \E S(\tilde{\rho}_A)=-\E\sum_i\tilde{\lambda}_i\log(\tilde{\lambda}_i)=F(G_{A|B})\log(D)-\frac{1}{D^{F(G_{A|B})}}\E\sum_i\lambda_i\log(\lambda_i).
\end{equation*}
Define the function $f:\R\to\R$ as $f(t):=t\log t$, by combining Proposition \ref{prop: proposition 4.4 Ion Collins} and Theorem \ref{th:weack convergence} we have the following weak convergence as $D\to\infty$
\begin{equation*}
    F(G_{A|B})\log D- \E S(\hat \rho_A)=\frac{1}{D^{F(G_{A|B})}}\,\E\left(\sum_i\,f(\lambda_i)\right)\xrightarrow[D \to \infty]{}\int\,f(t)\,\mathrm{d}\mu_{G_{A|B}},
\end{equation*}
where the measure $\mu_{G_{A|B}}$ is defined on a compact support, ending the proof of the corollary. In the particular case if the obtained poset structure $G_{A|B}^o$ is series parallel the obtained graph dependent measure is explicitly given $\mu_{G_{A|B}}=\mu_{G_{A|B}^o}$ by Theorem \ref{Th: moments and free pro}. 
\end{proof}

\section{Conclusion}
From a given graph general graph with boundary region and bulk region, the main goal of this work is to compute the entanglement entropy, the R\'enyi and the von Neumann entropy, of a given sub-boundary region $A$ of the graph. By analysing as $D\to \infty$ the moments of a state associated to the region $A$, with the help of the (maximal) flow approach we computed the leading terms contribution to the moment. By analysing and removing all the augmenting paths starting from $\id$ and ending in $\gamma$ 
 of the network $G_{A|B}$ constructed by connecting the region $A$ to the total cycle $\gamma$ and $\id$ to the region $B$ one obtains a cluster graph $G_{A|B}^c$ by identifying all the remaining edges connected permutations. The flow approach induces a natural ordering poset structure  represented by the induced poset order $G_{A|B}^o$. The maximal flow approach allows us to deduce the moment convergence to the moment of a unique graph-dependent measure $\mu_{G_{A|B}}$. This result allows us to deduce the higher order correction terms of the Rényi and von Neumann entropy given by a graph-dependent measure $\mu_{G_{A|B}}$. Moreover, we have shown if the obtained partial order $G_{A|B}^o$ is series-parallel, and with the hep of free probability theory we can explicitly give the associated graph-dependent measure $\mu_{G_{A|B}}=\mu_{G_{A|B}^o}$ that will contribute to the higher order correction terms of each of the Rényi and von Neumann entanglement entropy.   
 
In this work, we did not assume any assumption on the minimal cuts, in the maximal flow approach by duality one can obtain different minimal cuts which may intersect in different edges. Moreover, the higher-order correction terms in the entanglement entropy can describe the quantum corrections beyond the area law behaviour of the expected Ryu-Takayanagi entanglement entropy in the context of ADS/CFT. It was previously argued in the literature that if one wants to consider higher-order correction terms in the random tensor network setting one needs to go beyond the maximally entangled state and consider general link states representing the bulk matter field. In this work the obtained higher-order quantum fluctuation of entanglement entropy with only maximally entangled states that we interpret as fluctuations of spacetime itself without any need of bulk fields represented by a generic link state.    

\bigskip

\noindent\textbf{Acknowledgments.} We would like to thank Cécilia Lancien for sharing with us preliminary notes on very similar questions. The authors were supported by the ANR projects \href{https://esquisses.math.cnrs.fr/}{ESQuisses}, grant number ANR-20-CE47-0014-01, and \href{https://www.math.univ-toulouse.fr/~gcebron/STARS.php}{STARS}, grant number ANR-20-CE40-0008, as well as by the PHC program \emph{Star} (Applications of random matrix theory and abstract harmonic analysis to quantum information theory). K.F.~acknowledges support from a \href{https://nanox-toulouse.fr/}{NanoX} project grant.

\bibliographystyle{alpha}
\bibliography{refs}

\newcommand{\etalchar}[1]{$^{#1}$}
\begin{thebibliography}{CGGPG13}

\bibitem[AGZ10]{anderson2010introduction}
G.W. Anderson, A.~Guionnet, and O.~Zeitouni.
\newblock {\em An Introduction to Random Matrices}.
\newblock Cambridge Studies in Advanced Mathematics. Cambridge University
  Press, 2010.

\bibitem[AKC22]{apel2022holographic}
Harriet Apel, Tamara Kohler, and Toby Cubitt.
\newblock Holographic duality between local hamiltonians from random tensor
  networks.
\newblock {\em Journal of High Energy Physics}, 2022(3):1--43, 2022.

\bibitem[Arm07]{armstrong2007generalized}
Drew Armstrong.
\newblock Generalized noncrossing partitions and combinatorics of coxeter
  groups, 2007.

\bibitem[BBCC11]{banica2011free}
Teodor Banica, Serban~Teodor Belinschi, Mireille Capitaine, and Benoit Collins.
\newblock Free {B}essel laws.
\newblock {\em Canadian Journal of Mathematics}, 63(1):3--37, 2011.

\bibitem[BDGR97]{bechet1997complete}
Denis Bechet, Philippe De~Groote, and Christian Retor{\'e}.
\newblock A complete axiomatisation for the inclusion of series-parallel
  partial orders.
\newblock In {\em International Conference on Rewriting Techniques and
  Applications}, pages 230--240. Springer, 1997.

\bibitem[Bil12]{billingsley2012probability}
P.~Billingsley.
\newblock {\em Probability and Measure}.
\newblock Wiley Series in Probability and Statistics. Wiley, 2012.

\bibitem[Bil13]{billingsley2013convergence}
Patrick Billingsley.
\newblock {\em Convergence of probability measures}.
\newblock John Wiley \& Sons, 2013.

\bibitem[BPSW19]{bao2019beyond}
Ning Bao, Geoffrey Penington, Jonathan Sorce, and Aron~C Wall.
\newblock Beyond toy models: distilling tensor networks in full ads/cft.
\newblock {\em Journal of High Energy Physics}, 2019(11):1--63, 2019.

\bibitem[BS10]{bai2010spectral}
Zhidong Bai and Jack~W Silverstein.
\newblock {\em Spectral analysis of large dimensional random matrices},
  volume~20.
\newblock Springer, 2010.

\bibitem[CCW22]{chen2022quantum}
Bowen Chen, Bart{\l}omiej Czech, and Zi-Zhi Wang.
\newblock Quantum information in holographic duality.
\newblock {\em Reports on Progress in Physics}, 85(4):046001, 2022.

\bibitem[CGGPG13]{collins2013matrix}
Beno{\^\i}t Collins, Carlos~E Gonz{\'a}lez-Guill{\'e}n, and David
  P{\'e}rez-Garc{\'\i}a.
\newblock Matrix product states, random matrix theory and the principle of
  maximum entropy.
\newblock {\em Communications in Mathematical Physics}, 320:663--677, 2013.

\bibitem[CLP{\etalchar{+}}22]{cheng2022random}
Newton Cheng, Cécilia Lancien, Geoff Penington, Michael Walter, and Freek
  Witteveen.
\newblock Random tensor networks with nontrivial links, 2022.

\bibitem[CLP{\etalchar{+}}24]{cheng2024random}
Newton Cheng, C{\'e}cilia Lancien, Geoff Penington, Michael Walter, and Freek
  Witteveen.
\newblock Random tensor networks with non-trivial links.
\newblock {\em Annales Henri Poincar{\'e}}, 25(4):2107--2212, 2024.

\bibitem[CN11]{collins2011gaussianization}
Beno{\^\i}t Collins and Ion Nechita.
\newblock Gaussianization and eigenvalue statistics for random quantum channels
  ({III}).
\newblock {\em The Annals of Applied Probability}, pages 1136--1179, 2011.

\bibitem[CN16]{collins2016random}
Benoit Collins and Ion Nechita.
\newblock Random matrix techniques in quantum information theory.
\newblock {\em Journal of Mathematical Physics}, 57(1), 2016.

\bibitem[CN{\.Z}10]{collins2010randoma}
Beno{\^\i}t Collins, Ion Nechita, and Karol {\.Z}yczkowski.
\newblock Random graph states, maximal flow and fuss--catalan distributions.
\newblock {\em Journal of Physics A: Mathematical and Theoretical},
  43(27):275303, 2010.

\bibitem[CN{\.Z}13]{collins2013area}
Beno{\^\i}t Collins, Ion Nechita, and Karol {\.Z}yczkowski.
\newblock Area law for random graph states.
\newblock {\em Journal of Physics A: Mathematical and Theoretical},
  46(30):305302, 2013.

\bibitem[CPGSV21]{cirac2021matrix}
J~Ignacio Cirac, David Perez-Garcia, Norbert Schuch, and Frank Verstraete.
\newblock Matrix product states and projected entangled pair states: Concepts,
  symmetries, theorems.
\newblock {\em Reviews of Modern Physics}, 93(4):045003, 2021.

\bibitem[DQW21]{dong2021holographic}
Xi~Dong, Xiao-Liang Qi, and Michael Walter.
\newblock Holographic entanglement negativity and replica symmetry breaking.
\newblock {\em Journal of High Energy Physics}, 2021(6):1--41, 2021.

\bibitem[FH17]{freedman2017bit}
Michael Freedman and Matthew Headrick.
\newblock Bit threads and holographic entanglement.
\newblock {\em Communications in Mathematical Physics}, 352:407--438, 2017.

\bibitem[FK94]{foong1994proof}
SK~Foong and S~Kanno.
\newblock Proof of page's conjecture on the average entropy of a subsystem.
\newblock {\em Physical review letters}, 72(8):1148, 1994.

\bibitem[GGJN18]{gonzalezguillen2018spectral}
Carlos~E Gonz{\'a}lez-Guill{\'e}n, Marius Junge, and Ion Nechita.
\newblock On the spectral gap of random quantum channels.
\newblock {\em arXiv preprint arXiv:1811.08847}, 2018.

\bibitem[Har13]{harrow2013church}
Aram~W. Harrow.
\newblock The church of the symmetric subspace, 2013.

\bibitem[Has17]{hastings2017asymptotics}
Matthew~B Hastings.
\newblock The asymptotics of quantum max-flow min-cut.
\newblock {\em Communications in Mathematical Physics}, 351:387--418, 2017.

\bibitem[HNQ{\etalchar{+}}16]{hayden2016holographic}
Patrick Hayden, Sepehr Nezami, Xiao-Liang Qi, Nathaniel Thomas, Michael Walter,
  and Zhao Yang.
\newblock Holographic duality from random tensor networks.
\newblock {\em Journal of High Energy Physics}, 2016(11):1--56, 2016.

\bibitem[KFNR22]{kudler2022negativity}
Jonah Kudler-Flam, Vladimir Narovlansky, and Shinsei Ryu.
\newblock Negativity spectra in random tensor networks and holography.
\newblock {\em Journal of High Energy Physics}, 2022(2):1--74, 2022.

\bibitem[KVKV11]{korte2011combinatorial}
Bernhard~H Korte, Jens Vygen, B~Korte, and J~Vygen.
\newblock {\em Combinatorial optimization}, volume~1.
\newblock Springer, 2011.

\bibitem[Lan]{lancien}
Cécilia Lancien.
\newblock {P}ersonal communication.

\bibitem[LC21]{levy2021entanglement}
Ryan Levy and Bryan~K Clark.
\newblock Entanglement entropy transitions with random tensor networks.
\newblock {\em arXiv preprint arXiv:2108.02225}, 2021.

\bibitem[LM13]{lewkowycz2013generalized}
Aitor Lewkowycz and Juan Maldacena.
\newblock Generalized gravitational entropy.
\newblock {\em Journal of High Energy Physics}, 2013(8):1--29, 2013.

\bibitem[LPG22]{lancien2022correlation}
C{\'e}cilia Lancien and David P{\'e}rez-Garc{\'\i}a.
\newblock Correlation length in random mps and peps.
\newblock In {\em Annales Henri Poincar{\'e}}, volume~23, pages 141--222.
  Springer, 2022.

\bibitem[LPWV20]{lopez2020mean}
Javier Lopez-Piqueres, Brayden Ware, and Romain Vasseur.
\newblock Mean-field entanglement transitions in random tree tensor networks.
\newblock {\em Physical Review B}, 102(6):064202, 2020.

\bibitem[LVFL21]{li2021statistical}
Yaodong Li, Romain Vasseur, Matthew Fisher, and Andreas~WW Ludwig.
\newblock Statistical mechanics model for clifford random tensor networks and
  monitored quantum circuits.
\newblock {\em arXiv preprint arXiv:2110.02988}, 2021.

\bibitem[Mal99]{maldacena1999large}
Juan Maldacena.
\newblock The large-n limit of superconformal field theories and supergravity.
\newblock {\em International journal of theoretical physics}, 38(4):1113--1133,
  1999.

\bibitem[MS17]{mingo2017free}
James~A Mingo and Roland Speicher.
\newblock {\em Free probability and random matrices}, volume~35.
\newblock Springer, 2017.

\bibitem[MVS21]{medina2021entanglement}
Raimel Medina, Romain Vasseur, and Maksym Serbyn.
\newblock Entanglement transitions from restricted boltzmann machines.
\newblock {\em Physical Review B}, 104(10):104205, 2021.

\bibitem[MWW20]{marolf2020probing}
Donald Marolf, Shannon Wang, and Zhencheng Wang.
\newblock Probing phase transitions of holographic entanglement entropy with
  fixed area states.
\newblock {\em Journal of High Energy Physics}, 2020(12):1--41, 2020.

\bibitem[Nec07]{nechita2007asymptotics}
Ion Nechita.
\newblock Asymptotics of random density matrices.
\newblock {\em Annales Henri Poincar{\'e}}, 8(8):1521--1538, 2007.

\bibitem[NRSR21]{nahum2021measurement}
Adam Nahum, Sthitadhi Roy, Brian Skinner, and Jonathan Ruhman.
\newblock Measurement and entanglement phase transitions in all-to-all quantum
  circuits, on quantum trees, and in landau-ginsburg theory.
\newblock {\em PRX Quantum}, 2(1):010352, 2021.

\bibitem[NS06]{nica2006lectures}
Alexandru Nica and Roland Speicher.
\newblock {\em Lectures on the combinatorics of free probability}, volume~13.
\newblock Cambridge University Press, 2006.

\bibitem[Pag93]{page1993average}
Don~N Page.
\newblock Average entropy of a subsystem.
\newblock {\em Physical review letters}, 71(9):1291, 1993.

\bibitem[PSSY22]{penington2022replica}
Geoff Penington, Stephen~H Shenker, Douglas Stanford, and Zhenbin Yang.
\newblock Replica wormholes and the black hole interior.
\newblock {\em Journal of High Energy Physics}, 2022(3):1--87, 2022.

\bibitem[QSY22]{qi2022holevo}
Xiao-Liang Qi, Zhou Shangnan, and Zhenbin Yang.
\newblock Holevo information and ensemble theory of gravity.
\newblock {\em Journal of High Energy Physics}, 2022(2):1--24, 2022.

\bibitem[QY18]{qi2018space}
Xiao-Liang Qi and Zhao Yang.
\newblock Space-time random tensor networks and holographic duality.
\newblock {\em arXiv preprint arXiv:1801.05289}, 2018.

\bibitem[QYY17]{qi2017holographic}
Xiao-Liang Qi, Zhao Yang, and Yi-Zhuang You.
\newblock Holographic coherent states from random tensor networks.
\newblock {\em Journal of High Energy Physics}, 2017(8):1--29, 2017.

\bibitem[RT06]{ryu2006holographic}
Shinsei Ryu and Tadashi Takayanagi.
\newblock Holographic derivation of entanglement entropy from the anti--de
  sitter space/conformal field theory correspondence.
\newblock {\em Physical review letters}, 96(18):181602, 2006.

\bibitem[SR95]{sanchez-ruiz1995simple}
Jorge S{\'a}nchez-Ruiz.
\newblock Simple proof of page's conjecture on the average entropy of a
  subsystem.
\newblock {\em Physical Review E}, 52(5):5653, 1995.

\bibitem[S{\.Z}04]{sommers2004statistical}
Hans-J{\"u}rgen Sommers and Karol {\.Z}yczkowski.
\newblock Statistical properties of random density matrices.
\newblock {\em Journal of Physics A: Mathematical and General}, 37(35):8457,
  2004.

\bibitem[VPYL19]{vasseur2019entanglement}
Romain Vasseur, Andrew~C Potter, Yi-Zhuang You, and Andreas~WW Ludwig.
\newblock Entanglement transitions from holographic random tensor networks.
\newblock {\em Physical Review B}, 100(13):134203, 2019.

\bibitem[Wig93]{wigner1993characteristic}
Eugene~P Wigner.
\newblock Characteristic vectors of bordered matrices with infinite dimensions
  i.
\newblock {\em The Collected Works of Eugene Paul Wigner: Part A: The
  Scientific Papers}, pages 524--540, 1993.

\bibitem[YHQ16]{yang2016bidirectional}
Zhao Yang, Patrick Hayden, and Xiao-Liang Qi.
\newblock Bidirectional holographic codes and sub-ads locality.
\newblock {\em Journal of High Energy Physics}, 2016(1):1--24, 2016.

\bibitem[YLFC22]{yang2022entanglement}
Zhi-Cheng Yang, Yaodong Li, Matthew~PA Fisher, and Xiao Chen.
\newblock Entanglement phase transitions in random stabilizer tensor networks.
\newblock {\em Physical Review B}, 105(10):104306, 2022.

\bibitem[YYQ18]{you2018machine}
Yi-Zhuang You, Zhao Yang, and Xiao-Liang Qi.
\newblock Machine learning spatial geometry from entanglement features.
\newblock {\em Physical Review B}, 97(4):045153, 2018.

\bibitem[{\.Z}PNC11]{zyczkowski2011generating}
Karol {\.Z}yczkowski, Karol~A Penson, Ion Nechita, and Benoit Collins.
\newblock Generating random density matrices.
\newblock {\em Journal of Mathematical Physics}, 52(6):062201, 2011.

\bibitem[{\.Z}S01]{zyczkowski2001induced}
Karol {\.Z}yczkowski and Hans-J{\"u}rgen Sommers.
\newblock Induced measures in the space of mixed quantum states.
\newblock {\em Journal of Physics A: Mathematical and General}, 34(35):7111,
  2001.

\bibitem[{\.Z}S03]{zyczkowski2003hilbert}
Karol {\.Z}yczkowski and Hans-J{\"u}rgen Sommers.
\newblock Hilbert-schmidt volume of the set of mixed quantum states.
\newblock {\em Journal of Physics A: Mathematical and General}, 36(39):10115,
  2003.

\end{thebibliography}

\appendix 
\section{Basics of the combinatorial approach to free probability theory}\label{sec: appendix}
In this section, we will recall and give the necessary material on \emph{combinatorics} and \emph{free probability} needed to understand the rest of this section. All the material that we shall introduce is standard and can be found in \cite{nica2006lectures,mingo2017free}. 

Let $\pi:=\{V_1,\cdots,V_n\}$\footnote{Do not confuse with $\pi_i$ introduced in Section \ref{sec: Optimisation with the MF} representing the different paths.} be a partition of a finite totally ordered set $S$ such that $\bigsqcup_{i\in[n]}V_i=S$. We call $\{V_i\}$ the \emph{blocks} of $\pi$. We denote by $p\sim_{\pi}q$ if $p$ and $q$ belongs to the same block of $\pi$. A partition $\pi$ of a set $S$ is called \emph{crossing} if there exists $p_1<q_1<p_2<q_2$ in $S$ such that $p_1\sim_{\pi}p_2\nsim_{\pi}q_1\sim q_2$. We called a \emph{non-crossing} partition if $\pi$ is not crossing. We note by $\NC(S)$ the non-crossing partition set of $S$. In particular if $S=\{1,\cdots n\}$, we denote the non-crossing partition by $\NC(n)$. The set of non-crossing partition plays a crucial in different areas from combinatorics \cite{armstrong2007generalized} to random matrices and free probability theory which will be our main focus. Moreover one should mention a crucial result \cite{nica2006lectures}: \emph{there exists a one-to-one correspondence of the non-crossing partition set and the set of permutations $\alpha$ in a geodesic between $\gamma$ and $\id$ i.e $|\alpha|+|\alpha^{-1}\gamma|=|\gamma|$.} Another important fact, the cardinality $|\NC(n)|=\mathrm{Cat}_n$ where:
\begin{equation}\label{eq: Catalan numbers}
    \mathrm{Cat}_n:=\frac{1}{n+1}\binom{2n}{n},
\end{equation}
are the \emph{Catalan numbers}. For more combinatorial details and properties of the Catalan numbers and the non-crossing partitions see \cite{nica2006lectures}. Assume $(\alpha_1,\cdots,\alpha_k)$ $k$ tuples of permutations in $\mathcal{S}_n$ such that  \begin{equation}\label{eq: geo equation}
|\alpha_1|+\sum_{i\in[k-1]}|\alpha_{i}^{-1}\alpha_{i+1}|+|\alpha_k^{-1}\gamma|=|\gamma|.
\end{equation}
are geodesics between $\id$ and $\gamma$. The cardinality of the set of the $k$ tuple permutations satisfying the geodesic equation \eqref{eq: geo equation} known as the \emph{Fuss-Catalan} numbers given by: 
\begin{equation*}
    \mathrm{FC}_{n,k}:=\frac{1}{nk+1}\binom{n+nk}{n}, 
\end{equation*}
generalizing the Catalan numbers for $k=1$.

Now we are ready to introduce the \emph{free probability theory} tools that will be used in this work. Moreover, one should mention the intrinsic link between free probability theory and combinatorics where we will give some examples to illustrate it. The combinatorics will allow us in the rest of this section to understand our main result.

We recall that a \emph{non-commutative probability space} is a pair $(\mathcal A,\omega)$ of a unital $C^*$-algebra $\mathcal A$ with a state \emph{state} $\omega:\mathcal{A}\to \C$ such that $\omega(1_{\mathcal{A}})=1$. One says that the elements $a\in\mathcal A$ define a \emph{noncommutative variable}. In the non-commutative probability space, one can associate the distribution law $\mu_a$ to $a\in \mathcal A$ which is defined as $\mu_a=\omega(a)$. 

Before we give some concrete examples of some non-commutative probability spaces, we shall recall the notion of \emph{freeness} that plays a crucial role in the non-commutative probability world. The notion of freeness generalizes the ``classical" independence when the algebra $\mathcal A$ is commutative. We say that for a given $n$ non-commutative random variables $\{a_i\}\in\mathcal A$ are \emph{free independent} if for any polynomials $\{p_i\}$ the following holds:  
\begin{equation}
    \omega(a_1a_2\cdots a_n)=0
\end{equation}
whenever $\omega(p_k(a_{i_k}))=0$ for $k\in[n]$ and two no adjacent indices $i_k$ and $i_{k+1}$. One can check that with the definition of free independence one has for given two free independent variables $a_1$ and $a_2$: 
\begin{equation}
   \omega((a_1-\omega(a_1))(a_2-\omega(a_2))=\omega(a_1a_2)-\omega(a_1)\omega(a_2)=0,
\end{equation}
hence, generalizing the notion of standard independence in the commutative setting where $\mathbb E(a_1a_2)=\mathbb E(a_1)\mathbb E(a_2)$ for two commutative random variables $a_1,a_2$ in a commutative probability space. 
\begin{definition}
    Let $(\mathcal{A}_N,\omega_N)$ with $N\in\mathbb N$ and $(\mathcal{A},\omega)$ non-commutative probability spaces. We say that $a_N\in\mathcal{A}_N$ converges weakly to $a\in\mathcal{A}$ as $N\to\infty$ if the following holds: 
    \begin{equation}
        \lim_{N\to\infty}\omega_N((a_N)^n)=\omega(a^n)\quad \forall n\in\mathbb N,
    \end{equation}
    where $\omega(a^n)=\int x^n \mathrm d\mu_a(x)$ are the moments of $a$.
\end{definition}
To illustrate concrete non-commutative probability spaces, we give some classical examples. The first example we shall deal with is ``classical" probability space corresponding to commutative algebra. For that let $(\Omega,\Sigma,\mu)$ where $\Omega$ a set, $\Sigma$ a $\sigma-$algebra, and $\mu$ probability measure. Define $\mathcal{A}:=L^{\infty-}(\Omega,\mu)$ where:
\begin{equation*}
L^{\infty-}(\Omega,\mu):=\bigcap_{1\leq k<\infty}L^{k}(\Omega,\mu),    
\end{equation*}
and the state $\omega$ as:\begin{equation*}
\omega(a):=\int_{\Omega}a(x)\mathrm d\mu(x),\quad a\in\mathcal{A}.
\end{equation*} 
 The tuple $(\mathcal{A},\omega)$ defines a commutative probability space. Another standard example that can be considered is the random matrices case. Let us consider the algebra $\mathcal A$ consisting of valued $k\times k$ matrices over $L^{\infty-}(\Omega,\mu)$ where \begin{equation*}
    \mathcal{A}:=\mathcal{M}_k(L^{\infty-}(\Omega,\mu)). 
\end{equation*} 
Define the state $\omega$ on $\mathcal{A}$ as: 
\begin{equation*}
    \omega(a):=\int_{\Omega} tr(a(x))\mathrm d\mu(x),\quad a\in\mathcal{A},
\end{equation*}
where $tr(\cdot)$ is the normalized trace. The space $(\mathcal{A},\omega)$ define a non-commutative probability space which the space of random matrices over $(\Omega,\Sigma,\mu)$. 

 We recall for a given non-commutative random variable $a\in\mathcal A$, the $n$th moments of $a$ are given by 
\begin{equation}
    m_{n}(\mu_a):=\int t^n \mathrm d\mu_a(t).
\end{equation}
Moreover for a given random variables $\{a_1,\cdots,a_n\}$ in $\mathcal{A}$, the moments are given by 
\begin{equation}
\omega(a_1\cdots a_n):=\sum_{\pi\in \NC(n)}\kappa_{\pi}(a_1,\cdots,a_n),
\end{equation}
where $\kappa_{\pi}$ are the \emph{free cummulants}. The equation given above is known as the \emph{moments-cummulants formula}, where the free independence can be characterized by the vanishing of mixed cumulants (see \cite[Chapter 11]{nica2006lectures}).

In free probability theory, for two free independent random variables $a_1,a_2\in \mathcal{A}$, one can define a ``convolution operation".
Mostly in this work, we only shall deal with the \emph{free multiplicative convolution}. Let $a_1$ and $a_2$, two free independent random variables in $\mathcal A$ with their respective distribution $\mu_{a_1}$ and $\mu_{a_2}$. A free multiplicative convolution or simply a \emph{free product} is defined by 
\begin{equation}
    \mu_{a_1a_2}:=\mu_{a_1}\boxtimes\mu_{a_2},
\end{equation}
where $\mu_{a_1a_2}$ represents the distribution of $a_1\,a_2$. There exists a standard and analytical way to compute the free product, like for the free additive convolution, which can be done via the \emph{S-transform}.
The \emph{S-transform} of a probability distribution $\mu_a$ is defined as: 
\begin{equation}
    S_{\mu_a}(z):=\int \frac{1}{x-z}\mathrm d\mu_a(x),
\end{equation}
which is analogous to the R-transform for the free additive convolution, as we shall describe. 
Moreover, it can also computed equivalently by \emph{the formal inverse of the moment-generating formal power series} given by:
\begin{equation}
    S_{\mu_a}(z)=\frac{1-z}{z}M_a^{-1}(z),
\end{equation}
where $M_a^{-1}(z)$ is the formal inverse of the \emph{moment-generating} formal power series given by 
\begin{equation}
    M_{\mu_a}(z):=\sum_{k=1}^{\infty}\,m_{k,a}\,z^k.
\end{equation}
With the help of the S-transform, one can compute the free product where: 
\begin{equation}
    S_{\mu_{a_1a_2}}(z)=S_{\mu_{a_1}}(z)\,S_{\mu_{a_2}}(z)=S_{\mu_{a_1\boxtimes \mu_{a_2}}}(z).
\end{equation}
In the following, we shall recall some standard distributions that are well-known in the literature and will be highly used in this work. 

    The first distribution we shall consider here is the \emph{semicircular law}, it is one of the most important distributions we encounter in free probability theory. The semicircular distribution $\mu_{\text{SC}}(x)$ is defined by the density: 
    \begin{equation}
        \mathrm{d}\mu_{\text{SC}}(x):=\frac{\sqrt{4-x^2}}{2\pi}\mathbbm 1_{x\in[-2,2]}\mathrm dx.
    \end{equation}
For an illustration, and by standard computation, one can compute the S-transform of the semi-circular distribution: 
\begin{equation*}
    S_{\mu_{\text{SC}}}(z)=\frac{-z+\sqrt{z^2-4}}{2}.
\end{equation*}
    The first link that can be made, is the moments of the semicircular distribution are intrinsically related to the Catalan numbers. One can easily check that the following equality holds:
    \begin{equation}
        \int x^{k}\mathrm d\mu_{\text{SC}}(x)=\mathrm{Cat}_k, 
    \end{equation}
    where $\mathrm{Cat}_k$ are the Catalan numbers see equation \eqref{eq: Catalan numbers}, and Chapter 2 in \cite{nica2006lectures} for more details. Moreover, one should say that the S-transform gives another important link between free probability and combinatorics by computing the moments of free product convolution of Mar\u{c}henko-Pastur distribution (see Theorem \ref{th: Mp law and FC}).
    
    Another well-known, due to Wigner \cite{wigner1993characteristic} shows the following result linking the semicircular distribution and random Gaussian matrices.
    \begin{theorem}\cite{wigner1993characteristic}
        Let $N\in\mathbb N$, let $A_N$ be an $N\times N$ selfadjoint random Gaussian matrix. Then $A_N$ converges weakly to a semicircular distribution $\mu_{SC}(x)$.
    \end{theorem}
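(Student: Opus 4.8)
The plan is to prove this via the \emph{method of moments}, computing the limit of $\omega_N(A_N^n) = \E[\operatorname{tr}(A_N^n)]$ directly and matching it against the moments of $\mu_{\text{SC}}$. Recall from the appendix that the even moments of the semicircular law are the Catalan numbers, $\int x^{2k}\,\mathrm d\mu_{\text{SC}}(x) = \mathrm{Cat}_k$, while its odd moments vanish. Since $\mu_{\text{SC}}$ is compactly supported on $[-2,2]$, it is determined by its moments, so it suffices to establish convergence of moments; by the Definition of weak convergence in the non-commutative probability space $(\mathcal A_N,\omega_N)$ recalled just above, this is exactly the assertion of the theorem. Thus the whole problem reduces to showing $\lim_{N\to\infty}\omega_N(A_N^n) = \int x^n\,\mathrm d\mu_{\text{SC}}(x)$ for every $n$.

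First I would expand the normalized trace as a sum over closed walks of length $n$ on the index set $\{1,\dots,N\}$:
\[
\E[\operatorname{tr}(A_N^n)] = \frac{1}{N}\sum_{i_1,\dots,i_n=1}^N \E\big[(A_N)_{i_1 i_2}(A_N)_{i_2 i_3}\cdots (A_N)_{i_n i_1}\big].
\]
Since the entries are jointly Gaussian with the Hermitian constraint $(A_N)_{ij} = \overline{(A_N)_{ji}}$ and covariance $\E[(A_N)_{ij}(A_N)_{kl}] = N^{-1}\delta_{il}\delta_{jk}$, I would apply the graphical Wick formula already used in \cref{sec: moment computation} to write each expectation as a sum over pair partitions $\pi$ of the $n$ tensor factors. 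Each pairing of factor $a$ with factor $b$ contributes a covariance $N^{-1}\delta_{i_a i_{b+1}}\delta_{i_{a+1} i_b}$, forcing index identifications along the walk; in particular odd $n$ admits no pairing and contributes $0$ exactly.

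The combinatorial heart of the argument is the resulting index count. A pairing $\pi$ yields a factor $N^{-n/2}$ from the covariances and $N^{c(\pi)}$ from the number $c(\pi)$ of independent index classes surviving the identifications, so that together with the $1/N$ from the trace its total contribution scales as $N^{c(\pi) - n/2 - 1}$. A genus-type estimate gives $c(\pi) \le n/2 + 1$, with equality if and only if $n$ is even and $\pi$ is a \emph{non-crossing} pair partition; every crossing pairing contributes a factor $O(N^{-1})$ and vanishes in the limit. Hence for $n = 2k$ the surviving value is the number of non-crossing pairings of $\{1,\dots,2k\}$, which is $\mathrm{Cat}_k$ by the bijection between non-crossing partitions and geodesic permutations recalled in the appendix, and this matches $\int x^n\,\mathrm d\mu_{\text{SC}}(x)$ for all $n$, concluding the proof.

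The main obstacle is the bound $c(\pi) \le n/2 + 1$ together with the identification of the extremal non-crossing pairings. The clean way to handle it is to regard each pairing as a gluing of the edges of an $n$-gon into an orientable surface and invoke Euler's formula, so that $c(\pi) - n/2 - 1 = -2\,\mathrm{genus}(\pi) \le 0$ with equality precisely for the planar (non-crossing) gluings; equivalently, one tracks the cycle structure of the permutations attached to $\pi$ and checks that any crossing strictly lowers the power of $N$. Everything else — the Wick expansion and the final matching with the Catalan numbers — is routine given the combinatorial apparatus already set up in the paper.
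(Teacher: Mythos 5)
The paper does not actually prove this statement: it is quoted as background in the appendix, with the proof deferred to the cited literature (Anderson--Guionnet--Zeitouni). So there is no internal proof to compare against; your proposal has to stand on its own, and it does. What you have written is the classical moment-method proof of Wigner's theorem, and it is essentially the argument given in the reference the paper points to: expand $\E[\operatorname{tr}(A_N^n)]$ over closed walks, apply the Wick formula to reduce to a sum over pair partitions, show via the Euler-characteristic (genus) bound that $c(\pi)\le n/2+1$ with equality exactly for non-crossing pairings, and identify the surviving count with $\mathrm{Cat}_k$, the even moments of $\mu_{\mathrm{SC}}$. Since the paper's notion of weak convergence in $(\mathcal A_N,\omega_N)$ is by definition convergence of the moments $\omega_N(A_N^n)=\E[\operatorname{tr}(A_N^n)]$, convergence in expectation is all that is required and no additional concentration step is needed. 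Two small points worth making explicit: the statement is only true under the normalization you silently impose, namely entries of variance $1/N$ (otherwise the spectrum escapes to infinity); and your covariance formula $\E[(A_N)_{ij}(A_N)_{kl}]=N^{-1}\delta_{il}\delta_{jk}$ is the complex Hermitian (GUE) case --- for a real symmetric Gaussian matrix there is an extra term $N^{-1}\delta_{ik}\delta_{jl}$ whose pairings correspond to non-orientable gluings and are likewise $O(N^{-1})$, so the conclusion is unchanged, but the bookkeeping should be acknowledged. With those caveats the proof is complete and correct.
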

We refer to \cite{anderson2010introduction} for a complete proof. As we have shown in this particular case the existence of a deep link between random Gaussian matrices, the semicircular law, and the Catalan numbers. Moreover, the semicircular law plays an important role in free probability theory as a free central limit distribution. We recall one of the main results in free probability theory, see Theorem 8.10 in \cite{nica2006lectures}. 
    \begin{theorem}
        Let $(\mathcal{A},\omega)$ a non-commutative probability space and $a_1,\cdots,a_N\in\mathcal{A}$ free independent and identically distributed self-adjoint random variables. Assuming that $\omega(a_i)=0$ for $i\in[N]$ and denote by $\sigma^2:=\omega(a_i^2)$ the variance of the random variables $a_i$. Then the following holds: 
        \begin{equation*}
            \frac{a_1+\cdots+a_N}{\sqrt{N}}\to \mu_{SC}(x). 
        \end{equation*}
        converges weakly to $\mu_{SC}(x)$ as $N\to\infty$. Where $s$ is a semicircular of variance $\sigma^2$.
    \end{theorem}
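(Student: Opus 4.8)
The plan is to prove weak convergence by establishing convergence of all moments, organizing the computation through \emph{free cumulants}. This is the natural tool here: freeness is characterized precisely by the vanishing of mixed cumulants, and free cumulants \emph{linearize} under free addition. Writing $S_N := a_1 + \cdots + a_N$ and $b_N := S_N/\sqrt N$, the moment-cumulant formula gives $\omega(b_N^n) = \sum_{\pi \in \NC(n)} \kappa_\pi(b_N, \ldots, b_N)$, so it will suffice to track the free cumulants $\kappa_n(b_N) := \kappa_n(b_N, \ldots, b_N)$ and then re-sum.

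First I would compute the cumulants of $S_N$. By multilinearity of the cumulant functionals, the identical-distribution hypothesis, and the vanishing of mixed free cumulants for free variables, only the diagonal terms survive, so that $\kappa_n(S_N) = N\,\kappa_n(a_1)$ for each $n \geq 1$. Since the $n$-th free cumulant is homogeneous of degree $n$ under scaling, $\kappa_n(\lambda a,\ldots,\lambda a) = \lambda^n \kappa_n(a,\ldots,a)$, this gives $\kappa_n(b_N) = N^{-n/2}\kappa_n(S_N) = N^{1 - n/2}\kappa_n(a_1)$. I would then examine the three regimes separately: for $n = 1$, the centering hypothesis $\omega(a_1) = 0$ gives $\kappa_1(b_N) = 0$; for $n = 2$, the exponent $1 - n/2$ vanishes and $\kappa_2(b_N) = \kappa_2(a_1) = \omega(a_1^2) = \sigma^2$, independently of $N$; and for every $n \geq 3$, the strictly negative exponent forces $\kappa_n(b_N) = N^{1-n/2}\kappa_n(a_1) \to 0$.

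Thus in the limit the only surviving free cumulant is $\kappa_2 = \sigma^2$, with all others vanishing. Feeding this back into the moment-cumulant formula, only the non-crossing \emph{pair} partitions contribute, yielding $\lim_{N\to\infty}\omega(b_N^n) = \sigma^n\,|\NC_2(n)|$, which vanishes for odd $n$ and equals $\sigma^{2k}\,\mathrm{Cat}_k$ for $n = 2k$. These are exactly the moments of the semicircular law of variance $\sigma^2$, matching the identity $\int x^{2k}\,\mathrm d\mu_{\text{SC}} = \mathrm{Cat}_k$ recorded in the appendix. Since each moment is a \emph{finite} polynomial in the cumulants indexed by $\NC(n)$, convergence of each cumulant immediately yields convergence of each moment, hence weak convergence; this continuity point is automatic and requires no separate argument. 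The main conceptual step -- where all the real content lives -- is the linearization identity $\kappa_n(S_N) = N\,\kappa_n(a_1)$, which simultaneously encodes additivity of free cumulants and the vanishing of mixed cumulants for free variables; granting the cumulant formalism of \cite{nica2006lectures}, the remainder reduces to the elementary power-counting above.
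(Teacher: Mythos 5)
Your proof is correct and is the standard free-cumulant argument for the free central limit theorem; the paper itself offers no proof of this statement, simply citing \cite{nica2006lectures} (Theorem 8.10), and your argument is essentially the one found there. The key steps --- additivity of free cumulants via the vanishing of mixed cumulants, the homogeneity $\kappa_n(\lambda a,\ldots,\lambda a)=\lambda^n\kappa_n(a,\ldots,a)$ giving the power count $N^{1-n/2}$, and the re-summation over non-crossing pair partitions yielding $\sigma^{2k}\,\mathrm{Cat}_k$ --- are all in order, and since the paper's notion of weak convergence for non-commutative random variables is by definition convergence of moments, no separate tightness or moment-determinacy argument is required.
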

With this particular distribution, we have shown how random matrices, combinatorics, and free probability theory can be related. 

In the following, we will give another example of distribution that will play an important role in this work. The second distribution we shall consider is the \emph{Mar\u{c}henko-Pastur distribution}. We shall denote by $\MP(t)$ defined by:
\begin{align*}
    \MP(t)&:=\max(1-t,0)\delta_0+\nu_t,\\
    d\nu_t(x)&:=\frac{\sqrt{4t-(x-1-t)^2}}{2\pi x}\mathbbm 1_{(x-1-t)^2\leq 4t}\mathrm dx.
\end{align*}
Recall the Mar\u{c}henko-Pastur distribution is deeply related to Wishart matrices. Let $Z$ a Whishart matrix defined $Z:=\frac{1}{m} YY^*$, where $Z\in\M_{nm}(\C)$ where the entries of $Y\in\M_{nm}(\C)$ are complex random Gaussian variables. It was shown by Mar\u{c}henko and Pastur that the empirical distribution of Whishart matrices converges to the $\MP(t)$ defined above. More precisely they have shown the following theorem:
\begin{theorem}
    Consider a Whishart matrix $Z$, and let $\mu_{n,m}$ its empirical distribution given by: 
    \begin{equation*}
        \mu_{n,m}:=\frac{1}{n}\sum_{z\in\operatorname{spec}(Z)}\delta_z,
    \end{equation*}
    Assuming that $n/m$ converges to $t$ as $n\to\infty$. Then $\mu_{n,m}$ converges (weakly) to $\MP(t)$ with $t>0$.
\end{theorem}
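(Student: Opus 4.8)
The plan is to prove this by the \textbf{method of moments}, which dovetails with the graphical Wick calculus already used in \cref{sec: moment computation}. The target law $\MP(t)$ is supported on the compact interval $[(1-\sqrt t)^2,(1+\sqrt t)^2]$ (together with a possible atom at $0$ forced by rank deficiency), so it is determined by its moments; it therefore suffices to show that for every fixed $k\ge 1$ the random moment $\int x^k\,\mathrm d\mu_{n,m}(x)=\frac1n\Tr(Z^k)$ converges in probability to the $k$-th moment of $\MP(t)$. I would split this into (i) convergence of the expectation $\E\,\frac1n\Tr(Z^k)$ to the right limit, and (ii) a variance bound $\operatorname{Var}\big(\frac1n\Tr Z^k\big)=O(n^{-2})$ ensuring concentration. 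Together with moment-determinacy of $\MP(t)$ via Carleman's condition (here immediate, since $m_k\le\mathrm{Cat}_k\max(1,t)^k$), these give weak convergence in probability, and in fact almost surely by Borel--Cantelli.

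For the expectation, I would expand $\frac1n\E\Tr(Z^k)=\frac{1}{n\,m^k}\E\Tr\big((YY^*)^k\big)$ as a sum over row-indices $i_1,\dots,i_k\in[n]$ and column-indices $j_1,\dots,j_k\in[m]$, and apply the complex Gaussian Wick formula, which pairs the $k$ factors $Y_{i_aj_a}$ with the $k$ factors $\overline{Y_{i_{a+1}j_a}}$ according to a permutation $\sigma\in\mathcal S_k$. The Kronecker constraints from $\E[Y_{ij}\overline{Y_{kl}}]=\delta_{ik}\delta_{jl}$ force the column-indices to be constant along the cycles of $\sigma$ and the row-indices constant along the cycles of $\gamma\sigma$, where $\gamma=(1\,2\,\cdots\,k)$ is the full cycle encoding the cyclic trace. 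This yields the exact identity
\begin{equation*}
\E\Tr\big((YY^*)^k\big)=\sum_{\sigma\in\mathcal S_k} m^{\#(\sigma)}\,n^{\#(\gamma\sigma)},
\end{equation*}
with $\#(\cdot)$ counting cycles.

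The crux is then the genus truncation. Writing $\gamma=(\gamma\sigma)\sigma^{-1}$ and applying the triangle inequality for the Cayley distance gives $\#(\sigma)+\#(\gamma\sigma)\le k+1$, with equality exactly when $\sigma$ is a geodesic (equivalently, non-crossing) permutation between $\id$ and $\gamma$, by the correspondence recalled in \cref{sec: appendix}. After dividing by $n\,m^k$ and using $n/m\to t$, each term scales like $t^{\#(\gamma\sigma)-1}\,m^{\#(\sigma)+\#(\gamma\sigma)-k-1}$, so only the geodesic permutations survive as $m\to\infty$, each contributing $t^{\#(\gamma\sigma)-1}$. Grouping the non-crossing partitions by their number of blocks $b=\#(\sigma)$, so that the Kreweras-complement relation gives $\#(\gamma\sigma)=k+1-b$, identifies the limit as the Narayana polynomial $\sum_{b=1}^{k} N(k,b)\,t^{\,k-b}$, which is precisely the $k$-th moment of $\MP(t)$; one checks $m_1=1$ and $m_2=1+t$ against the mean and second moment of the defining density. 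The atom at $0$ is recovered separately from the exact bound $\rank(Z)\le\min(n,m)$.

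The concentration bound (ii) follows from a second Wick computation: expanding $\E[(\Tr Z^k)^2]$ and subtracting $(\E\Tr Z^k)^2$ cancels the disconnected pairings, leaving only ``connected'' diagrams suppressed by two extra index identifications, whence $\operatorname{Var}\big(\frac1n\Tr Z^k\big)=O(n^{-2})$ (the Wishart analogue of \cite[Lemma 14]{hastings2017asymptotics}). I expect the main obstacle to be the combinatorial heart of the expectation step, namely controlling the higher-genus (crossing) pairings and matching the surviving planar count to the Narayana/Mar\u{c}henko--Pastur moments; the determinacy and concentration inputs are then routine. An alternative route, closer to the $S$-transform material of \cref{sec: appendix}, would instead establish convergence of the Cauchy transform $g_{n,m}(z)=\frac1n\Tr\big((z-Z)^{-1}\big)$ to the stable solution of the self-consistent quadratic fixed-point equation characterising the Cauchy transform of $\MP(t)$.
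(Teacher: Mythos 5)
Your proposal is a genuine proof where the paper offers none: for this statement the paper simply defers to Bai--Silverstein (Theorems 3.6 and 3.7 of \cite{bai2010spectral}), so there is no in-paper argument to compare against. Your moment-method route is the standard and correct one for Gaussian entries, and it meshes well with the machinery the paper already uses: the Wick expansion $\E\Tr\big((YY^*)^k\big)=\sum_{\sigma\in\mathcal S_k}m^{\#(\sigma)}n^{\#(\gamma\sigma)}$ is right (it checks out for $k=1,2$), the genus bound $\#(\sigma)+\#(\gamma\sigma)\le k+1$ with equality exactly on the geodesic (non-crossing) permutations is the same Cayley-distance/triangle-inequality mechanism as Proposition \ref{prop: lower bound hamiltonian}, the surviving sum $\sum_{\sigma\in\NC(k)}t^{\#(\gamma\sigma)-1}=\sum_{b}N(k,b)\,t^{k-b}$ is correct, and the Carleman/tightness and $O(n^{-2})$-variance steps are routine as you say. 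What your argument buys over the citation is a self-contained proof in the Gaussian case that makes visible why $\MP$ and the non-crossing partitions govern the single-vertex network of Section \ref{sec:examples}; what it gives up is the generality of the reference (non-Gaussian entries with finite variance, via Stieltjes-transform methods).

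One point needs fixing, though it traces back to an inconsistency in the paper's own conventions rather than to your combinatorics. You identify the limit $\sum_b N(k,b)t^{k-b}$ (equivalently $\sum_b N(k,b)t^{b-1}$ by Narayana symmetry, so $m_1=1$, $m_2=1+t$) with ``the $k$-th moment of $\MP(t)$'' and propose to check this against the defining density. But the density $\mathrm d\nu_t(x)=\frac{\sqrt{4t-(x-1-t)^2}}{2\pi x}\,\mathrm dx$ together with the atom $\max(1-t,0)\delta_0$ is the free Poisson law of rate $t$, whose moments are $\sum_b N(k,b)t^{b}$; its mean is $t$, not $1$, and its atom $\max(1-t,0)$ is incompatible with $Z=\tfrac1m YY^*$ being of full rank when $n<m$. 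Your computation actually yields the dilation by $t$ of $\MP(1/t)$ (equivalently, the limit ESD of $\tfrac1m Y^*Y$ is the paper's $\MP(t)$, and the two laws share their nonzero spectrum up to the $n\leftrightarrow m$ renormalization), so your proposed sanity check $m_1=1$, $m_2=1+t$ would fail against the stated density for $t\neq1$. The two conventions coincide at $t=1$, which is the only case the paper uses ($\MP=\MP(1)$), so nothing downstream is affected; but as written your final identification step should either adopt the transposed normalization or record the dilation relating the two laws.
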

For a proof and more detailed statement of this result, we refer to Theorem 3.6 and Theorem 3.7 from \cite{bai2010spectral}. In particular, what will an important role in this work is the $\MP$, where the distribution is:
\begin{equation}\label{eq: MP law}
\mathrm    d\MP:=\frac{1}{2\pi}\sqrt{4x^{-1}-1}\,\mathrm dx,
\end{equation}
where we have used the shorthand notation $\MP$ instead of $\MP(1)$.

As for the semicircular distribution described previously, one can relate the moments of free convolution products of $\MP$ to Fuss-Catalan numbers. We shall only give some relevant results for the Mar\u{c}henko-Pastur distribution to be as concise as possible, we refer to \cite{banica2011free} for more details and proofs.
\begin{theorem}
   Let $\MP(t)$ the Mar\u{c}henko-Pastur distribution. The S-transform is given by:
   \begin{equation*}
       S_{\MP(t)}(z)=\frac{1}{t+z}.
   \end{equation*}
    
\end{theorem}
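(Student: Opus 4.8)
The plan is to obtain the moment generating series of $\MP(t)$ in closed algebraic form, invert it, and feed the inverse into the defining formula of the $S$-transform. The key structural input is that $\MP(t)$ is the free Poisson law of rate $t$: all of its free cumulants are equal, $\kappa_n = t$ for every $n \geq 1$. One can take this as known (see \cite{nica2006lectures}) or recover it from the moment--cumulant formula of the appendix, since the moments of $\MP(t)$ are $m_n = \sum_{\pi \in \NC(n)} t^{|\pi|}$ (each block $B$ of a non-crossing partition contributing a factor $\kappa_{|B|} = t$); in particular $m_1 = t$, which already matches $S_{\MP(t)}(0) = 1/t$.

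Next I would pass to generating functions. Writing $M(z) = 1 + \sum_{k \geq 1} m_k z^k$ and $C(z) = 1 + \sum_{k \geq 1}\kappa_k z^k = \frac{1-(1-t)z}{1-z}$, the moment--cumulant relation takes the functional form $M(z) = C\big(zM(z)\big)$. Substituting $C$ and clearing denominators yields the quadratic $z M^2 - \big(1+(1-t)z\big)M + 1 = 0$ (a sanity check: the substitution $z = 1/w$, $M = wG$ recovers the classical Mar\u{c}henko--Pastur equation $wG^2 - (w+1-t)G + 1 = 0$ for the Cauchy transform). Solving the relation for $z$ gives $z = (M-1)\big/\big(M(M-(1-t))\big)$; setting $\psi := M - 1 = \sum_{k\geq1}m_k z^k$, so that $M = 1+\psi$ and $M-(1-t) = \psi + t$, this becomes
\[
 z = \frac{\psi}{(1+\psi)(\psi+t)},
\]
which exhibits the compositional inverse of the moment series as $\psi^{-1}(z) = \frac{z}{(1+z)(z+t)}$.

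Finally I would substitute into the $S$-transform, $S_{\MP(t)}(z) = \frac{1+z}{z}\,\psi^{-1}(z)$ (standard normalization), to get
\[
 S_{\MP(t)}(z) = \frac{1+z}{z}\cdot\frac{z}{(1+z)(z+t)} = \frac{1}{t+z},
\]
as claimed. The computation is short, and the only places demanding care --- which I expect to be the main obstacle --- are the bookkeeping of the constant term $m_0 = 1$ when moving between $M$ and $\psi$, the treatment of the atom $\max(1-t,0)\delta_0$ for $t<1$ (it alters $m_0$ but none of the moments $m_k$ with $k \geq 1$, hence leaves $\psi$ and the $S$-transform untouched), and fixing the correct normalization factor $\frac{1+z}{z}$ in the $S$-transform.
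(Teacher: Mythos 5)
Your computation is correct, and it supplies a proof where the paper gives none: the theorem is stated in the appendix without proof, with a pointer to \cite{banica2011free}. Your route --- identify $\MP(t)$ as the free Poisson law whose free cumulants are all equal to $t$, form the cumulant series $C(z)=\frac{1-(1-t)z}{1-z}$, invoke the functional relation $M(z)=C\bigl(zM(z)\bigr)$ to obtain the quadratic $zM^2-(1+(1-t)z)M+1=0$, solve for $z$ in terms of $\psi=M-1$, and read off the compositional inverse $\psi^{-1}(z)=\frac{z}{(1+z)(z+t)}$ --- is the standard combinatorial derivation, and every step checks out (the quadratic reproduces $m_1=t$, $m_2=t+t^2$, and for $t=1$ the Catalan moments; the Cauchy-transform sanity check is also consistent). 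Your remark about the atom is right for the reason you give: $x^k\,\delta_0$ contributes nothing for $k\geq 1$, so the atom changes neither $\psi$ nor the $S$-transform (though note it does not actually alter $m_0$ either, since the total mass is $1$ in any case). One point deserves emphasis: you use the normalization $S(z)=\frac{1+z}{z}\,\psi^{-1}(z)$, which is the standard one and the only one compatible with the stated answer, since it yields $S(0)=1/m_1=1/t$. The paper's appendix instead writes $S_{\mu}(z)=\frac{1-z}{z}M^{-1}(z)$ and elsewhere conflates the $S$-transform with the Cauchy transform; those are defects of the appendix, not of your argument, and your version is the one under which the theorem as stated is true.
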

One of the main results of \cite{banica2011free}, relates the free product convolution of $\MP(t)$ and combinatorics, in particular, we shall only give the result for $\MP$ that will be relevant for this work. 
\begin{theorem}\label{th: Mp law and FC}
   Let $\MP$ the Mar\u{c}henko-Pastur distribution. Let $\MP^{\boxtimes s}$ with $s\geq 2$, then we have
   \begin{equation}
       \int_{\R} x^n \mathrm{d}\MP^{\boxtimes s}= \mathrm{FC}_{n,s},
   \end{equation}
   where $\mathrm {FC}_{n,s}$ are the Fuss-Catalan numbers.
\end{theorem}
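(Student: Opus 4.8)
The plan is to compute the moments through the multiplicativity of the S-transform and then extract the coefficients by Lagrange inversion. First I would recall from the preceding theorem that, at $t=1$, the Mar\u{c}henko-Pastur law has S-transform $S_{\MP}(z) = \frac{1}{1+z}$. Since the S-transform linearizes the free multiplicative convolution, i.e.~$S_{\mu \boxtimes \nu}(z) = S_{\mu}(z)\,S_{\nu}(z)$, an immediate induction on $s$ gives
\[
S_{\MP^{\boxtimes s}}(z) = \left(\frac{1}{1+z}\right)^{s} = \frac{1}{(1+z)^{s}}.
\]

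Next I would pass back from the S-transform to the moment generating series. Writing $\psi(z) := \sum_{n\geq 1} m_n z^n$ with $m_n := \int_{\R} x^n\,\mathrm{d}\MP^{\boxtimes s}(x)$, the standard normalization $S_{\mu}(z) = \frac{1+z}{z}\,\psi_\mu^{-1}(z)$ relating the S-transform to the compositional inverse of the moment series yields
\[
\psi^{-1}(z) = \frac{z}{1+z}\,S_{\MP^{\boxtimes s}}(z) = \frac{z}{(1+z)^{s+1}}.
\]
Equivalently, $w := \psi(z)$ is the unique formal power series with $w(0)=0$ solving the functional equation $w = z\,(1+w)^{s+1}$. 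This is exactly the setting of Lagrange inversion with $w = z\,\phi(w)$, $\phi(w) = (1+w)^{s+1}$ and $\phi(0) = 1 \neq 0$. Applying it, I extract
\[
m_n = [z^n]\,\psi(z) = \frac{1}{n}\,[w^{n-1}]\,\phi(w)^{n} = \frac{1}{n}\,[w^{n-1}]\,(1+w)^{n(s+1)} = \frac{1}{n}\binom{n(s+1)}{n-1}.
\]

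It then remains to match this expression with $\mathrm{FC}_{n,s} = \frac{1}{ns+1}\binom{ns+n}{n}$, which is a one-line binomial identity: from $\binom{ns+n}{n-1} = \frac{n}{ns+1}\binom{ns+n}{n}$ one obtains $\frac{1}{n}\binom{n(s+1)}{n-1} = \frac{1}{ns+1}\binom{ns+n}{n} = \mathrm{FC}_{n,s}$, which is the claim (the base case $s=1$ recovering the Catalan numbers $m_n(\MP) = \mathrm{Cat}_n$).

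I do not expect a genuine conceptual obstacle here, since both inputs — the multiplicativity of $S$ and the explicit $S_{\MP}$ — are already available; the only care needed is the bookkeeping, namely fixing the normalization convention linking $S$ to $\psi^{-1}$ and applying Lagrange inversion correctly. As an alternative, purely combinatorial route, one could instead use that $m_n(\MP) = \mathrm{Cat}_n = |\NC(n)|$ together with the non-crossing-partition description of the free product: iterating the series step from Theorem \ref{Th: moments and free pro} expresses $m_n(\MP^{\boxtimes s})$ as the number of geodesic chains $\id \leq \alpha_1 \leq \cdots \leq \alpha_s \leq \gamma$ in $\mathcal{S}_n$, which is $\mathrm{FC}_{n,s}$ by the Fuss-Catalan discussion in this appendix, giving the same result without analytic inversion.
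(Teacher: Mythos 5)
Your proof is correct, and it is worth pointing out that the paper itself gives no proof of this statement: Theorem \ref{th: Mp law and FC} is quoted from \cite{banica2011free} as a known fact, so your derivation is a genuine, self-contained addition rather than a variant of an internal argument. The S-transform route is the standard analytic one: $S_{\MP}(z)=\frac{1}{1+z}$, multiplicativity gives $S_{\MP^{\boxtimes s}}(z)=(1+z)^{-s}$, the compositional inverse of the moment series is $\frac{z}{1+z}S_{\MP^{\boxtimes s}}(z)=\frac{z}{(1+z)^{s+1}}$, and Lagrange inversion applied to $w=z(1+w)^{s+1}$ gives $m_n=\frac{1}{n}\binom{n(s+1)}{n-1}=\frac{1}{ns+1}\binom{ns+n}{n}=\mathrm{FC}_{n,s}$; the binomial identity and the sanity checks ($s=1$ recovering $\mathrm{Cat}_n$) all check out. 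One point of care: the appendix states the normalization as $S_{\mu_a}(z)=\frac{1-z}{z}M_a^{-1}(z)$, which is a typo — the convention consistent with $S_{\MP(t)}(z)=\frac{1}{t+z}$ is $\frac{1+z}{z}M_a^{-1}(z)$, which is the one you (correctly) use; taken literally, the paper's formula would lead to $w=z(1-w)(1+w)^{s}$ and already fails at $m_2(\MP)$. Your alternative combinatorial route — iterating the geodesic description so that $m_n(\MP^{\boxtimes s})$ counts chains $\id\leq\alpha_1\leq\cdots\leq\alpha_s\leq\gamma$ of permutations on the geodesic, whose cardinality the appendix already identifies as $\mathrm{FC}_{n,s}$ — is closer in spirit to how the result is actually deployed in Theorem \ref{Th: moments and free pro}, and has the advantage of requiring no analytic inversion; either argument is acceptable.
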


\vspace{3cm}

\end{document}